\DeclareSymbolFont{symbolsC}{U}{txsyc}{m}{n}
\DeclareMathSymbol{\boxright}{\mathrel}{symbolsC}{128}
\theoremstyle{definition}
\newtheorem{theorem}{Theorem}[section]
\newtheorem{proposition}[theorem]{Proposition}
\newtheorem{corollary}[theorem]{Corollary}
\newtheorem{definition}[theorem]{Definition}
\newtheorem{lemma}[theorem]{Lemma}
\newtheorem{fact}[theorem]{Fact}
\newtheorem{example}[theorem]{Example}
\newtheorem{remark}[theorem]{Remark}
\definecolor{medgreen}{rgb}{0.0, 0.75, 0.0}
\newcommand{\margin}{\mathcal{M}}
\begin{document} 

\title{Characterizations of voting rules based on majority margins}

\date{}

\author{Yifeng Ding$^*$, Wesley H. Holliday$^\dagger$, and Eric Pacuit$^\ddagger$\\ 
{\small $*$ Peking University {\normalsize (\href{mailto:yf.ding@pku.edu.cn}{yf.ding@pku.edu.cn})}} \\
 {\small $\dagger$ University of California, Berkeley {\normalsize (\href{mailto:wesholliday@berkeley.edu}{wesholliday@berkeley.edu})}} \\
 {\small $\ddagger$ University of Maryland {\normalsize (\href{mailto:epacuit@umd.edu}{epacuit@umd.edu})}}}
\maketitle

\begin{abstract} In the context of voting with ranked ballots, an important class of voting rules is the class of \textit{margin-based}  rules (also called \textit{pairwise} rules). A voting rule is margin-based if whenever two elections generate the same head-to-head margins of victory or loss between candidates, the voting rule yields the same outcome in both elections. Although this is a mathematically natural invariance property to consider, whether it should be regarded as a normative axiom on voting rules is less clear. In this paper, we address this question for voting rules with any kind of output, whether a set of candidates, a ranking, a probability distribution, etc. We prove that a voting rule is margin-based if and only if it satisfies some axioms with clearer normative content. A key axiom is what we call Preferential Equality, stating that if two voters both rank a candidate $x$ immediately above a candidate $y$, then either voter switching to rank $y$ immediately above $x$ will have the same effect on the election outcome as if the other voter made the switch, so each voter's preference for $y$ over $x$ is treated equally. 
\end{abstract}

\tableofcontents

\section{Introduction}\label{Intro}

In elections in which voters submit rankings of the candidates, one of the most natural questions to ask is how one candidate $x$ performed head-to-head against another candidate $y$. Did more voters rank $x$ above~$y$ than ranked $y$ above $x$? Assuming the answer is `yes', how many more? That number is $x$'s \textit{margin of victory} over $y$ and $y$'s \textit{margin of loss} to $x$. Not only is it natural to inquire into the margins of victory or loss between candidates, but also this is the \textit{only} information about the election that is needed to determine the election outcome according to many well-known voting rules. Examples include many Condorcet voting rules,\footnote{\label{CondorcetRules}Margin-based Condorcet voting rules can be found in, e.g., \citealt{Simpson1969}, \citealt{Kramer1977}, \citealt{Tideman1987}, \citealt{Dutta1999}, \citealt{Schulze2011}, \citealt{Fernandez2018}, \citealt{HP2020}, and \citealt{HP2023b}. All so-called C1 voting rules (\citealt{Fishburn1977}) are also margin-based. For the distinction between margin-based rules and C2 rules, see below in the main text.} as well as the Borda Count, at least in its standard formulation for linear orders (see \citealt[p.~27]{Zwicker2016}). Non-examples include Plurality and Instant Runoff Voting, whose winners cannot be determined just from the margin information.

To state the point more formally, fix infinite sets $\mathcal{V}$ and $\mathcal{X}$ of \textit{voters} and \textit{candidates}, respectively. A \textit{profile} is a function $\mathbf{P}:V\to \mathcal{O}(X)$ where $V$ is a nonempty finite subset of $\mathcal{V}$, $X$ is a nonempty finite subset of $\mathcal{X}$ with $|X|\geq 2$, and $\mathcal{O}(X)$ is the set of all strict weak orders on $X$;\footnote{Recall that a \textit{strict weak order} on $X$ is a binary relation $P$ on $X$ that is asymmetric and negatively transitive (i.e., if $(a,b)\not\in P$ and $(b,c)\not\in P$, then $(a,c)\not\in P$), which implies that $P$ is transitive.} let us write $V(\mathbf{P})$ for $V$ and $X(\mathbf{P})$ for $X$.\footnote{Allowing the set of voters to vary across profiles will be essential for our arguments. By contrast, all of our arguments would go through in a setting in which every profile has the same set of candidates.} Given a profile $\mathbf{P}$ and candidates $x,y\in X(\mathbf{P})$, we define functions $\#_\mathbf{P}$ and $\mathcal{M}_\mathbf{P}$ from $X(\mathbf{P})^2$ to $\mathbb{Z}$ as follows:
\begin{eqnarray}
\#_\mathbf{P}(x,y)&=&|\{i\in V(\mathbf{P})\mid (x,y)\in \mathbf{P}(i)\}| \label{HashDef}\\
\mathcal{M}_\mathbf{P}(x,y) &= & \#_\mathbf{P}(x,y) - \#_\mathbf{P}(y,x). \label{MDef}
\end{eqnarray}

\noindent A \textit{voting rule} is a function $F$ from some set $\mathrm{dom}(F)$ of profiles to some nonempty set. Whether the voting rule outputs a set of candidates, a binary relation on the set of candidates, a probability distribution, etc., will not matter.

\begin{definition}\label{MarginBased} A voting rule $F$ is \textit{margin-based} if for any $\mathbf{P},\mathbf{Q}\in \mathrm{dom}(F)$, if $\mathcal{M}_\mathbf{P}=\mathcal{M}_\mathbf{Q}$, then $F(\mathbf{P})=F(\mathbf{Q})$.
\end{definition}
\noindent Margin-based rules are called \textit{pairwise}  in \citealt{Brandt2018a} and \textit{C1.5} in \citealt{DeDonder2000}.

Definition \ref{MarginBased} is a mathematically natural invariance property to consider, but whether it should be regarded as a normative axiom on voting rules is less clear. In this paper, we address this question by proving that a voting rule is margin-based if and only if it satisfies some axioms with clearer normative content. A key axiom is what we call Preferential Equality (Definition \ref{PrefEq}), stating that if two voters both rank a candidate $x$ immediately above a candidate $y$, then either voter switching to rank $y$ immediately above $x$ will have the same effect on the election outcome as if the other voter made the switch, so each voter's preference for $y$ over $x$ is treated equally. This implies (Lemma~\ref{PrefEq3}) that if there are some voters who rank $x$ immediately above $y$, then for any partition of those voters into groups $I$ and $J$ of equal size, the voters in $I$ switching to rank $y$ immediately above $x$ will have the same effect on the election outcome as if the voters in $J$ made the switch, so both groups' preferences for $y$ over $x$ are treated equally.

Preferential Equality is clearly satisfied by margin-based voting rules. For an example violation by a voting rule that is not margin-based, consider Instant Runoff Voting (IRV) and the following profile $\mathbf{P}$ representing a possible election between a Democrat $D$, moderate Republican $M$, and more extreme Republican $R$:
\begin{center}
\begin{tabular}{cccc}
37\% & 3\% &  32\% & 28\% \\
\hline
$D$ & $D$ & $R$ & $M$\\
$M$ & $R$ & $M$ & $R$\\
$R$ & $M$ & $D$ & $D$
\end{tabular}
\end{center}
In this election, IRV elects $R$, because $M$ is eliminated in the first round due to having the fewest first-place votes, and then $R$ beats $D$ in the second round. 
Now consider the 3\% of Democratic voters who rank $R$ immediately above $M$, as well as 3\% out of the 32\% of Republican voters who also rank $R$ immediately above $M$. In particular, consider the effect of either one of these groups switching to rank $M$ immediately above $R$:
\begin{center}
\begin{tabular}{cccc}
\textcolor{medgreen}{40\%} & \textcolor{red}{0\%} &  32\% & 28\% \\
\hline
$D$ & $D$ & $R$ & $M$\\
\textcolor{medgreen}{$M$} & \textcolor{red}{$R$} & $M$ & $R$\\
\textcolor{medgreen}{$R$} & \textcolor{red}{$M$} & $D$ & $D$
\end{tabular}\qquad\qquad \begin{tabular}{cccc}
37\% & 3\% &  \textcolor{red}{29\%} & \textcolor{medgreen}{31\%} \\
\hline
$D$ & $D$ & \textcolor{red}{$R$} & \textcolor{medgreen}{$M$}\\
$M$ & $R$ & \textcolor{red}{$M$} & \textcolor{medgreen}{$R$}\\
$R$ & $M$ & $D$ & $D$
\end{tabular}
\end{center}
In the election on the left, IRV again elects $R$, because $M$ is eliminated in the first round due to having the fewest first-place votes, and then $R$ beats $D$ in the second round. However, in the election on the right, IRV elects $M$, because $R$ is eliminated in the first round due to having the fewest first-place votes, and then $M$ beats $D$ in the second round. This is a violation of Preferential Equality: starting from $\mathbf{P}$, it is possible for 3\% of Republican voters to change the election outcome by flipping $RM$ to $MR$, but it is not possible for 3\% of Democratic voters to change the election outcome by flipping $RM$ to $MR$. Moreover, we shall see (in Section~\ref{PrefEqSection}) that there are many real political elections in which similar violations of Preferential Equality are possible for IRV.

There may be contexts in which treating voters equally does not require that a voting rule satisfy the axiom we have called `Preferential Equality'. For example, if the goal of the voting process is to maximize social utility,\footnote{For arguments against taking preference intensities into account in collective choice, see \citealt[\S~2.2]{Schwartz1986}.} and the designer of the voting rule somehow knows that for any election to which the rule will be applied and any two voters, $i$ and $j$, the utility difference for $i$ between the candidate $i$ ranks 1st and the candidate $i$ ranks 2nd is greater than the utility difference for $j$ between the candidate $j$ ranks 2nd and the candidate $j$ ranks 3rd,\footnote{This of course assumes the possibility of interpersonal comparisons of utility,  contrary to Arrow's \citeyearpar{Arrow1963} view that ``interpersonal comparison of utilities has no meaning and, in fact, that there is no meaning relevant
to welfare comparisons in the measurability of individual utility'' (p.~9).} then one would not treat as equivalent $i$'s switching $x$ and $y$ in the 1st and 2nd positions of $i$'s ranking and $j$'s switching $x$ and $y$ in the 2nd and 3rd positions of $j$'s ranking. As another example, if one knows that voters are confident about their 1st choice but then flip a coin to determine who to rank 2nd vs.~3rd, one need not treat $i$'s switch and $j$'s switch as equivalent. However, in contexts without information about systematic differences in utility gaps or levels of noise between different positions in rankings, considerations of fairness and equal respect for voters motivate Preferential Equality.

Our second key axiom is what Saari \citeyearpar{Saari2003} calls Neutral Reversal. This axiom states that adding to any profile a pair of reversed linear orders does not change the election outcome. The idea is that two voters with fully reversed preferences balance each other out, so the election outcome is unchanged. This can be motivated by viewing an election as follows: by casting a ballot, a voter makes a contribution---independent of the contributions of other voters---to how well or poorly each candidate performs in the election; reversed rankings make opposite contributions; and the contribution of two voters is the sum of the contributions of each voter individually. Admittedly, one could instead adopt a holistic or super-additive view according to which, e.g., a pair of reversed rankings favors the middle-ranked candidates in the pair over the more ``polarizing'' candidates at the top and bottom of the respective rankings.\footnote{Thanks to an anonymous referee for raising this point.} For our purposes, the point is not that there is a decisive argument for Neutral Reversal but rather that there is a view motivating Neutral Reversal that does not presuppose that a voting method should be margin-based. The motivation for Neutral Reversal above is consistent with adopting voting methods such as Positive/Negative voting (\citealt{Lapresta2010}, \citealt{Heckelman2020}), where each voter awards $1$ point to their top-ranked candidate and $-1$ to their bottom-ranked candidate, or a hybrid between IRV and the Coombs method (\citealt{Coombs1964}, \citealt{Grofman2004}) that iteratively eliminates the candidate(s) with the lowest ``score'', defined as their number of first-place votes minus their number of last-place votes. Neither of those methods is margin-based. On the other hand, it is easy to see that all margin-based voting rules satisfy Neutral Reversal, while Plurality and IRV do not.

Our first main theorem (Theorem \ref{LinChar}) is that for the domain of linear profiles, in which each voter submits a complete linear order of the candidates, a voting rule is margin-based if and only if it satisfies Preferential Equality and Neutral Reversal. If we add a weak assumption of Homogeneity (\citealt{Smith1973}) of the voting rule, we can replace Neutral Reversal with the less controversial axiom of Block Invariance (\citealt{HP2025}), according to which the output of the voting rule remains unchanged if we add exactly one copy of each linear order of the candidates (Theorem \ref{BlockInvarianceThm}). Homogeneity also allows us to characterize margin-based rules on the domain of all profiles, provided we strengthen Preferential Equality to account for breaking ties in rankings (Theorem~\ref{StrictWeakChar}).

There is a subtle but important distinction between margin-based voting rules and a larger class of what we will call \textit{head-to-head} voting rules, which are in turn closely related to what Fishburn~\citeyearpar{Fishburn1977} called the class of C2 voting rules. Given a profile $\mathbf{P}$, let 
\begin{equation}
    \mathcal{H}(\mathbf{P}) =(\#_\mathbf{P}, |V(\mathbf{P})|),\label{Hdef}
\end{equation}
where $\#_\mathbf{P}$ is the function defined in (\ref{HashDef}) and $|V(\mathbf{P})|$ is the number of voters in $\mathbf{P}$. Thus, $\mathcal{H}(\mathbf{P})$ tells us not only how many voters  prefer any $a$ to any $b$ but also the total number of voters, from which we can determine how many voters are indifferent between $a$ and $b$. 

\begin{definition}\label{H2H} A voting rule $F$ is a  \textit{head-to-head} voting rule if for any  $\mathbf{P},\mathbf{P}'\in\mathrm{dom}(F)$, if $\mathcal{H}(\mathbf{P})=\mathcal{H}(\mathbf{P}')$, then $F(\mathbf{P})=F(\mathbf{P}')$. The rule $F$ is \textit{C2} if for any  $\mathbf{P},\mathbf{P}'\in\mathrm{dom}(F)$, if $\#_\mathbf{P}=\#_{\mathbf{P}'}$, then $F(\mathbf{P})=F(\mathbf{P}')$.\footnote{Although this was Fishburn's \citeyearpar{Fishburn1977} original definition of C2 (except that, as in much of the literature, we do not exclude C1 methods from the class of C2 methods as Fishburn did), one could argue that since Fishburn's paper was in the context of profiles of linear orders, his paper did not determine a unique extension of the C2 concept to profiles allowing ties.} 
\end{definition}
\noindent C2 voting rules are trivially head-to-head, and since the margins are computed from $\#$, all margin-based rules are C2. An example of a C2 voting rule that is not margin-based is the \textit{weak Pareto rule} that given a profile $\mathbf{P}$ outputs all the candidates who are not weakly Pareto-dominated by any other candidate in $\mathbf{P}$, where candidate $x$ weakly Pareto-dominates candidate $y$ if 
some voter prefers $x$ to $y$ and no voter prefers $y$ to $x$, i.e., $\#_\mathbf{P}(x,y)>0$ and $\#_\mathbf{P}(y,x)=0$. Whether one candidate weakly Pareto-dominates another is not something we can tell just from $\mathcal{M}_\mathbf{P}$, since, e.g., $\mathcal{M}_\mathbf{P}(x,y)=2$ is consistent with not only $\#_\mathbf{P}(x,y)=2$ and $\#_\mathbf{P}(y,x)=0$ but also $\#_\mathbf{P}(x,y)=4$ and $\#_\mathbf{P}(y,x)=2$. An example of a head-to-head voting rule that is neither margin-based nor even C2 is the \textit{strict Pareto rule} that given a profile $\mathbf{P}$ outputs all the candidates who are not strictly Pareto-dominated by any other candidate in $\mathbf{P}$, where $x$ strictly Pareto-dominates $y$ if every voter prefers  $x$ to  $y$, i.e.,  $\#_\mathbf{P}(x,y)=|V(\mathbf{P})|$. Whether one candidate strictly Pareto-dominates another is not something we can tell just from $\mathcal{M}_\mathbf{P}$ or even $\#_\mathbf{P}$, since we need to know the total number of voters to determine if \textit{every} voter strictly prefers $x$ to $y$.\footnote{The strict Pareto rule falls into a class of rules that is intermediate between margin-based rules and the larger class of head-to-head rules, but incomparable with the class of C2 rules and the class of C2w rules, namely the class of rules that can be computed using just the margins and the number of voters, since $x$ strictly Pareto-dominates $y$ if and only if $\mathcal{M}_\mathbf{P}(x,y)=|V(\mathbf{P})|$. We do not know of any literature discussing this class of rules, which could perhaps be called the \textit{quasi-margin-based} rules.  The weak Pareto rule and the alternative version of the Minimax rule defined using $\#_\mathbf{P}$ in the main text below are not quasi-margin-based, though they are C2.} Of course, over the domain of linear profiles, the weak Pareto rule is equivalent to the strict Pareto rule and hence becomes C2. 

\begin{remark}\label{C2w} While we defined C2 in terms of $\#_\mathbf{P}$, which counts the number of voters who \textit{strictly} prefer $x$ to $y$, one might consider instead counting the number of voters who \textit{weakly} prefer $x$ to $y$. Where $\#^w_\mathbf{P}(x,y)=|\{i\in V(\mathbf{P})\mid (y,x)\not\in \mathbf{P}(i) \}|$, say that a voting rule $F$ is \textit{C2w} if for any  $\mathbf{P},\mathbf{P}'\in\mathrm{dom}(F)$, if $\#^w_\mathbf{P}=\#^w_{\mathbf{P}'}$, then $F(\mathbf{P})=F(\mathbf{P}')$. The notions of C2 and C2w are equivalent for voting rules defined on the domain of linear profiles, but they are not equivalent in general. All C2w voting rules are head-to-head since the $\#^w_\mathbf{P}$ function can be computed from the $\#_\mathbf{P}$ function together with $|V(\mathbf{P})|$ by the formula $\#^w_\mathbf{P}(x,y)=|V(\mathbf{P})| - \#_\mathbf{P}(y,x)$. In addition, all margin-based rules are C2w since $\mathcal{M}_\mathbf{P}(x,y) = \#^w_\mathbf{P}(x,y) - \#^w_\mathbf{P}(y,x)$. The above strict Pareto rule is C2w since $x$ strictly Pareto-dominates $y$ iff $\#^w_\mathbf{P}(y, x) = 0$. Analogously, the weak Pareto rule is C2 but not C2w, as shown by the following two (anonymized) profiles having the same $\#^w$ counts:
\begin{center}
\begin{tabular}{ccc}
$1$ & $1$  \\
 \hline
 $a$ & $a, b$ \\
 $b$ &   \\
\end{tabular}
\hspace{4em}
\begin{tabular}{ccc}
$2$ & $1$  \\
 \hline
 $a$ & $b$ \\
 $b$ & $a$ \\
\end{tabular}
\end{center}
Thus, the inclusion relations between head-to-head, C2, C2w, and margin-based rules can be summarized by the simple ``diamond'' diagram:

\begin{center}
\begin{tikzpicture} 
  \node[align=center] (hh) at (0,1.1) {head-to-head};
  \node[align=center] (c2) at (-2.2,0) {C2};
  \node[align=center] (c2w) at (2.2,0) {C2w};
  \node[align=center] (mb) at (0,-1.1) {margin-based};

  \draw[->, thick] (mb) -- (c2);
  \draw[->, thick] (mb) -- (c2w);
  \draw[->, thick] (c2) -- (hh);
  \draw[->, thick] (c2w) -- (hh);
\end{tikzpicture}
\end{center}
\end{remark}

We will also characterize when a head-to-head voting rule is margin-based, by way of characterizing C2 voting rules among head-to-head voting rules and characterizing margin-based voting rules among C2 voting rules. The normative importance of such characterizations largely has to do with questions about how to formulate common Condorcet voting rules when incomplete rankings (or ties in rankings) are allowed. Instead of measuring the strength of a majority preference for $x$ over $y$ in terms of the margin of victory of $x$ over $y$, some authors have considered measuring the strength of majority preference for $x$ over $y$ in other ways computable from $\#_\mathbf{P}$ (see \citealt[\S~2.1]{Schulze2011}). For example, although the Minimax voting rule (\citealt{Simpson1969}, \citealt{Kramer1977}) is now usually defined so as to output for a profile $\mathbf{P}$ the candidates $x$ who minimize the quantity $\mathrm{max}\{\mathcal{M}_\mathbf{P}(y,x)\mid y\in X(\mathbf{P})\}$, some have considered an alternative version that outputs the candidates $x$ who minimize $\mathrm{max}\{\#_\mathbf{P}(y,x)\mid y\in X(\mathbf{P}),\,\#_\mathbf{P}(y,x)> \#_\mathbf{P}(x,y)\}$. The latter ``winning votes'' version is C2 but not margin-based (or C2w). To see how these definitions can output different sets of candidates, consider the profile at the top of Figure~\ref{MarginsVsWinningVotes}; below the profile we show both the \textit{margin graph} of the profile, where an edge from candidate $x$ to candidate $y$ of weight $k$ indicates that $\mathcal{M}_\mathbf{P}(x,y)=k$, and the \textit{winning votes graph} of the profile, where here an edge from candidate $x$ to $y$ of weight $k$ indicates that $\#_\mathbf{P}(x,y)=k > \#_\mathbf{P}(y,x)$. The standard version of Minimax selects $c$ as the winner, whereas the winning votes version selects $a$ as the winner. 

Moreover, these distinctions are not merely academic. Examples  from real political elections with ranked ballots are shown in Figure~\ref{Govan}, where the standard version of Minimax selects Dornan, while the winning votes version selects Flanagan, and Figure~\ref{Minneapolis}, where the standard version selects Arab, while the winning votes version selects Worlobach; in each case, we only show candidates in the \textit{Smith set} (\citealt{Smith1973}), which is the smallest set of candidates such that every candidate in the set has a positive margin against every candidate outside the set. Table~\ref{FrequencyDifferentMinimaxWinners} shows the frequencies with which the two versions of Minimax disagree in four election databases, conditional on there being no Condorcet winner (both versions of Minimax elect the Condorcet winner whenever one exists).

\begin{figure}[h]
\begin{center}
\fbox{
\begin{minipage}{5in}

\begin{center}

\begin{tabular}{ccc}
$3$  & $4$ & $2$  \\
 \hline
 $b$     & $c$ & $a$ \\
 $a$ $c$ & $a$ & $b$ \\
         & $b$ & $c$  

\end{tabular}
\end{center}

\begin{center}
    \begin{tikzpicture}

\node at (6,-1.5) (A) {\textcolor{black}{$a$}}; 
\node at (8,.5) (B) {\textcolor{black}{$b$}}; 
\node at (10,-1.5) (C) {\textcolor{medgreen}{$\boldsymbol{c}$}}; 

\path[->,draw,thick] (A) to node[fill=white] {\textcolor{black}{3}} (B);
\path[->,draw,thick] (B) to node[fill=white] {\textcolor{black}{1}} (C);
\path[->,draw,thick] (C) to node[fill=white] {2} (A);

\end{tikzpicture}\qquad\quad \begin{tikzpicture}

\node at (6,-1.5) (A) {\textcolor{medgreen}{$\boldsymbol{a}$}}; 
\node at (8,.5) (B) {\textcolor{black}{$b$}}; 
\node at (10,-1.5) (C) {\textcolor{black}{$c$}}; 

\path[->,draw,thick] (A) to node[fill=white] {\textcolor{black}{6}} (B);
\path[->,draw,thick] (B) to node[fill=white] {\textcolor{black}{5}} (C);
\path[->,draw,thick] (C) to node[fill=white] {4} (A);

\end{tikzpicture}
\end{center} 
\end{minipage}
}
\end{center}
\caption{Above: a profile with ties (three voters rank $a$ and $c$ as tied). Below, left: the margin graph of the profile. Below, right: the winning votes graph of the profile.}\label{MarginsVsWinningVotes}
\end{figure}
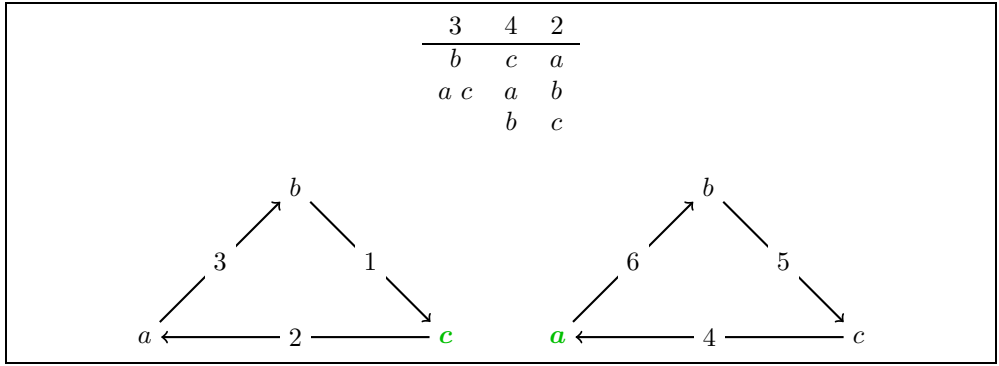

\begin{figure}[h]
\begin{center}
\fbox{
\begin{minipage}{5in}
\begin{center}

\begin{tabular}{ccccccccc}
$2521$ & $1645$ & $1051$ & $768$ & $681$ & $486$ & $381$ & $190$ & $180$\\
\hline
$h$ & $d$ & $f$ & $f$ & $d$ & $h$ & $h$ & $f$ & $d$\\
$d\ f$ & $f$ & $d$ & $d\ h$ & $f\ h$ & $d$ & $f$ & $h$ & $h$\\
  & $h$ & $h$ &   &   & $f$ & $d$ & $d$ & $f$\\
\end{tabular}
\vspace{.3in}

\begin{tikzpicture}

\node at (6,-1.5) (D) {\textcolor{medgreen}{\textbf{Dornan}}}; 
\node at (8,.5) (F) {\textcolor{black}{Flanagan}}; 
\node at (10,-1.5) (H) {\textcolor{black}{Hunter}}; 

\path[->,draw,thick] (D) to node[fill=white] {\textcolor{black}{$602$}} (F);
\path[->,draw,thick] (F) to node[fill=white] {\textcolor{black}{$86$}} (H);
\path[->,draw,thick] (H) to node[fill=white] {$21$} (D);

\end{tikzpicture}\qquad\quad \begin{tikzpicture}

\node at (6,-1.5) (D) {\textcolor{black}{Dornan}}; 
\node at (8,.5) (F) {\textcolor{medgreen}{\textbf{Flanagan}}}; 
\node at (10,-1.5) (H) {\textcolor{black}{Hunter}}; 

\path[->,draw,thick] (D) to node[fill=white] {\textcolor{black}{2992}} (F);
\path[->,draw,thick] (F) to node[fill=white] {\textcolor{black}{3654}} (H);
\path[->,draw,thick] (H) to node[fill=white] {$3578$} (D);

\end{tikzpicture}
\end{center}
\end{minipage}
}
\end{center}
\caption{Above: rankings in the 2007
Glasgow City Council election for Ward 5 (Govan), restricted to candidates in the Smith set. Below, left: the margin graph of the profile. Below, right: the winning votes graph of the profile.}\label{Govan}
\end{figure}
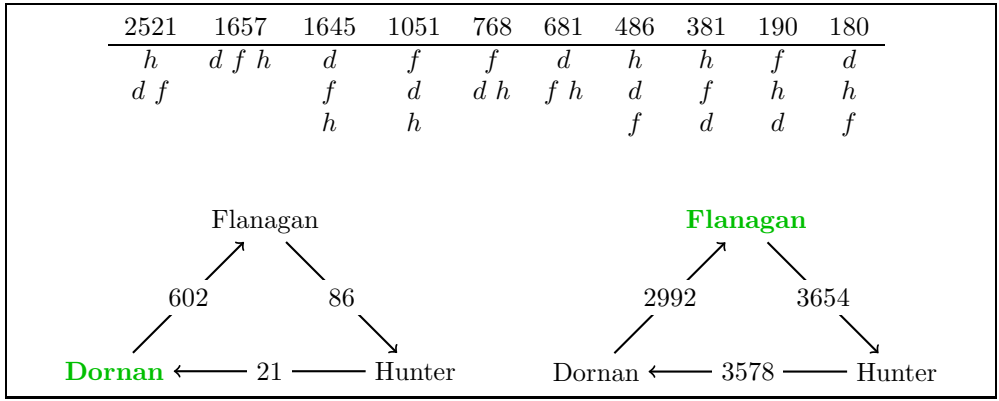

\begin{figure}[h]

\begin{center}
\fbox{
\begin{minipage}{5.75in}
\begin{center}

\begin{tabular}{ccccccccc}
$1572$ & $1299$ & $1177$ & $1086$ & $908$ & $822$ & $801$ & $758$ & $492$\\
\hline
$a$ & $w$ & $g$ & $w$ & $a$ & $g$ & $g$ & $a$ & $w$\\
$g\ w$ & $g$ & $w$ & $a$ & $g$ & $a\ w$ & $a$ & $w$ & $a\ g$\\
  & $a$ & $a$ & $g$ & $w$ &   & $w$ & $g$ &  \\
\end{tabular}

\vspace{.3in}

\begin{tikzpicture}
\node[minimum width=0.25in] at (6, -1.5) (a) {\textcolor{medgreen}{\textbf{Arab}}}; 
\node[minimum width=0.25in] at (10,-1.5) (w) {Worlobah}; 
\node[minimum width=0.25in] at (8, .5) (g) {Gordon};
\path[->,draw,thick] (a) to node[fill=white] {$225$} (g);
\path[->,draw,thick] (g) to node[fill=white] {$73$} (w);
\path[->,draw,thick] (w) to node[fill=white] {$15$} (a);
\end{tikzpicture}\qquad\quad \begin{tikzpicture}
\node[minimum width=0.25in] at (6, -1.5) (a) {Arab}; 
\node[minimum width=0.25in] at (10,-1.5) (w) {\textcolor{medgreen}{\textbf{Worlobah}}}; 
\node[minimum width=0.25in] at (8, .5) (g) {Gordon};
\path[->,draw,thick] (a) to node[fill=white] {$4324$} (g);
\path[->,draw,thick] (g) to node[fill=white] {$3708$} (w);
\path[->,draw,thick] (w) to node[fill=white] {$4054$} (a);

\end{tikzpicture}
\end{center}
\end{minipage}
}
\end{center}
\caption{Above: rankings in the 2021 Minneapolis City Council Ward 2 election, restricted to candidates in the Smith set. Below, left: the margin graph of the profile. Below, right: the winning votes graph of the profile.}\label{Minneapolis}
\end{figure}
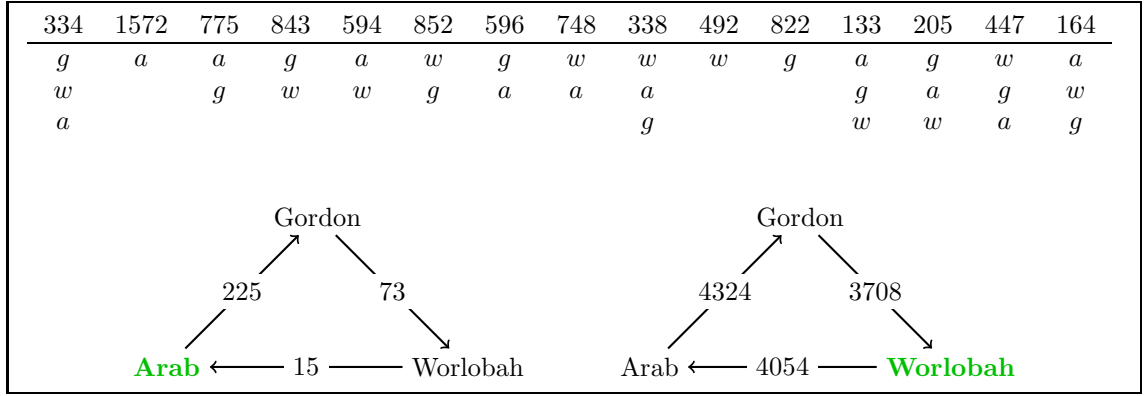

\begin{table}[H]
\begin{center}
\footnotesize
\begin{tabular}{lrrrrr}
\toprule
Dataset  & \# Profiles & \# Profiles & Avg. \#  & Avg. \# & Frequency of Different Winners \\
&&  with No CW &of Voters& of Cands. & Conditional on No CW\\
\midrule
Stable Voting Dataset & 974 &  205 & 8.80 & 4.77 &  0.03  \\
CIVS Dataset & 22,477 & 6,050 &  26.74 & 9.77 &  0.09  \\ 
PrefLib Politics (soi) & 364  & 1  & 44,577.96& 8.85 &  1.0  \\
\citealt{Otis2022} dataset & 458 & 2 & 74,840.35 & 6.10 &  0.5  \\
\bottomrule
\end{tabular}
\end{center}
\caption{Frequencies with which Minimax based on margins and Minimax based on ``winning votes'' select different winners, conditional on there not being a Condorcet winner (CW), in the elections from the Stable Voting Dataset 1-28-2026 (\citealt{HP2026data}), CIVS Dataset 2024-12-15 (\citealt{CIVS2024,Myers2024}), PrefLib Politics (soi files as of 12-20-2024) (\citealt{MatteiWalsh2013}), and \citealt{Otis2022}.  Note that we restricted to elections that are not ``test'' elections in the  CIVS Dataset.}\label{FrequencyDifferentMinimaxWinners}
\end{table}

All of this raises the question: what normative arguments are there for using the margin-based versions of voting rules? In the case of Minimax, previous work has identified reasons for favoring the margin-based version over the winning votes version: as Holliday and Pacuit \citeyearpar[Remark~3.11]{HP2025} observe, the former satisfies and the latter violates the axiom of Positive Involvement (stating that if a social choice correspondence selects a candidate $x$ in an initial profile $\mathbf{P}$, and $\mathbf{P}'$ is obtained from $\mathbf{P}$ by adding one voter who ranks $x$ uniquely first, then $x$ should still be selected in $\mathbf{P}'$). But what in general can be said about using margin-based rules as opposed to non-margin-based but still head-to-head rules? In what follows, we will provide answers to this question. 

The rest of the paper is organized as follows. In Section~\ref{LinProfs}, we characterize margin-based voting rules on the domain of linear profiles. In Section~\ref{StrictWeakProfs}, we characterize which homogeneous voting rules  on the domain of all profiles are margin-based. Finally, in Section~\ref{HeadToHead}, we characterize which head-to-head voting rules on the domain of all profiles are margin-based. We conclude in Section~\ref{Conclusion} with suggestions for future work. The Appendix gives analogues of theorems from Sections \ref{StrictWeakProfs}-\ref{HeadToHead} when we allow noncomparability, as distinguished from indifference, in voters' preference relations. Table~\ref{ThmSummary} summarizes our main results using terminology to be defined.

\begin{table}[h]
\begin{center}
\begin{tabular}{cccc}
\textbf{result} & \textbf{domain} & \textbf{extra assumption} & \textbf{axioms}\\
\hline
\multirow{2}{*}{Theorem \ref{LinChar}} & \multirow{2}{*}{linear profiles} & \multirow{2}{*}{none} & Preferential Equality \\
 & &  & Neutral Reversal \\
 \hline
 \multirow{2}{*}{Theorem \ref{BlockInvarianceThm}} & \multirow{2}{*}{linear profiles} & \multirow{2}{*}{Homogeneity} & Preferential Equality \\
 &&& Block Invariance \\
 \hline 
\multirow{3}{*}{Theorem \ref{StrictWeakChar}.\ref{StrictWeakChar2}} & \multirow{3}{*}{\textsf{LOBI}} & \multirow{3}{*}{Homogeneity} & Preferential Equality\\
&&& Tiebreaking Compensation \\
&&& Neutral Indifference \\
\hline
\multirow{2}{*}{Theorem \ref{StrictWeakChar}.\ref{StrictWeakChar1}} & \multirow{2}{*}{all profiles} & \multirow{2}{*}{Homogeneity} & Tiebreaking Compensation \\
&&& Neutral Indifference \\
\hline
\multirow{2}{*}{Theorem \ref{ModuloH2H}} & \multirow{2}{*}{all profiles} & \multirow{2}{*}{head-to-head} & Nonlinear Neutral Reversal \\
&&& Neutral Indifference
\end{tabular}
\end{center}
\caption{Summary of axiomatic characterizations of margin-based voting rules. Over the indicated domain, a voting method that satisfies the extra assumption is margin-based if and only if it satisfies all of the indicated axioms. Two additional theorems appear in the Appendix.}\label{ThmSummary}
\end{table}

A repository with code to generate some of the figures and tables in the paper, as well as to verify some facts, is available at \href{https://github.com/epacuit/majority_margins}{https://github.com/epacuit/majority\_margins}.

\section{Characterization for linear profiles}\label{LinProfs}

In this section, we begin with voting rules on the domain of \textit{linear} profiles in which each voter submits a complete linear order of the candidates. 

\subsection{Preferential Equality}\label{PrefEqSection}

Let us more formally define the axiom of Preferential Equality introduced in Section~\ref{Intro}. In the following, when we say that a voter $i$ ranks \textit{$x$ immediately above $y$}, we mean that $x\mathbf{P}_i y$ and for all ${z\in X(\mathbf{P})\setminus\{x,y\}}$, either $z\mathbf{P}_ix$ or $y\mathbf{P}_iz$.

\begin{definition}\label{PrefEq} A voting rule $F$  satisfies Preferential Equality if for any profile $\mathbf{P}\in\mathrm{dom}(F)$, candidates ${x,y\in X(\mathbf{P})}$, and two voters $i,j\in V(\mathbf{P})$ who both rank $x$ immediately above $y$, if $\mathbf{P}^i$ is obtained from $\mathbf{P}$ by $i$ switching to ranking $y$ immediately above $x$, and similarly $\mathbf{P}^j$ is obtained from $\mathbf{P}$ by $j$ switching to ranking $y$ immediately above $x$, then $F(\mathbf{P}^i)=F(\mathbf{P}^j)$ (or else $\mathbf{P}^i,\mathbf{P}^j$ do not both belong to $\mathrm{dom}(F)$).\end{definition}

Say that a domain of profiles is \textit{rich} if it is closed under the operation of flipping two adjacent candidates in a voter's ranking. Then we have the following convenient reformulation of Preferential Equality.

\begin{definition} A voting rule $F$ satisfies Preferential Compensation if for any ${\mathbf{P},\mathbf{P}'\in\mathrm{dom}(F)}$ and $x,y\in X(\mathbf{P})$, if $\mathbf{P}'$ is obtained from $\mathbf{P}$ by one voter who had $x$ immediately above $y$ in $\mathbf{P}$ switching to have $y$ immediately above $x$ in $\mathbf{P}'$, and another voter who had $y$ immediately above $x$ in $\mathbf{P}$ switching to have $x$ immediately above $y$ in $\mathbf{P}'$, then $F(\mathbf{P})=F(\mathbf{P}')$.
\end{definition}

\begin{lemma}\label{PrefEq2} If a voting rule $F$ satisfies Preferential Compensation, then $F$ satisfies Preferential Equality. Moreover, if $F$ satisfies Preferential Equality and its domain is rich, then $F$ satisfies Preferential Compensation.
\end{lemma}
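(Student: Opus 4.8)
The plan is to prove the two implications separately, in each case exhibiting a single application of one axiom that directly connects the two profiles whose outputs the other axiom requires to agree. The whole lemma is essentially bookkeeping about \emph{which} pair of profiles each axiom constrains, so I would set up careful notation for the configurations of the two relevant voters on the pair $\{x,y\}$ and then match them up.

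For the first implication, suppose $F$ satisfies Preferential Compensation, and let $\mathbf{P}\in\mathrm{dom}(F)$ have voters $i,j$ who both rank $x$ immediately above $y$. If $\mathbf{P}^i,\mathbf{P}^j\notin\mathrm{dom}(F)$ there is nothing to prove, so I would assume both lie in $\mathrm{dom}(F)$. The key observation is that $\mathbf{P}^j$ is obtained from $\mathbf{P}^i$ by exactly one Preferential Compensation move: in $\mathbf{P}^i$ voter $i$ has $y$ immediately above $x$ while voter $j$ has $x$ immediately above $y$, and in $\mathbf{P}^j$ these are reversed with all other voters unchanged. Reading the transition $\mathbf{P}^i\to\mathbf{P}^j$, voter $j$ (who had $x$ immediately above $y$) switches to $y$ immediately above $x$, and voter $i$ (who had $y$ immediately above $x$) switches to $x$ immediately above $y$ -- precisely the compensation pattern. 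Preferential Compensation then yields $F(\mathbf{P}^i)=F(\mathbf{P}^j)$ directly.

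For the second implication, suppose $F$ satisfies Preferential Equality and $\mathrm{dom}(F)$ is rich, and let $\mathbf{P},\mathbf{P}'\in\mathrm{dom}(F)$ witness an instance of Preferential Compensation, say voter $i$ flips from $x$ immediately above $y$ to $y$ immediately above $x$, and voter $j$ flips from $y$ immediately above $x$ to $x$ immediately above $y$, with all other voters fixed. Here the two voters disagree on $\{x,y\}$, so Preferential Equality does not apply to $\mathbf{P}$ itself; the hard part is to manufacture a bridging profile on which the two voters \emph{agree}. I would take $\mathbf{R}$ to be $\mathbf{P}$ with voter $j$'s adjacent pair $y,x$ flipped, so that in $\mathbf{R}$ both $i$ and $j$ rank $x$ immediately above $y$. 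Since $\mathbf{R}$ differs from $\mathbf{P}$ by a single adjacent flip, richness guarantees $\mathbf{R}\in\mathrm{dom}(F)$ -- this is the one and only place richness is used. Applying Preferential Equality to $\mathbf{R}$ with the voters $i,j$, flipping $i$ produces $\mathbf{R}^i=\mathbf{P}'$ and flipping $j$ produces $\mathbf{R}^j=\mathbf{P}$; as both $\mathbf{P},\mathbf{P}'\in\mathrm{dom}(F)$, Preferential Equality gives $F(\mathbf{P}')=F(\mathbf{P})$, as required.

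I expect no genuine obstacle beyond the domain bookkeeping. The only point that needs care is that Preferential Equality is stated conditionally on its two output profiles lying in $\mathrm{dom}(F)$: in the first implication this forces me to treat the both-in-domain case (the other case being vacuous), and in the second implication it is exactly what obliges me to check $\mathbf{R}\in\mathrm{dom}(F)$. The asymmetry in the hypotheses -- richness needed only for the second direction -- is explained by this: Preferential Compensation constrains a pair of profiles both already assumed to be in $\mathrm{dom}(F)$, whereas recovering it from Preferential Equality requires the common ``both agree'' profile $\mathbf{R}$, whose domain membership is not otherwise guaranteed.
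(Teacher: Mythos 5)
Your proof is correct and follows essentially the same route as the paper's: the first direction is the same direct matching of $\mathbf{P}^i,\mathbf{P}^j$ to the Preferential Compensation pattern, and your bridging profile $\mathbf{R}$ in the second direction is exactly the paper's $\mathbf{Q}$, with richness invoked at the same single point and the identities $\mathbf{R}^i=\mathbf{P}'$, $\mathbf{R}^j=\mathbf{P}$ verified as in the paper. Your closing remarks on the domain bookkeeping and the asymmetric role of richness are accurate and, if anything, slightly more explicit than the paper's own exposition.
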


\begin{proof} Assume $F$ satisfies Preferential Compensation. To show that $F$ satisfies Preferential Equality, let $\mathbf{P}$, $\mathbf{P}^i$, and $\mathbf{P}^j$ be as in Definition~\ref{PrefEq}, and assume $\mathbf{P}^i,\mathbf{P}^j\in\mathrm{dom}(F)$. Then $\mathbf{P}^i$ and $\mathbf{P}^j$ are related to each other exactly as $\mathbf{P}$ and $\mathbf{P}'$ are related in Preferential Compensation, so $F(\mathbf{P}^i)=F(\mathbf{P}^j)$. 

Now assume $F$ satisfies Preferential Equality and has a rich domain. To show that $F$ satisfies Preferential Compensation, suppose $\mathbf{P}, \mathbf{P}' \in \mathrm{dom}(F)$ are such that $\mathbf{P}'$ is obtained from $\mathbf{P}$ by voter $i$ who had $x$ immediately above $y$ in $\mathbf{P}$ switching to have $y$ immediately above $x$ in $\mathbf{P}'$, and voter $j$ who had $y$ immediately above $x$ in $\mathbf{P}$ switching to have $x$ immediately above $y$ in $\mathbf{P}'$. Now let $\mathbf{Q}$ be obtained from $\mathbf{P}$ by only voter $j$ switching to have $x$ immediately above $y$. Hence in $\mathbf{Q}$, both $i$ and $j$ have $x$ immediately above $y$. As $\mathrm{dom}(F)$ is rich, $\mathbf{Q} \in \mathrm{dom}(F)$. Now observe that in the notation of Definition~\ref{PrefEq}, $\mathbf{Q}^i=\mathbf{P}'$ and $\mathbf{Q}^j=\mathbf{P}$, so Preferential Equality implies $F(\mathbf{P}')=F(\mathbf{P})$, as desired.\end{proof}

Another noteworthy fact is that Preferential Equality for individual voters is equivalent to Preferential Equality for coalitions of voters, as mentioned in Section~\ref{Intro}.

\begin{lemma}\label{PrefEq3} A voting rule $F$ on a rich domain of profiles satisfies Preferential Equality if and only if for any profile $\mathbf{P}$, $x,y\in X(\mathbf{P})$,  and $n\in \mathbb{N}_{>0}$, if there are $2n$ voters who rank $x$ immediately above $y$, then for any partition of those voters into two groups $I$ and $J$ of equal size, where $\mathbf{P}^I$ is obtained from $\mathbf{P}$ by all voters  in $I$ switching to ranking $y$ immediately above $x$, and $\mathbf{P}^J$ is obtained from $\mathbf{P}$ by all voters in $J$ switching to ranking $y$ immediately above $x$, we have  $F(\mathbf{P}^I)=F(\mathbf{P}^J)$.
\end{lemma}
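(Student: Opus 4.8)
The plan is to prove both directions, exploiting Lemma~\ref{PrefEq2}: since the domain is rich, Preferential Equality is equivalent to Preferential Compensation, and the latter is the more convenient formulation for this argument. Both conditions in the biconditional are stated over a rich domain, so I may freely assume throughout that any profile obtained from one in $\mathrm{dom}(F)$ by finitely many flips of adjacent candidates is again in $\mathrm{dom}(F)$.

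For the direction from the coalition condition to Preferential Equality, I would simply specialize to $n=1$. Given two voters $i,j$ who rank $x$ immediately above $y$, take them as the $2n=2$ voters and partition them as $I=\{i\}$ and $J=\{j\}$; then $\mathbf{P}^I=\mathbf{P}^i$ and $\mathbf{P}^J=\mathbf{P}^j$, so the coalition condition yields $F(\mathbf{P}^i)=F(\mathbf{P}^j)$. Since the domain is rich, $\mathbf{P}^i,\mathbf{P}^j\in\mathrm{dom}(F)$, so the escape clause in Definition~\ref{PrefEq} never applies, and $F(\mathbf{P}^i)=F(\mathbf{P}^j)$ is exactly what Preferential Equality requires.

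For the converse, assume $F$ satisfies Preferential Equality, hence Preferential Compensation by Lemma~\ref{PrefEq2}. Fix $\mathbf{P}$, the $2n$ voters ranking $x$ immediately above $y$, and an equal-size partition into $I=\{i_1,\dots,i_n\}$ and $J=\{j_1,\dots,j_n\}$. The idea is to transform $\mathbf{P}^I$ into $\mathbf{P}^J$ along a chain of single Preferential Compensation moves, one per index $k$. In $\mathbf{P}^I$ every voter in $I$ has $y$ immediately above $x$ while every voter in $J$ still has $x$ immediately above $y$; in $\mathbf{P}^J$ the roles are exactly reversed. So I would define a sequence $\mathbf{Q}_0=\mathbf{P}^I,\mathbf{Q}_1,\dots,\mathbf{Q}_n=\mathbf{P}^J$, where $\mathbf{Q}_k$ is obtained from $\mathbf{Q}_{k-1}$ by having $i_k$ switch from $y$-above-$x$ back to $x$-above-$y$ while simultaneously having $j_k$ switch from $x$-above-$y$ to $y$-above-$x$.

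The key checks are that each step is a legitimate Preferential Compensation move and that every $\mathbf{Q}_k$ lies in $\mathrm{dom}(F)$. In $\mathbf{Q}_{k-1}$ the voters $i_1,\dots,i_{k-1}$ and $j_1,\dots,j_{k-1}$ have already been processed, but $i_k$ still has $y$ immediately above $x$ and $j_k$ still has $x$ immediately above $y$; thus $x$ and $y$ remain adjacent in both rankings in the required orientations, so Preferential Compensation applies and gives $F(\mathbf{Q}_{k-1})=F(\mathbf{Q}_k)$. Each $\mathbf{Q}_k$ is reached from $\mathbf{P}\in\mathrm{dom}(F)$ by finitely many flips of adjacent candidates, so richness guarantees $\mathbf{Q}_k\in\mathrm{dom}(F)$. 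Chaining the equalities along $k=1,\dots,n$ yields $F(\mathbf{P}^I)=F(\mathbf{P}^J)$, as desired. I do not expect a genuine obstacle here: the argument is essentially a telescoping (or an induction on $n$), and the only point requiring care is the bookkeeping—confirming that at the $k$-th step the two relevant voters still have $x,y$ adjacent in the correct orientation, so that Preferential Compensation is genuinely applicable at every stage.
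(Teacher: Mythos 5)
Your proposal is correct and follows essentially the same route as the paper's own proof: the right-to-left direction is the $n=1$ specialization, and the left-to-right direction is precisely the paper's ``$n$ applications of Preferential Compensation'' obtained via Lemma~\ref{PrefEq2}. You merely make explicit what the paper leaves terse---the chain $\mathbf{Q}_0=\mathbf{P}^I,\dots,\mathbf{Q}_n=\mathbf{P}^J$, the check that $x,y$ remain adjacent in the correct orientation at each step, and the richness argument placing every intermediate profile in $\mathrm{dom}(F)$---all of which is sound bookkeeping rather than a different idea.
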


\begin{proof} From right to left, the $n=1$ case is simply Preferential Equality as formulated in Definition~\ref{PrefEq}. From left to right, by $n$ applications of Preferential Compensation as in Lemma \ref{PrefEq2}, we transform $\mathbf{P}^I$ into $\mathbf{P}^J$, yielding $F(\mathbf{P}^I)=F(\mathbf{P}^J)$.
\end{proof}

Recall from Section \ref{Intro} that Instant Runoff Voting (IRV) violates Preferential Equality, at least hypothetically. In fact, this is far from hypothetical. Table~\ref{IRV_Violations} shows lower bounds\footnote{The reason these frequencies are only lower bounds is that we did not exhaustively search for coalitions $I$ and $J$ as described in the text but rather used a heuristic to find such coalitions $I$ and $J$.} on the frequencies in four election databases of profiles $\mathbf{P}$ that witness a violation by IRV of the coalitional version of Preferential Equality in the following sense: there exist candidates $x$ and $y$ and equally sized coalitions $I$ and $J$ of voters ranking $x$ immediately above $y$ such that \textit{if}, \textit{counterfactually}, the voters in $I$ were to switch to ranking $y$ immediately above $x$, this would have a different effect than if instead the voters in $J$ were to make exactly the same switch. Table~\ref{IRV_Violations} also gives lower bounds on the frequencies of profiles $\mathbf{P}$ that witness a violation by IRV of the obvious coalitional version of Preferential Compensation, which is equivalent to Preferential Compensation: there exists candidates $x$ and $y$ and equally sized coalitions $I$ and $J$ of voters ranking $x$ immediately above $y$ and $y$ immediately above $x$, respectively, such that each coalition flipping their ranking of $\{x,y\}$ changes the IRV outcome. We include frequencies conditional on there not being a candidate who receives a majority of first-place votes in the first round, known as an absolute majority winner (AMW), since if there is an AMW, then IRV, Plurality voting, and Condorcet voting rules all agree. The frequencies show that when IRV elections are interesting (i.e., there is no AMW), it is not unusual that there are coalitions with unequal power in terms of how they could have changed the election.

\begin{table}[h]
\begin{center}

\footnotesize
\begin{tabular}{lrrr}
\toprule
  & \# Relevant & Lower Bound on  & Lower Bound on \\
Dataset& Profiles& Frequency of PEV & Frequency of PCV \\
\midrule
Stable Voting Dataset & 216 & 0.35   &  0.22  \\
CIVS Dataset & 1,216  & 0.36   &  0.31    \\ 
PrefLib Politics (soi) & 308   &  0.03  & 0.05   \\
\citealt{Otis2022} dataset & 448  &  0.07   &   0.08    \\
\bottomrule
\toprule
    &\# Relevant   & Lower Bound on  & Lower Bound on \\
  & Profiles with  &Frequency of PEV & Frequency of PCV \\
Dataset &    No AMW &Conditional on No AMW &  Conditional on No AMW\\
\midrule
Stable Voting Dataset   &  86 & 0.70 &  0.48  \\
CIVS Dataset    & 590  & 0.66  &   0.57  \\ 
PrefLib Politics (soi)      & 104   & 0.09 & 0.16  \\
\citealt{Otis2022} dataset   & 194  &  0.17  & 0.19    \\
\bottomrule
\end{tabular}
\end{center}
\caption{The relevant profiles are obtained from elections in which (i) each voter submits a linear order on some subset  of the set of candidates (so we do not allow voters to rank candidates as tied, though we allow them to leave some candidates unranked, as in IRV elections), (ii) there is a unique IRV winner, and (iii) there are no ties when iteratively removing candidates with the fewest first-place votes; for each of these elections, we restrict the profile to the ``top three'' candidates (which allows the use of a simple heuristic to check for violations of Preferential Equality and Preferential Compensation), defined as the IRV winner plus the last two losers to be eliminated.  The last two columns contain lower bounds on the frequencies of relevant profiles witnessing a Preferential Equality Violation (PEV) or Preferential Compensation Violation (PCV) for IRV, where in the lower half of the table these frequencies are conditional on there not being an absolute majority winner (AMW). The elections come from the Stable Voting Dataset 1-28-2026 (\citealt{HP2026data}), CIVS Dataset 2024-12-15 (\citealt{CIVS2024,Myers2024}), PrefLib Politics (soi files as of 12-20-2024) (\citealt{MatteiWalsh2013}), and \citealt{Otis2022}.  Note that we restricted to elections that are not ``test'' elections in the CIVS Dataset.}\label{IRV_Violations}
\end{table}

The following is immediate from the definition of head-to-head rules (Definition~\ref{H2H}).

\begin{fact} Any head-to-head voting rule---and hence any margin-based voting rule---satisfies Preferential Compensation and hence Preferential Equality.
\end{fact}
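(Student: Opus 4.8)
The plan is to reduce the entire statement to a single bookkeeping computation about the head-to-head counts $\#_\mathbf{P}$. First I would dispatch two easy reductions. Since $\mathcal{H}(\mathbf{P})$ determines $\#_\mathbf{P}$, and $\mathcal{M}_\mathbf{P}(a,b)=\#_\mathbf{P}(a,b)-\#_\mathbf{P}(b,a)$ is a function of $\#_\mathbf{P}$, any two profiles with $\mathcal{H}(\mathbf{P})=\mathcal{H}(\mathbf{P}')$ automatically satisfy $\mathcal{M}_\mathbf{P}=\mathcal{M}_{\mathbf{P}'}$; hence every margin-based rule (Definition~\ref{MarginBased}) is in particular a head-to-head rule (Definition~\ref{H2H}), so it suffices to prove the claim for head-to-head rules. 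Likewise, once I show that a head-to-head rule satisfies Preferential Compensation, Preferential Equality follows immediately from the first half of Lemma~\ref{PrefEq2}. So the whole Fact comes down to: head-to-head $\Rightarrow$ Preferential Compensation.

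Accordingly, let $F$ be head-to-head and let $\mathbf{P},\mathbf{P}'\in\mathrm{dom}(F)$ be related as in Preferential Compensation, so that one voter $i$ who had $x$ immediately above $y$ in $\mathbf{P}$ switches to $y$ immediately above $x$, and another voter $j$ who had $y$ immediately above $x$ in $\mathbf{P}$ switches to $x$ immediately above $y$. I would prove $\mathcal{H}(\mathbf{P})=\mathcal{H}(\mathbf{P}')$ and then invoke the head-to-head property to conclude $F(\mathbf{P})=F(\mathbf{P}')$. The voter set is untouched, so $|V(\mathbf{P})|=|V(\mathbf{P}')|$ is automatic, and all the content lies in verifying $\#_\mathbf{P}=\#_{\mathbf{P}'}$.

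The key observation is that flipping two adjacent candidates in a single voter's ranking alters that voter's contribution to exactly one ordered pair. When voter $i$ moves $y$ above $x$, the voter stops contributing to $\#(x,y)$ and starts contributing to $\#(y,x)$, while the comparison of every other pair $(a,b)$ with $\{a,b\}\neq\{x,y\}$ is preserved. Thus this move sends $\#(x,y)\mapsto\#(x,y)-1$ and $\#(y,x)\mapsto\#(y,x)+1$ and fixes all other counts; symmetrically, voter $j$'s move sends $\#(y,x)\mapsto\#(y,x)-1$ and $\#(x,y)\mapsto\#(x,y)+1$ and fixes the rest. Composing the two changes, the net effect on both $\#(x,y)$ and $\#(y,x)$ is zero and every other count is unchanged, so $\#_{\mathbf{P}'}=\#_\mathbf{P}$, which is exactly what is needed.

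The only step requiring genuine care---the main, if mild, obstacle---is justifying that swapping the adjacent block $\{x,y\}$ leaves intact the voter's comparison with every third candidate $z$. This uses precisely that $x$ and $y$ are immediately adjacent, so no candidate lies strictly between them; for general strict weak orders one checks that each other candidate $z$ stands in the same relation (strictly above, strictly below, or tied) to both $x$ and $y$ after the swap as before it, so that the counts involving $z$ do not move. Everything else is routine bookkeeping.
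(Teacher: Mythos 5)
Your proof is correct and is precisely the argument the paper has in mind: the paper states this Fact as immediate from Definition~\ref{H2H}, and your unpacking---a Preferential Compensation move leaves the voter set fixed and, since only the adjacent pair $\{x,y\}$ is flipped, perturbs $\#_\mathbf{P}(x,y)$ and $\#_\mathbf{P}(y,x)$ by canceling amounts while fixing all other counts, so $\mathcal{H}(\mathbf{P})=\mathcal{H}(\mathbf{P}')$---is exactly that immediate verification, together with the correct reductions that margin-based rules are head-to-head (as $\mathcal{H}$ determines $\mathcal{M}$) and that Preferential Equality then follows from Lemma~\ref{PrefEq2}. No gaps; this matches the paper's intent.
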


\noindent On the other hand, there are (contrived) voting rules satisfying Preferential Compensation that are not head-to-head rules, such as the following.

\begin{fact} Let $F$ be a voting rule on the domain of linear profiles defined as follows: given a profile $\mathbf{P}$, if there are some $x,y\in X(\mathbf{P})$ such that some voter ranks $x$ immediately above $y$ while some other voter ranks $y$ immediately above $x$, then $F$ agrees with the Borda voting rule on $\mathbf{P}$; otherwise $F$ agrees with the Plurality voting rule on $\mathbf{P}$. Then $F$ satisfies Preferential Compensation and hence Preferential Equality, but it is not a head-to-head voting rule.
\end{fact}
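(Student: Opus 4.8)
The plan is to handle the two claims separately, reducing both to the single structural fact that $F$ coincides with Borda exactly on the \emph{mixed} profiles---those in which some pair $\{x,y\}$ occurs both as ``$x$ immediately above $y$'' in one ballot and ``$y$ immediately above $x$'' in another---and with Plurality on the rest.

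For Preferential Compensation, let $\mathbf{P}'$ be obtained from $\mathbf{P}$ by voter $i$ (who had $x$ immediately above $y$) switching to $y$ immediately above $x$ and voter $j$ (who had $y$ immediately above $x$) switching to $x$ immediately above $y$. First I would observe that both $\mathbf{P}$ and $\mathbf{P}'$ are mixed with respect to $\{x,y\}$: in $\mathbf{P}$, voter $i$ witnesses $x$ immediately above $y$ while voter $j$ witnesses $y$ immediately above $x$, and in $\mathbf{P}'$ the same two adjacencies are present with the voters' roles swapped. Hence $F(\mathbf{P})=\mathrm{Borda}(\mathbf{P})$ and $F(\mathbf{P}')=\mathrm{Borda}(\mathbf{P}')$. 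Next I would note that the passage from $\mathbf{P}$ to $\mathbf{P}'$ leaves every pairwise count $\#$ (and hence every margin) unchanged: voter $i$'s flip only decrements $\#(x,y)$ and increments $\#(y,x)$, while voter $j$'s flip does exactly the reverse, so the net change is zero and no other pair is touched. Since Borda is margin-based (as recalled in the introduction), $\mathrm{Borda}(\mathbf{P})=\mathrm{Borda}(\mathbf{P}')$, giving $F(\mathbf{P})=F(\mathbf{P}')$. Preferential Equality then follows from Lemma~\ref{PrefEq2}.

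For the failure of head-to-head-ness, I need two linear profiles with the same $\mathcal{H}$ (equivalently, since ties are absent, the same $\#$, which also fixes the number of voters) on which $F$ disagrees; the idea is to keep the pairwise tallies fixed while moving from a non-mixed profile, where $F$ reads off Plurality, to a mixed one, where $F$ reads off Borda, chosen so that Plurality and Borda disagree. Concretely I would take the two-voter, four-candidate profiles
\[
\mathbf{P}=\{\,c\succ a\succ b\succ d,\ \ d\succ a\succ b\succ c\,\},\qquad
\mathbf{P}'=\{\,a\succ b\succ c\succ d,\ \ d\succ c\succ a\succ b\,\}.
\]
A direct check of all six pairs shows $\#_{\mathbf{P}}=\#_{\mathbf{P}'}$ (each of $\{a,c\},\{a,d\},\{b,c\},\{b,d\},\{c,d\}$ is split $1$--$1$, and $a\succ b$ holds in all four ballots), so $\mathcal{H}(\mathbf{P})=\mathcal{H}(\mathbf{P}')$. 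However, $\mathbf{P}$ is non-mixed (its adjacencies are $(c,a),(a,b),(b,d),(d,a),(b,c)$, none reversed), so $F(\mathbf{P})=\mathrm{Plurality}(\mathbf{P})=\{c,d\}$; whereas $\mathbf{P}'$ is mixed (the pair $\{c,d\}$ appears as $(c,d)$ in the first ballot and $(d,c)$ in the second), so $F(\mathbf{P}')=\mathrm{Borda}(\mathbf{P}')=\{a\}$ (Borda scores $a{:}4,\,c{:}3,\,d{:}3,\,b{:}2$). Thus $F(\mathbf{P})\neq F(\mathbf{P}')$ while $\mathcal{H}(\mathbf{P})=\mathcal{H}(\mathbf{P}')$, so $F$ is not a head-to-head rule.

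The one genuinely nonobvious step is finding the pair $\mathbf{P},\mathbf{P}'$: it must simultaneously hold the pairwise counts fixed, place $\mathbf{P}$ in the Plurality region and $\mathbf{P}'$ in the Borda region, and make Plurality and Borda actually disagree there. The main obstacle I anticipate is that a non-mixed profile on only three candidates is completely determined by its pairwise counts (its support is forced into one of the two ``rotation'' classes $\{a\succ b\succ c,\,b\succ c\succ a,\,c\succ a\succ b\}$ or its reverse), so no three-candidate witness can exist; four candidates are needed precisely so that a non-mixed two-ballot block can be exchanged for a mixed two-ballot block with identical tallies, as above. Everything else is routine verification.
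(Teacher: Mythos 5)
Your proof is correct: the paper states this Fact without proof, and your verification supplies exactly the intended argument---the Preferential Compensation swap keeps both profiles in the Borda (``mixed'') branch while preserving all pairwise counts $\#$, and your four-candidate pair $\mathbf{P},\mathbf{P}'$ is a valid witness against head-to-head-ness (I checked: $\#_\mathbf{P}=\#_{\mathbf{P}'}$ with $a$ above $b$ on all four ballots and every other pair split $1$--$1$, $\mathbf{P}$ is non-mixed with Plurality winners $\{c,d\}$, and $\mathbf{P}'$ is mixed via $\{c,d\}$ with Borda winner $\{a\}$). Your side remark that no three-candidate witness exists is also right---one can check that a profile sharing its $\#$ function with a non-mixed three-candidate profile must coincide with it as a multiset of ballots---which nicely explains why four candidates are needed, though it is not required for the Fact itself.
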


\subsection{Neutral Reversal}

Next we formally define Saari's \citeyearpar{Saari2003} axiom of Neutral Reversal, which has recently been used in the axiomatization of one margin-based voting rule, namely Split Cycle (see \citealt[\S~4.2]{HP2021}, \citealt{Ding2022}).

\begin{definition} Given a set $X\subseteq\mathcal{X}$ of candidates, a \textit{reversal pair} on $X$ is a pair $(L,L^{-1})$ where $L$ is a linear order of $X$ and $L^{-1}$ is the reverse of $L$ (i.e., $L^{-1}=\{(y, x) \mid (x, y) \in L\}$). We may also call a pair of voters who submit a reversal pair of rankings a ``reversal pair of voters.''

A voting rule $F$ satisfies Neutral Reversal if for any  $\mathbf{P},\mathbf{P}'\in\mathrm{dom}(F)$, if $\mathbf{P}'$ is obtained from $\mathbf{P}$ by adding two voters with a reversal pair of rankings on $X(\mathbf{P})$, then $F(\mathbf{P})=F(\mathbf{P}')$.
\end{definition}

To see that Neutral Reversal and Preferential Equality are independent axioms, recall that the Positive/Negative voting rule (\citealt{Lapresta2010}, \citealt{Heckelman2020}) assigns a candidate $1$ point for every voter who ranks them uniquely first and $-1$ point for every voter who ranks them uniquely last; the candidates whose sum of points is greatest are the winners. The following is straightforward to verify.

\begin{fact}\label{NRFact}$\,$
\begin{enumerate}
\item The Pareto rule (in either its strict or weak version), a head-to-head voting rule, satisfies Preferential Equality but not Neutral Reversal.
\item The Positive/Negative voting rule satisfies Neutral Reversal but not Preferential Equality.
\end{enumerate}
\end{fact}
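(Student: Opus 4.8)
The plan is to treat the statement as four separate claims---two positive (a rule satisfies an axiom) and two negative (a rule violates an axiom)---since the positive claims follow from general principles already available, while the negative claims require exhibiting explicit counterexample profiles.

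For the Pareto rule, satisfaction of Preferential Equality is immediate. Since the Pareto rule is a head-to-head rule (as noted when it was introduced: $x$ Pareto-dominating $y$ amounts to $\#_\mathbf{P}(x,y)=|V(\mathbf{P})|$, which is recoverable from $\mathcal{H}(\mathbf{P})$), the preceding Fact gives that it satisfies Preferential Compensation and hence Preferential Equality. For the failure of Neutral Reversal, the key observation is that once a reversal pair $(L,L^{-1})$ is added, \emph{no} candidate can Pareto-dominate another, because for any ordered pair of candidates the two added voters disagree. So I would start from any profile whose Pareto set is a proper subset of the candidates---e.g.\ the single-voter, two-candidate profile with ranking $a\succ b$, whose Pareto set is $\{a\}$---and add the reversal pair $(a\succ b,\,b\succ a)$; the new Pareto set is $\{a,b\}\neq\{a\}$, witnessing the violation.

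For the Positive/Negative rule, Neutral Reversal follows from a net-zero bookkeeping argument: in $(L,L^{-1})$ the top candidate of $L$ is the bottom of $L^{-1}$ and vice versa, so each of these two candidates receives $+1$ from one added voter and $-1$ from the other, for a net of $0$, while every remaining candidate is ranked neither first nor last in either order and so receives $0$. Thus every candidate's score is unchanged and the winner set is unchanged. For the failure of Preferential Equality, I would exploit that the rule only ``sees'' first- and last-place positions. I would pick one voter $i$ whose flip of $x$ immediately above $y$ is a \emph{no-op}---with $x$ not first and $y$ not last, so the flip changes no first/last position---and one voter $j$ whose flip is \emph{active}, with $x$ ranked first. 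Concretely, with candidates $\{a,x,y,b\}$, take $i:a\succ x\succ y\succ b$ and $j:x\succ y\succ a\succ b$; flipping $i$ leaves the winner set at $\{a,x\}$ while flipping $j$ moves it to $\{a,y\}$, so $F(\mathbf{P}^i)\neq F(\mathbf{P}^j)$.

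The step I expect to be the main obstacle is designing this last violation. The naive attempts fail because the $\pm 1$ scoring is symmetric under order reversal: flipping $x$ immediately above $y$ when $x$ is ranked first has exactly the same net effect on scores ($x$ down by one, $y$ up by one) as flipping when $y$ is ranked last, so any configuration using only top and bottom flips forces $F(\mathbf{P}^i)=F(\mathbf{P}^j)$. The resolution is to make one of the two flips a no-op by placing $x,y$ strictly in the interior of a voter's ranking, which is why at least four candidates are needed; verifying that the two resulting winner sets differ is then a short score computation.
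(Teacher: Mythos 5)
Your proof is correct, and since the paper states this Fact as ``straightforward to verify'' with no proof given, your verification is exactly the intended one: Preferential Equality for Pareto via the head-to-head property and the preceding Fact, Neutral Reversal failing because a reversal pair destroys all Pareto-dominations, the net-zero score bookkeeping for Positive/Negative under reversal pairs, and an explicit profile where one flip of $xy$ is score-neutral while the other is not. Your observation that the $\pm 1$ scoring makes a first-place flip and a last-place flip have the identical net effect---so that four candidates and an interior no-op flip are genuinely needed for the violation---is a nice supplementary point the paper does not record.
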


\subsection{Proof of characterization}

The two axioms just introduced exactly characterize margin-based voting rules on the domain of linear profiles.

\begin{restatable}{theorem}{LinChar}\label{LinChar} If $F$ is a voting rule on the domain of linear profiles, then $F$ is margin-based if and only if $F$ satisfies  Preferential Equality and Neutral Reversal.
\end{restatable}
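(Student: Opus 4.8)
The plan is to prove the nontrivial (right-to-left) direction: assuming $F$ satisfies Preferential Equality and Neutral Reversal, I must show that $\mathcal{M}_\mathbf{P}=\mathcal{M}_\mathbf{Q}$ implies $F(\mathbf{P})=F(\mathbf{Q})$. The left-to-right direction is immediate: the Fact preceding the theorem already gives Preferential Equality from being head-to-head, and adding a reversal pair increases each $\#_\mathbf{P}(a,b)$ by exactly one in each direction, leaving $\mathcal{M}_\mathbf{P}$ unchanged, so margin-based rules satisfy Neutral Reversal. Since the domain of linear profiles is rich, I may freely use Preferential Compensation in place of Preferential Equality (Lemma~\ref{PrefEq2}). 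The strategy is then to exhibit a sequence of outcome-preserving moves---additions and removals of reversal pairs, and Preferential Compensation flips---carrying $\mathbf{P}$ to $\mathbf{Q}$.

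First I would reduce to anonymized profiles (multisets of linear orders). Neutral Reversal itself yields voter-relabeling invariance: to replace a voter $i$ ranking $L$ by a fresh voter $k$ with the same ranking, add the reversal pair consisting of $k$ (ranking $L$) and a fresh $k'$ (ranking $L^{-1}$), then delete the reversal pair $\{i,k'\}$; both steps preserve $F$. Iterating, $F$ depends only on the multiset of rankings, so I may argue throughout with anonymized profiles. Next, since for linear profiles $\#_\mathbf{P}(a,b)+\#_\mathbf{P}(b,a)=|V(\mathbf{P})|$, equal margins on a pair force $|V(\mathbf{P})|\equiv|V(\mathbf{Q})|\pmod 2$; adding the appropriate number of reversal pairs to the smaller profile then equalizes the voter counts without changing outcomes, after which $\mathbf{P}$ and $\mathbf{Q}$ have identical $\#$ functions.

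The crux is the combinatorial connectivity lemma: \emph{any two anonymized linear profiles with the same $\#$ function can be linked by Preferential Compensation flips}. I would prove this at the level of the free $\mathbb{Z}$-module on the linear orders of $X$. Let $\mu$ send each order to its ordered-pair indicator vector, so that $\#_\mathbf{P}=\#_\mathbf{Q}$ means $d:=\mathbf{Q}-\mathbf{P}$ lies in $K:=\ker\mu\cap\{\text{coordinate sum }0\}$. The key claim is that $K$ is generated as a $\mathbb{Z}$-module by Preferential Compensation vectors. To see this, note the sum-zero sublattice is generated by adjacent-swap differences $[\sigma(L)]-[L]$ (adjacent transpositions connect all linear orders), and each such difference maps under $\mu$ to $\pm 2e_{xy}$ for the swapped pair $\{x,y\}$. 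Writing any $d\in K$ as a sum of adjacent-swap differences and grouping the terms by the pair on which they act, the condition $\mu(d)=0$ forces, for each pair, equally many ``raising'' and ``lowering'' swaps; pairing each raising swap with a lowering swap on the same pair yields exactly a Preferential Compensation vector, so $d$ is a sum of such vectors.

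Finally I would convert this lattice identity into legal moves by buffering for nonnegativity. Writing $d=v_1+\cdots+v_t$ with each $v_i$ a Preferential Compensation vector, add to both profiles a buffer of $c\geq t$ copies of every linear order (a sum of reversal pairs, hence allowed by Neutral Reversal); with the buffer large enough, applying $v_1,\dots,v_t$ in turn keeps all counts nonnegative, so each step is a genuine Preferential Compensation flip. This carries $\mathbf{P}+\text{buffer}$ to $\mathbf{Q}+\text{buffer}$ while preserving $F$, and stripping the buffers via Neutral Reversal gives $F(\mathbf{P})=F(\mathbf{Q})$. The main obstacle is the generation claim for $K$; everything else (relabeling, parity, buffering) is routine bookkeeping, and I expect the adjacent-swap decomposition plus the pairing argument to be the cleanest route, reducing the lemma to the elementary fact that adjacent transpositions generate the symmetric group.
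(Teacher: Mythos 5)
Your argument is correct, and it reaches the conclusion by a genuinely different route than the paper's. The paper's proof is a constructive normal-form argument: it fixes an injective labeling $v$ of fresh voters, transforms each of $\mathbf{P}$ and $\mathbf{Q}$ by explicit Neutral Reversal additions/deletions and Preferential Compensation flips into the canonical Debord profile $\mathbf{D}_{v,m}$ of Definition~\ref{DeBordConstruction}, and uses a uniqueness lemma (Lemma~\ref{KeyLem}, four conditions on voter identities) to certify that both processes terminate at literally the same profile, with a case split on the parity of the number of voters in which the odd case retains one shared voter. You instead (i) derive anonymity from Neutral Reversal via the add-then-delete relabeling trick---a step the paper never makes explicit, since its careful voter bookkeeping substitutes for it---(ii) equalize voter counts using the parity observation, so that equal margins become equal $\#$ functions, and (iii) reduce the crux to a lattice-generation statement: the kernel of the counting map $\mu$ on the free $\mathbb{Z}$-module over linear orders is generated by Preferential Compensation vectors, proved by telescoping along adjacent transpositions and pairing raising with lowering swaps on each pair (the counts match because swaps on distinct unordered pairs affect disjoint coordinates of $\mu$). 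Your buffer of $c\geq t$ copies of every linear order is legitimate under Neutral Reversal because for $|X|\geq 2$ the set of linear orders decomposes into reversal pairs, and it converts the lattice identity into genuine moves while keeping every intermediate count positive; it plays the role that the McGarvey pairs play in the paper, and it lets you avoid both the Debord construction and the even/odd case split. What the paper's approach buys is an explicit canonical form and full control over voter identities; what yours buys is modularity---anonymization, a clean algebraic kernel lemma, and a uniform nonnegativity argument. Two cosmetic points: an adjacent-swap difference maps under your $\mu$ to $e_{yx}-e_{xy}$ rather than $\pm 2e_{xy}$ (the factor $2$ appears only after projecting to the margin coordinate), though only the support and sign matter for your pairing argument; and your inference of equal parity from $\mathcal{M}_\mathbf{P}=\mathcal{M}_\mathbf{Q}$ tacitly assumes $|X|\geq 2$---but the paper's proof makes exactly the same tacit assumption at the same step, so this is a shared, degenerate-case caveat rather than a gap in your proposal.
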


\noindent Before proving this theorem, we need some preliminary definitions and results. The first is standard.

\begin{definition} Given a profile $\mathbf{P}$ and $i\in V(\mathbf{P})$, let $\mathbf{P}_i$ be the one-voter profile that assigns to $i$ the ranking $\mathbf{P}(i)$. Given two profiles $\mathbf{P}, \mathbf{Q}$ with $X(\mathbf{P})=X(\mathbf{Q})$ and  $V(\mathbf{P})\cap V(\mathbf{Q}) = \varnothing$, the  \textit{disjoint union} of $\mathbf{P}$ and $\mathbf{Q}$, denoted $\mathbf{P}+\mathbf{Q}$, assigns to each $i\in V(\mathbf{P})$ the ranking $\mathbf{P}(i)$ and to each $i\in V(\mathbf{Q})$ the ranking $\mathbf{Q}(i)$.\end{definition}

Next, we recall from \citealt{McGarvey1953} and \citealt{Debord1987} how to construct preference profiles realizing an abstract margin matrix.

\begin{definition} Given a nonempty finite set $X\subseteq\mathcal{X}$ of candidates, a \textit{margin matrix} over $X$ is a function $m:X^2\to \mathbb{Z}$ such that for all $x,y\in X$, $m(x,y)=-m(y,x)$. Given margin matrices $m$ and $m'$ over $X$, we define the margin matrix $m-m'$ by $m-m'(x,y)=m(x,y)-m'(x,y)$ for $x,y\in X$.
\end{definition}
\noindent Note that for any profile $\mathbf{P}$, the function $\mathcal{M}_\mathbf{P}$ defined in (\ref{MDef}) is a margin matrix.

Given a margin matrix $m$, we want to construct a profile realizing those margins. To do so, we will use special types of rankings, specified in the following definition, also used by McGarvey \citeyearpar{McGarvey1953}. First, let us fix a linear order of the set $\mathcal{X}$ of all possible candidates, which we call the \textit{alphabetic order}. To avoid overloading small natural numbers with different meanings, we also fix a doubleton set $\{\top, \bot\}$. 

\begin{definition} Let $X$ be a finite subset of $\mathcal{X}$. We define $X^{(2)} = \{(x, y) \in X^2 \mid x \not= y\}$. For simplicity, we denote the pair $(x, y)$ simply by $xy$. For any $xy \in X^{(2)}$, with $L$ being the alphabetical order of $X \setminus \{x, y\}$, let $G^\top_{xy}$ be the linear order $xyL$, and  let $G^\bot_{xy}$ be the linear order $L^{-1}xy$. These two linear orders are together called a \textit{McGarvey pair}.
\end{definition}

Using this notation, we can define what is essentially Debord's \citeyearpar{Debord1987} construction of profiles realizing margin matrices, generalizing McGarvey's \citeyearpar{McGarvey1953} construction of profiles realizing majority graphs. The idea for margin matrices consisting of even entries is simple: focusing on the positive entries of the margin matrix, we can uniquely decompose it as a sum of simple margin matrices with a single positive entry $2$ (and a corresponding negative entry $-2$ across the diagonal), and each of these simple margin matrices can be realized by a pair of ballots $G^\top_{xy}$ and $G^\bot_{xy}$ where $(x, y)$ is the coordinate of the entry $2$.
However, for the purposes of proving Theorem \ref{LinChar}---in particular in light of the absence of the axiom of anonymity in its statement---we must be able to control precisely the identity of voters in this construction; it is not enough to end up with the same anonymized profile. Thus, below we introduce a function $v$ that essentially sets up an infinite queue of voters for each type of ballot we might need. Then a Debord-style construction returns a unique profile $\mathbf{D}_{v, m}$ for each margin matrix $m$ of even parity, if we always take voters from the beginning of each queue. This $\mathbf{D}_{v, m}$ is essentially a normal form of $m$, and the main observation leading to Theorem \ref{LinChar} is that each profile $\mathbf{P}$ can be transformed into its normal form $\mathbf{D}_{v, \mathcal{M}_{\mathbf{P}}}$ using only the operations of adding or removing reversal pairs (using the voters in the queue given by $v$) and doing preferential compensation. The function $v$ will be chosen so that all voters involved are ``fresh''.

\begin{definition}\label{DeBordConstruction} Let $X$ be a nonempty finite subset of $\mathcal{X}$ and $v: X^{(2)} \times \{\top, \bot\} \times \mathbb{N}_{>0} \to\mathcal{V}$ an injective map. Under this $v$, for any $i$ in the range of $v$, when $v(ab, \star, k) = i$, we call $i$ \emph{a designated $ab$-voter} and also \emph{a designated $ab$-$\star$-voter}. 

If $m$ is a margin matrix over $X$ such that $m(x,y)$ is even for all $x,y\in X$ and $m$ is not the zero matrix, then we construct a profile $\mathbf{D}_{v, m}$ as follows: 
\begin{enumerate}
    \item $V(\mathbf{D}_{v, m}) = \bigcup_{xy \in X^{(2)}}\{v(xy, \star, k) \mid 1 \le k \le m(x, y)/2,\; \star\in\{\top, \bot\}\}$;
    \item for any $i \in V(\mathbf{D}_{v, m})$, where $(xy, \star, k)$ is $i$'s pre-image under $v$, let $\mathbf{D}_{v,m}(i) = G^\star_{xy}$.
\end{enumerate}
\end{definition}
One might worry about how to define $\mathbf{D}_{v, m}$ when $m$ is a zero matrix, since by our definition all profiles must have at least one voter. In our later proof, we will introduce an independently fixed profile in this case. Now the following is easy to see from the definition of $\mathbf{D}_{v,m}$.

\begin{proposition} For $X$, $v$, and $m$ as in Definition \ref{DeBordConstruction}, $\mathcal{M}_{\mathbf{D}_{v, m}} = m$.\end{proposition}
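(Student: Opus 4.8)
The plan is to fix an arbitrary ordered pair $(a,b)$ of distinct candidates in $X$ and compute $\mathcal{M}_{\mathbf{D}_{v,m}}(a,b)$ directly, by summing the contribution of each voter to this margin. Since $\mathcal{M}$ depends only on the multiset of rankings submitted---voter identities are irrelevant to $\mathcal{M}_{\mathbf{P}}$---the injectivity of $v$ plays only the role of guaranteeing that the prescribed voters are genuinely distinct, so that for each $cd \in X^{(2)}$ with $m(c,d)>0$ the profile really does contain $m(c,d)/2$ voters with ranking $G^\top_{cd}$ and $m(c,d)/2$ voters with ranking $G^\bot_{cd}$. I would accordingly organize the voters into groups indexed by $cd$ and compute the contribution of each group to $\mathcal{M}_{\mathbf{D}_{v,m}}(a,b)$.

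The key step---and the one place requiring a small case analysis---is to show that a single McGarvey pair $(G^\top_{cd}, G^\bot_{cd})$ contributes $+2$ to $\mathcal{M}(c,d)$ and $0$ to the margin of every pair $\{x,y\}\ne\{c,d\}$. Recall that $G^\top_{cd}=cdL$ and $G^\bot_{cd}=L^{-1}cd$, where $L$ alphabetically orders $X\setminus\{c,d\}$. On the pair $\{c,d\}$ itself, both orders rank $c$ above $d$, so together they contribute $+2$ to $\mathcal{M}(c,d)$. For any pair $\{a,b\}$ with $\{a,b\}\ne\{c,d\}$, at least one of $a,b$ lies in the filler block, whose order is reversed between the two rankings; checking the cases (both of $a,b$ in $X\setminus\{c,d\}$; exactly one of $a,b$ equal to $c$; exactly one equal to $d$) shows that $G^\top_{cd}$ and $G^\bot_{cd}$ always disagree on the relative order of $a$ and $b$, so their contributions to $\mathcal{M}(a,b)$ cancel.

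Finally I would assemble the count. For the fixed pair $(a,b)$, the previous step shows that every group $cd$ with $\{c,d\}\ne\{a,b\}$ contributes $0$ to $\mathcal{M}_{\mathbf{D}_{v,m}}(a,b)$, so only the groups $ab$ and $ba$ can matter. If $m(a,b)>0$, then group $ab$ supplies $m(a,b)/2$ McGarvey pairs, each contributing $+2$, while group $ba$ is empty (its index range is vacuous), giving $\mathcal{M}_{\mathbf{D}_{v,m}}(a,b)=m(a,b)$; if $m(a,b)<0$ the symmetric computation with the nonempty group $ba$ gives $-2\cdot(m(b,a)/2)=m(a,b)$; and if $m(a,b)=0$ both groups are empty, giving $0=m(a,b)$. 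In every case $\mathcal{M}_{\mathbf{D}_{v,m}}(a,b)=m(a,b)$, as required. The only genuine obstacle is the case analysis in the middle step, and even there the reversal of the filler block $L$ makes the cancellation transparent; the evenness hypothesis on $m$ is precisely what guarantees that each index range $1\le k\le m(a,b)/2$ consists of integers.
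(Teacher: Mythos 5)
Your proof is correct and matches the argument the paper leaves implicit: the paper states this proposition without proof (``easy to see from the definition''), and the intended verification is precisely your direct computation that each McGarvey pair $(G^\top_{cd}, G^\bot_{cd})$ contributes $+2$ to $\mathcal{M}(c,d)$ and $0$ to every other margin, after which summing over the groups indexed by $X^{(2)}$ yields $m$. Your case analysis on the filler block $L$, the sign cases, and the observation that the index range $1\le k\le m(a,b)/2$ is vacuous when $m(a,b)\le 0$ are all exactly right, so nothing is missing.
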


Thanks to our being careful about the identity of voters in the definition of $\mathbf{D}_{v,m}$, we can characterize $\mathbf{D}_{v,m}$ as the unique profile with certain properties.

\begin{lemma}\label{KeyLem} Let $X$ and $v$ be as in Definition \ref{DeBordConstruction}. Suppose $\mathbf{P}$ is a profile such that:
\begin{enumerate}
    \item\label{KeyLem1} every voter $i \in V(\mathbf{P})$ is in the range of $v$, and where $(xy, \star, k)$ is $i$'s pre-image under $v$, we have $\mathbf{P}(i) = G_{xy}^\star$;
    \item\label{KeyLem2} $v(xy, \top, k) \in V(\mathbf{P})$ iff $v(xy, \bot, k) \in V(\mathbf{P})$;
    \item\label{KeyLem3} if $v(xy, \star, k) \in V(\mathbf{P})$, then for any $1 \le k' \le k$, we have $v(xy, \star, k') \in V(\mathbf{P})$;
    \item\label{KeyLem4} if $v(xy, \star, k) \in V(\mathbf{P})$ for some $\star \in \{\top, \bot\}$ and $k \in \mathbb{N}_{>0}$, then $v(yx, \star', k') \not\in V(\mathbf{P})$ for every $\star' \in \{\top, \bot\}$ and $k' \in \mathbb{N}_{>0}$.
\end{enumerate}
Then $\mathbf{P} = \mathbf{D}_{v, \mathcal{M}_\mathbf{P}}$.
\end{lemma}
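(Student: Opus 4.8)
The plan is to show that $\mathbf{P}$ and $\mathbf{D}_{v,\mathcal{M}_\mathbf{P}}$ are literally the same function: that they have the same voter set and agree pointwise on it. Before this even makes sense I must check that $\mathbf{D}_{v,\mathcal{M}_\mathbf{P}}$ is defined at all, i.e.\ that $\mathcal{M}_\mathbf{P}(x,y)$ is even for all $x,y\in X$; this will fall out of the margin computation below. For each direction $ab\in X^{(2)}$ write $n_{ab}$ for the number of $k$ such that $v(ab,\top,k)\in V(\mathbf{P})$. Conditions~\ref{KeyLem2} and~\ref{KeyLem3} guarantee that this is well behaved: condition~\ref{KeyLem3} makes the set of populated indices $k$ downward closed, hence an initial segment $\{1,\dots,n_{ab}\}$ (finite since $V(\mathbf{P})$ is), and condition~\ref{KeyLem2} forces the $\top$- and $\bot$-indices to coincide. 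Together with condition~\ref{KeyLem1}, this shows that $V(\mathbf{P})$ is the union, over directions $ab$, of exactly $n_{ab}$ complete McGarvey pairs $(G^\top_{ab},G^\bot_{ab})$.

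The crux is a direct calculation of the margin contribution of a single McGarvey pair. Fix a direction $cd$ and the rankings $G^\top_{cd}=cdL$ and $G^\bot_{cd}=L^{-1}cd$, where $L$ is the alphabetic order of $X\setminus\{c,d\}$. I would check three cases for an arbitrary ordered pair $(x,y)$ with $x\neq y$: (i) if $\{x,y\}=\{c,d\}$, both rankings place $c$ above $d$, contributing $+2$ to $\mathcal{M}(c,d)$; (ii) if $x,y\in X\setminus\{c,d\}$, their relative order is reversed between $L$ and $L^{-1}$, so the two contributions cancel; (iii) if exactly one of $x,y$ lies in $\{c,d\}$, that element sits above the other in $G^\top_{cd}$ but below it in $G^\bot_{cd}$, so again the contributions cancel. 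Hence a $cd$-pair adds exactly $+2$ to $\mathcal{M}(c,d)$, $-2$ to $\mathcal{M}(d,c)$, and $0$ to every other entry. Summing over all directions and using the decomposition of the previous paragraph yields $\mathcal{M}_\mathbf{P}(a,b)=2n_{ab}-2n_{ba}$, which is even, so $\mathbf{D}_{v,\mathcal{M}_\mathbf{P}}$ is well-defined.

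It remains to match voter sets and rankings, and this is where condition~\ref{KeyLem4} does the essential work. Write $m=\mathcal{M}_\mathbf{P}$, so $m(a,b)/2=n_{ab}-n_{ba}$. By condition~\ref{KeyLem4}, for each unordered pair at most one of $n_{ab},n_{ba}$ is nonzero; thus if $n_{ab}>0$ then $n_{ba}=0$ and $m(a,b)/2=n_{ab}$, whereas if $n_{ab}=0$ then $m(a,b)/2=-n_{ba}\le 0$. Comparing with clause~1 of Definition~\ref{DeBordConstruction}, the $ab$-voters of $\mathbf{D}_{v,m}$ are exactly $v(ab,\star,k)$ for $1\le k\le m(a,b)/2$, which equals $\{v(ab,\star,k)\mid 1\le k\le n_{ab}\}$ in the first case and is empty in the second---precisely the $ab$-voters of $\mathbf{P}$. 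Hence $V(\mathbf{D}_{v,m})=V(\mathbf{P})$. Finally, for any $i\in V(\mathbf{P})$ with pre-image $(ab,\star,k)$, condition~\ref{KeyLem1} gives $\mathbf{P}(i)=G^\star_{ab}$ while clause~2 of Definition~\ref{DeBordConstruction} gives $\mathbf{D}_{v,m}(i)=G^\star_{ab}$, so the two profiles agree pointwise and $\mathbf{P}=\mathbf{D}_{v,\mathcal{M}_\mathbf{P}}$. I expect the per-pair case analysis in the second paragraph to be the only genuine computation, and the invocation of condition~\ref{KeyLem4} to collapse $m(a,b)/2$ to $n_{ab}$ to be the one structurally load-bearing step, since without it the same margins could arise from a mixture of $ab$- and $ba$-voters and the voter sets would fail to match.
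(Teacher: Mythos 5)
Your proof is correct and follows essentially the same route as the paper's: conditions \ref{KeyLem1}--\ref{KeyLem3} reduce $V(\mathbf{P})$ to complete McGarvey pairs, the pairwise cancellation computation gives $\mathcal{M}_\mathbf{P}(a,b)=2n_{ab}-2n_{ba}$ (the paper phrases this as $\mathcal{M}_{\mathbf{P}_{|cd}}(a,b)=0$ for $\{a,b\}\neq\{c,d\}$), and condition \ref{KeyLem4} collapses the difference to a single direction so the voter sets match. Your explicit verification that $\mathcal{M}_\mathbf{P}$ is even, so that $\mathbf{D}_{v,\mathcal{M}_\mathbf{P}}$ is well-defined, is a small point the paper leaves implicit.
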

\begin{proof}
    Suppose $\mathbf{P}$ is a profile satisfying the four conditions. 
    By condition \ref{KeyLem1}, we only need to show that $V(\mathbf{P}) = V(\mathbf{D}_{v, \mathcal{M}_\mathbf{P}})$. 
    For every $xy \in X^{(2)}$, let $\mathbf{P}_{|xy}$ be the profile\footnote{This ``profile'' could be empty, but margins can be defined on the empty profile trivially.} that keeps only the designated $xy$-voters in $\mathbf{P}$. 
    Note that $\mathbf{P}$ is the disjoint union of $\mathbf{P}_{|xy}$ for all $xy \in X^{(2)}$.
    By conditions \ref{KeyLem1} and~\ref{KeyLem2}, for any $xy, x'y' \in X^{(2)}$ such that $\{x, y\}\neq \{x', y'\}$, we have $\mathcal{M}_{\mathbf{P}_{|x'y'}}(x, y) = 0$, since $G_{x'y'}^\top$ and $G_{x'y'}^{\bot}$ rank $x, y$ differently.
    This means $\mathcal{M}_{\mathbf{P}}(x, y) = \mathcal{M}_{\mathbf{P}_{|xy}}(x, y) - \mathcal{M}_{\mathbf{P}_{|yx}}(y, x) = |V(\mathbf{P}_{|xy})| - |V(\mathbf{P}_{|yx})|$ for any $xy \in X^{(2)}$. 
    By condition \ref{KeyLem4}, at most one of $\mathbf{P}_{|xy}$ and $\mathbf{P}_{|yx}$ can be non-empty. 
    This means that if $\mathcal{M}_{\mathbf{P}}(x, y) \le 0$, then $V(\mathbf{P}_{|xy})$ must be empty, so trivially we have $V(\mathbf{P}_{|xy}) = \{v(xy, \star, k) \mid 1 \le k \le \mathcal{M}_{\mathbf{P}}(x, y) / 2, \star \in \{\top, \bot\}\}$.
    On the other hand, if $\mathcal{M}_{\mathbf{P}}(x, y) > 0$, then $\mathbf{P}_{|yx}$ must be empty, 
    meaning that $|V(\mathbf{P}_{|xy})| = \mathcal{M}_{\mathbf{P}}(x, y)$. 
    By conditions \ref{KeyLem2} and \ref{KeyLem3}, this means $V(\mathbf{P}_{|xy})$ must be 
    $\{v(xy, \star, k) \mid 1 \le k \le \mathcal{M}_{\mathbf{P}}(x, y)/2, \star \in \{\top, \bot\}\}$. This completes the proof.
\end{proof}

We are now in a position to prove Theorem~\ref{LinChar}.

\LinChar*

\begin{proof} It is easy to see that if $F$ is margin-based, then $F$ satisfies Preferential Equality and Neutral Reversal. Conversely, suppose $F$ satisfies Preferential Equality and Neutral Reversal. To show that $F$ is margin-based, suppose $\mathcal{M}_\mathbf{P}=\mathcal{M}_\mathbf{Q}$ (call this $m$). We must show that $F(\mathbf{P})=F(\mathbf{Q})$.
Since $\mathbf{P}$ and $\mathbf{Q}$ are linear profiles with the same margins (and at least $2$ candidates), the parity of the number of voters is the same in $\mathbf{P}$ as in $\mathbf{Q}$, and $X(\mathbf{P}) = X(\mathbf{Q})$ (call it $X$). Fix an injective map \[v: X^{(2)}\times\{\top, \bot\}\times\mathbb{N}_{>0}\to \mathcal{V}\setminus (V(\mathbf{P})\cup V(\mathbf{Q})).\] Recall that if $i = v(xy, \star, k)$, $i$ is called a designated $xy$-voter and also a designated $xy$-$\star$-voter. Also, fix a profile with two voters not in $V(\mathbf{P}) \cup V(\mathbf{Q})$ voting a reversal pair of rankings. For notational convenience, call this profile $\mathbf{D}_{v, 0}$, where $0$ is the zero matrix of the same dimension as $m$.

Case 1: the parity of the number of voters is even. Pick a pair of voters $i,j\in V(\mathbf{P})$. Our goal is to transform their rankings $\mathbf{P}_i,\mathbf{P}_j$ into a reversal pair, by transforming $\mathbf{P}_i$ to be the reverse of $\mathbf{P}_j$. If, e.g., we want to flip some adjacent candidates $xy$ in $\mathbf{P}_i$ to $yx$, pick the smallest $k \in \mathbb{N}_{>0}$ such that neither $v(xy, \top, k)$ nor $v(xy, \bot, k)$ has been added to the profile and then add them with the rankings $yxL$ (not $xyL$) and $L^{-1}xy$, respectively, where $L$ is the alphabetic order of $X\setminus \{x,y\}$. Thus, we have added a pair of voters with a reversal pair of rankings, which by Neutral Reversal does not change the output of $F$. Next, flip the $xy$ in $\mathbf{P}_i$ to $yx$, while also changing the ranking of voter $v(xy, \top ,k)$ from $yxL$ to $xyL$, which by Preferential Compensation (recall Lemma \ref{PrefEq2}) does not change the output of $F$. Note that now voter $v(xy, \top, k)$'s ranking is $G^\top_{xy}$, and voter $v(xy, \bot, k)$'s ranking is $G^\bot_{xy}$. By continuing in this way, introducing sufficiently many McGarvey pairs, we transform $\mathbf{P}_i, \mathbf{P}_j$ into the desired reversal pair. Then by Neutral Reversal, we can delete voters $i,j$ from the profile without changing the output of $F$. See Example \ref{KeyEx} below for an illustration.

We do this for every pair of voters from $\mathbf{P}$, obtaining a profile $\mathbf{P}^+$ containing none of the original voters. In fact, it is easy to observe that $\mathbf{P}^+$ satisfies the first three conditions of Lemma \ref{KeyLem}, since all old voters in $\mathbf{P}$ have been deleted, and new voters were taken from the front of the each queue given by $v$ in pairs. It is also clear that $\mathcal{M}_{\mathbf{P}^+} = m$ since we are only using operations that maintain the margin matrix.

Next, for any $xy$ such that there are both designated $xy$-voters and also designated $yx$-voters in $V(\mathbf{P}^+)$, by the second condition of Lemma \ref{KeyLem} there must be designated $xy$-$\top$-, $xy$-$\bot$-, $yx$-$\top$-, and $yx$-$\bot$-voters in $V(\mathbf{P}^+)$. Then we remove for each of these four categories a voter whose pre-image under $v$ has the largest third coordinate. Note that removing these four voters amounts to deleting two reversal pairs, since $G^\top_{xy}$ and $G^\bot_{yx}$ form a reversal pair and $G^\bot_{xy}$ and $G^\top_{yx}$ form a reversal pair. Thus, by Neutral Reversal it does not change the output of $F$. We continue doing so until there are either no designated $xy$-voters or no designated $yx$-voters. The second condition in Lemma \ref{KeyLem} is preserved by the above operation of removing two pairs of voters, so we are guaranteed to hit one of these two cases. 
If all voters would be removed at the final step of this process, we first add the profile $\mathbf{D}_{v, 0}$ fixed at the beginning before the final removal, which leaves the output of $F$ unchanged.

Call the resulting profile $\mathbf{P}^\dagger$. It is now clear that $\mathcal{M}_{\mathbf{P}^\dagger}$ is still $m$. If the final addition of a reversal pair did not happen, then $\mathbf{P}^\dagger$ satisfies all the four conditions of Lemma \ref{KeyLem}, and thus, $\mathbf{P}^\dagger = \mathbf{D}_{v, m}$. If the final addition did happen, then $m$ must be the zero matrix since performing only margin-preserving operations led us to the empty ``profile'', and by stipulation $\mathbf{P}^\dagger = \mathbf{D}_{v, m}$. Also, as we have only used Neutral Reversal and Preferential Compensation, we have $F(\mathbf{P}^\dagger) = F(\mathbf{P}^+) = F(\mathbf{P})$.

Applying the same process to $\mathbf{Q}$, we obtain $\mathbf{Q}^\dagger$ such that $F(\mathbf{Q}^\dagger)=F(\mathbf{Q})$ and $\mathbf{Q}^\dagger = \mathbf{D}_{v, m}$. Then $F(\mathbf{P}) = F(\mathbf{P}^\dagger) = F(\mathbf{D}_{v, m}) = F(\mathbf{Q}^\dagger) = F(\mathbf{Q})$, as desired.

Case 2: the parity of the number of voters in $\mathbf{P}$ and $\mathbf{Q}$ is odd. By Neutral Reversal, we may assume without loss of generality that some voter $i$ appears in both $\mathbf{P}$ and $\mathbf{Q}$ with the same ballot, so $\mathbf{P}_i=\mathbf{Q}_i$. Now for the part of the profile $\mathbf{P}$ that excludes the voter $i$, proceed exactly as in the previous case, including the final addition of the fixed reversal pair if $\mathcal{M}_\mathbf{P}-\mathcal{M}_{\mathbf{P}_i}$ is the zero matrix. Then after adding the voter $i$ back, the resulting profile $\mathbf{P}^\dagger$ is identical to  $\mathbf{P}_i + \mathbf{D}_{v, \mathcal{M}_\mathbf{P}-\mathcal{M}_{\mathbf{P}_i}}$ and hence 
\[
F(\mathbf{P})=F(\mathbf{P}_i+ \mathbf{D}_{v, \mathcal{M}_\mathbf{P}-\mathcal{M}_{\mathbf{P}_i}}).
\] 
Do the same for $\mathbf{Q}$, so $\mathbf{Q}^\dagger$ is identical to $\mathbf{Q}_i+ \mathbf{D}_{v, \mathcal{M}_\mathbf{Q}-\mathcal{M}_{\mathbf{Q}_i}}$ and hence 
\[
F(\mathbf{Q})= F(\mathbf{Q}_i+ \mathbf{D}_{v, \mathcal{M}_\mathbf{Q}-\mathcal{M}_{\mathbf{Q}_i}}).
\] 
Then since  $\mathbf{P}_i=\mathbf{Q}_i$ and $\mathcal{M}_\mathbf{P}=\mathcal{M}_\mathbf{Q}$, we have $\mathcal{M}_{\mathbf{P}} - \mathcal{M}_{\mathbf{P}_i} = \mathcal{M}_{\mathbf{Q}} - \mathcal{M}_{\mathbf{Q}_i}$. This means 
\[
\mathbf{P}_i+ \mathbf{D}_{v, \mathcal{M}_\mathbf{P}-\mathcal{M}_{\mathbf{P}_i}} = 
\mathbf{Q}_i+ \mathbf{D}_{v, \mathcal{M}_\mathbf{Q}-\mathcal{M}_{\mathbf{Q}_i}},
\] 
so $F(\mathbf{P})=F(\mathbf{Q})$.\end{proof}

Below is an example showcasing how to reduce a profile $\mathbf{P}$ to its normal form $\mathbf{D}_{v, \mathcal{M}_{\mathbf{P}}}$ by adding and removing reversal pairs and applying preferential compensation.
\begin{example}\label{KeyEx}
Consider the following profiles $\mathbf{P}$ and $\mathbf{Q}$, which realize the same margin graph (recall Section \ref{Intro}):
\begin{center}
\begin{minipage}{1.9in}
\begin{center}
\begin{tabular}{ccccc}
$i_1,i_2$ & $i_3$ & $i_4$ & $i_5$ & $i_6$  \\\hline 
$d$ & \textcolor{blue}{$c$} & \textcolor{blue}{$b$} & $b$ & $c$  \\ 
$c$ & \textcolor{blue}{$a$} & \textcolor{blue}{$d$} & $d$ & $b$  \\ 
$b$ & \textcolor{blue}{$b$} & \textcolor{blue}{$a$} & $c$ & $d$  \\ 
$a$ & \textcolor{blue}{$d$} & \textcolor{blue}{$c$} & $a$ & $a$ 
\end{tabular}
\vspace{.1in}
\textbf{P}
\end{center}
\end{minipage}\begin{minipage}{1.9in}
\begin{center}
\begin{tabular}{cccc}
$j_1$ & $j_2,j_3$ & $j_4$ & $j_5,j_6$\\\hline 
$a$ & $b$ & $c$ & $d$\\ 
$c$ & $d$ & $b$ & $c$\\ 
$b$ & $c$ & $d$ & $b$\\ 
$d$ & $a$ & $a$ & $a$
\end{tabular}
\vspace{.1in}
\textbf{Q}
\end{center}
\end{minipage}\begin{minipage}{1.9in}
\begin{center}
\begin{tikzpicture}
\node[circle,draw,minimum width=0.25in] at (0,0)      (a) {$a$}; 
\node[circle,draw,minimum width=0.25in] at (3,0)      (b) {$b$}; 
\node[circle,draw,minimum width=0.25in] at (1.5,1.5)  (c) {$c$}; 
\node[circle,draw,minimum width=0.25in] at (1.5,-1.5) (d) {$d$};
\path[->,draw,thick] (b) to[pos=.7] node[fill=white] {$4$} (a);
\path[->,draw,thick] (c) to node[fill=white] {$4$} (a);
\path[->,draw,thick] (d) to node[fill=white] {$4$} (a);
\path[->,draw,thick] (c) to node[fill=white] {$2$} (b);
\path[->,draw,thick] (b) to node[fill=white] {$2$} (d);
\path[->,draw,thick] (d) to[pos=.7]  node[fill=white] {$2$} (c);
\end{tikzpicture}
\end{center}\end{minipage}

\end{center}

\noindent Now let $P$ and $P'$ be the rankings of $i_3$ and $i_4$, respectively, in $\mathbf{P}$. Note that $P$ is almost the reverse of $P'$; it would be the reverse of $P'$ if only $ca\textcolor{red}{bd}$ were instead $ca\textcolor{red}{db}$. To flip $bd$ in $P$ to $db$, following the strategy in the proof of Theorem \ref{LinChar}, we first add the reversal pair $(dbL, L^{-1}bd)$, where $L=ac$, obtaining
\begin{center}
\begin{tabular}{ccccccc}
$i_1,i_2$ & $i_3$ & $i_4$ & $i_5$ & $i_6$ & $v(bd,\top,1)$ & $v(bd,\bot,1)$  \\\hline 
$d$ & \textcolor{blue}{$c$} & \textcolor{blue}{$b$} & $b$ & $c$ & \textcolor{medgreen}{$d$} & \textcolor{medgreen}{$c$}  \\ 
$c$ & \textcolor{blue}{$a$} & \textcolor{blue}{$d$} & $d$ & $b$ & \textcolor{medgreen}{$b$} & \textcolor{medgreen}{$a$}  \\ 
$b$ & \textcolor{blue}{$b$} & \textcolor{blue}{$a$} & $c$ & $d$ & \textcolor{medgreen}{$a$} & \textcolor{medgreen}{$b$} \\ 
$a$ & \textcolor{blue}{$d$} & \textcolor{blue}{$c$} & $a$ & $a$ & \textcolor{medgreen}{$c$} & \textcolor{medgreen}{$d$}
\end{tabular}

\vspace{.1in}
$\mathbf{P}'$
\end{center}
and then flip $bd$ to $db$ in $P$ while flipping $dbL$ to $bdL$, obtaining
\begin{center}
\begin{tabular}{ccccccc}
$i_1,i_2$ & $i_3$ & $i_4$ & $i_5$ & $i_6$ & $v(bd,\top,1)$ & $v(bd,\bot,1)$  \\\hline 
$d$ & \textcolor{blue}{$c$} & \textcolor{blue}{$b$} & $b$ & $c$ & \textcolor{red}{$b$} & \textcolor{medgreen}{$c$}  \\ 
$c$ & \textcolor{blue}{$a$} & \textcolor{blue}{$d$} & $d$ & $b$ & \textcolor{red}{$d$} & \textcolor{medgreen}{$a$}  \\ 
$b$ & \textcolor{red}{$d$} & \textcolor{blue}{$a$} & $c$ & $d$ & \textcolor{medgreen}{$a$} & \textcolor{medgreen}{$b$}  \\ 
$a$ & \textcolor{red}{$b$} & \textcolor{blue}{$c$} & $a$ & $a$ & \textcolor{medgreen}{$c$} & \textcolor{medgreen}{$d$}
\end{tabular}

\vspace{.1in}
$\mathbf{P}''$
\end{center}
Now we can delete the reversal pair of $i_3$ and $i_4$, resulting in 
\begin{center}
\begin{tabular}{ccccc}
$i_1,i_2$  & $i_5$ & $i_6$ &$v(bd,\top,1)$ & $v(bd,\bot,1)$  \\\hline 
$d$   & $b$ & $c$ & \textcolor{red}{$b$} & \textcolor{medgreen}{$c$}  \\ 
$c$   & $d$ & $b$ & \textcolor{red}{$d$} & \textcolor{medgreen}{$a$}  \\ 
$b$ &   $c$ & $d$ & \textcolor{medgreen}{$a$} & \textcolor{medgreen}{$b$}  \\ 
$a$ &   $a$ & $a$ & \textcolor{medgreen}{$c$} & \textcolor{medgreen}{$d$}
\end{tabular}

\vspace{.1in}
$\mathbf{P}'''$
\end{center}
Thus, we have replaced $ P,P'$ with a McGarvey pair. In this case, the newly added voters end up having exactly the same ballots as the deleted $i_3$ and $i_4$. But this is necessary if we want to convert two profiles sharing the same margin matrix to exactly the same profile even using the same voters. In other cases, more steps are needed to transform an initial pair into a reversal pair, which can then be eliminated. For example, to turn $i_5$'s ballot into the reversal of $i_6$'s, we need to first swap $a$ with $c$ and then $d$ and then $b$, and then swap $d$ with $b$. These steps introduce $v(ca, \top, 1)$ and $v(ca, \bot, 1)$, $v(da, \top, 1)$ and $v(da, \bot, 1)$, $v(ba, \top, 1)$ and $v(ba, \bot, 1)$, and $v(bd, \top, 2)$ and $v(bd, \bot, 2)$ (since $v(bd, \top, 1)$ and $(bd, \bot, 1)$ have already been used). The process of adding McGarvey pairs may also create some quadruples of the form $G^\top_{xy}, G^\bot_{xy}, G^\top_{yx}, G^\bot_{yx}$, which can be eliminated at the end. Ultimately we obtain the following profile, $\mathbf{D}_{v, \mathcal{M}_\mathbf{P}}$, where the number $k$ above a ranking $L$ indicates the number of voters---whose identities are determined by the function $v$---submitting the ranking~$L$:
\begin{center}
\begin{tabular}{cccccccccccc}
$2$ & $2$ & $1$ & $1$ & $2$ & $2$ & $1$ & $1$ & $2$ & $2$ & $1$ & $1$\\\hline 
\textcolor{purple}{$b$} & $d$ & \textcolor{orange}{$b$} & $c$ & \textcolor{purple}{$c$} & $d$ & \textcolor{orange}{$c$} & $d$ & \textcolor{purple}{$d$} & $c$ & \textcolor{orange}{$d$} & $b$\\ 
\textcolor{purple}{$a$} & $c$ & \textcolor{orange}{$d$} & $a$ & \textcolor{purple}{$a$} & $b$ & \textcolor{orange}{$b$} & $a$ & \textcolor{purple}{$a$} & $b$ & \textcolor{orange}{$c$} & $a$\\ 
$c$ & \textcolor{purple}{$b$} & $a$ & \textcolor{orange}{$b$} & $b$ & \textcolor{purple}{$c$} & $a$ & \textcolor{orange}{$c$} & $b$ & \textcolor{purple}{$d$} & $a$ & \textcolor{orange}{$d$}\\ 
$d$ & \textcolor{purple}{$a$} & $c$ & \textcolor{orange}{$d$} & $d$ & \textcolor{purple}{$a$} & $d$ & \textcolor{orange}{$b$} & $c$ & \textcolor{purple}{$a$} & $b$ & \textcolor{orange}{$c$}
\end{tabular}
\end{center}

\noindent The same process starting from $\mathbf{Q}$ leads to $\mathbf{D}_{v,\mathcal{M}_\mathbf{Q}}=\mathbf{D}_{v,\mathcal{M}_\mathbf{P}}$.
\end{example}

\subsection{Weakening Neutral Reversal to Block Invariance}

 A natural question is whether we can modify Theorem \ref{LinChar} by weakening Neutral Reversal to the following axiom of Block Invariance (\citealt{HP2025}; cf.~\citealt[Def.~4]{Merlin2003}).

 \begin{definition} A voting rule $F$ satisfies \textit{Block Invariance} if for any $\mathbf{P},\mathbf{P}'\in \mathrm{dom}(F)$ with $X(\mathbf{P})=X(\mathbf{P}')$, if $\mathbf{P}'$ is obtained from $\mathbf{P}$ by adding, for each linear order $L$ of  $X(\mathbf{P})$, exactly one voter submitting $L$ as their ranking, then $F(\mathbf{P})=F(\mathbf{P}')$.

 \end{definition}
 
\noindent Unlike Neutral Reversal, Block Invariance is satisfied by the Plurality voting rule and IRV rule. However, the answer is that we cannot weaken Theorem \ref{LinChar} by replacing Neutral Reversal with Block Invariance, as shown by the following example.
 
 \begin{example} Fix some sequence of margin-based voting rules $F_0,F_1,\dots$. Then define a new voting rule $F$ as follows: if $|V(\mathbf{P})|$ is congruent to $k$ modulo $|X(\mathbf{P})|!$, for $k\in \{0,\dots,|X(\mathbf{P})|!-1\}$, then $F(\mathbf{P})=F_k(\mathbf{P})$. Since each $F_k$ is margin-based, $F$ clearly satisfies Preferential Equality, whose definition does not involve a change in the number of voters. Moreover, adding to $\mathbf{P}$ a block of all linear orders of $X(\mathbf{P})$ results in a profile that is in the same congruence class modulo $|X(\mathbf{P})|!$ as $\mathbf{P}$ is and has the same margins, so $F$ satisfies Block Invariance. But it is easy to choose the sequence $F_0,F_1,\dots$ so that $F$ is not margin-based, because whether the number of voters in a profile $\mathbf{P}$ falls into a particular congruence class modulo $|X(\mathbf{P})|!$ is not uniquely determined by $\mathcal{M}_\mathbf{P}$.\end{example}

Despite this example, if we avail ourselves of the normatively highly compelling axiom of \emph{Homogeneity} (\citealt{Smith1973}), we can indeed weaken Neutral Reversal to Block Invariance. For the definition of Homogeneity, fix a linear order of $\mathcal{V}$, which we call the \textit{alphabetic order} of $\mathcal{V}$.

\begin{definition} Given a profile $\mathbf{P}$ and $n \ge 1$, let $n\mathbf{P}=\mathbf{P}+\mathbf{P}'$ where $\mathbf{P}'$ is $n-1$ copies of $\mathbf{P}$ using the first $(n-1)|V(\mathbf{P})|$ voters from $\mathcal{V}\setminus V(\mathbf{P})$ according to the alphabetic order of $\mathcal{V}$. A voting rule satisfies \emph{Homogeneity} and is called \textit{homogeneous} if for any profile $\mathbf{P}\in\mathrm{dom}(F)$, we have $F(\mathbf{P})=F(n\mathbf{P})$ for all $n \ge 2$.
\end{definition}

 \noindent One of the only voting rules discussed in the literature that violates Homogeneity is the Dodgson rule (see \citealt{Brandt2009}), but this is considered a pathology of the rule. It would be hard to justify to voters that if every voter brought a twin to the election who submitted the same ballot, this could change the election outcome.

To see that not \textit{all} margin-based voting rules satisfy Homogeneity, we can cook up an artificial example like the following, which is easily verified.

\begin{fact}\label{Inhomogeneous} Consider the voting rule $F$ defined as follows: if all margins in the profile $\mathbf{P}$ are below some threshold $\lambda$, then use some margin-based voting rule, e.g., Beat Path (\citealt{Schulze2011}); otherwise use some other such rule, e.g., Ranked Pairs (\citealt{Tideman1987}). Then $F$ is margin-based but does not satisfy Homogeneity.\end{fact}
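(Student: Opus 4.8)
The plan is to exhibit a single voting rule $F$ that is margin-based yet fails Homogeneity, and then verify the two required properties directly. The construction follows the recipe of making the behavior depend on the \emph{magnitude} of margins in a way that is invariant under margin-equality (so margin-basedness is automatic) but not invariant under doubling (so Homogeneity fails). Concretely, fix a threshold $\lambda \in \mathbb{N}$ and two genuinely distinct margin-based rules $G$ and $G'$, say Beat Path and Ranked Pairs, chosen so that they disagree on at least one profile. Define $F(\mathbf{P}) = G(\mathbf{P})$ if $\max_{x,y \in X(\mathbf{P})} |\mathcal{M}_\mathbf{P}(x,y)| < \lambda$, and $F(\mathbf{P}) = G'(\mathbf{P})$ otherwise.

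First I would check that $F$ is margin-based. Suppose $\mathcal{M}_\mathbf{P} = \mathcal{M}_\mathbf{Q}$. Then the quantity $\max_{x,y}|\mathcal{M}_\mathbf{P}(x,y)|$ equals $\max_{x,y}|\mathcal{M}_\mathbf{Q}(x,y)|$, so $\mathbf{P}$ and $\mathbf{Q}$ land in the same case of the definition; since $G$ and $G'$ are each margin-based, whichever branch applies gives $F(\mathbf{P}) = F(\mathbf{Q})$. This step is routine because the case-selection criterion is itself a function of the margin matrix alone.

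Second I would check that Homogeneity fails. The key observation is that doubling a profile doubles every margin: $\mathcal{M}_{2\mathbf{P}} = 2\mathcal{M}_\mathbf{P}$. Thus a profile whose maximum absolute margin is just below $\lambda$ can be pushed to have maximum margin at or above $\lambda$ by doubling, moving it from the $G$-branch to the $G'$-branch. So I would pick a specific profile $\mathbf{P}$ on which $G$ and $G'$ disagree (such a profile exists since Beat Path and Ranked Pairs are distinct rules), and then, if necessary, scale it so that its largest absolute margin $M$ satisfies $M < \lambda \le 2M$ for the chosen $\lambda$ (equivalently, choose $\lambda$ in the interval $(M, 2M]$ after fixing the disagreement profile). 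For such $\mathbf{P}$ we get $F(\mathbf{P}) = G(\mathbf{P}) \ne G'(\mathbf{P}) = F(2\mathbf{P})$, witnessing the violation of Homogeneity.

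The only genuine obstacle is the final calibration: I must guarantee the existence of a disagreement profile $\mathbf{P}$ whose maximum margin $M$ is strictly positive and for which some threshold $\lambda$ separates $M$ from $2M$. Since any disagreement between two Condorcet-consistent margin-based rules occurs on a profile with a nontrivial (hence cyclic, hence nonzero-margin) structure, $M \ge 1$, and then the half-open interval $(M, 2M]$ is nonempty, so a suitable $\lambda$ always exists; equivalently, one fixes $\lambda$ first and multiplies $\mathbf{P}$ by an integer factor to slide its margins into the required window, using $\mathcal{M}_{n\mathbf{P}} = n\mathcal{M}_\mathbf{P}$. Everything else is immediate from $\mathcal{M}_{2\mathbf{P}} = 2\mathcal{M}_\mathbf{P}$ and the margin-basedness of $G$ and $G'$, so no further computation is needed.
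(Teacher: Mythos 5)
Your construction is exactly the one in the paper's Fact~4.8, which the paper leaves unproved (``easily verified''), and your two verifications are the intended ones: the branch test depends only on $\mathcal{M}_\mathbf{P}$, so margin-basedness is inherited from $G$ and $G'$, and doubling doubles margins, which can push a profile across the threshold $\lambda$.

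There is, however, one small but genuine slip in your final step. You conclude $F(\mathbf{P}) = G(\mathbf{P}) \ne G'(\mathbf{P}) = F(2\mathbf{P})$ and claim this is ``immediate from $\mathcal{M}_{2\mathbf{P}} = 2\mathcal{M}_\mathbf{P}$ and the margin-basedness of $G$ and $G'$.'' That is not quite right: what you actually have is $F(2\mathbf{P}) = G'(2\mathbf{P})$, and margin-basedness of $G'$ does \emph{not} give $G'(2\mathbf{P}) = G'(\mathbf{P})$, since $2\mathcal{M}_\mathbf{P} \ne \mathcal{M}_\mathbf{P}$ --- indeed, the whole point of the Fact is that a margin-based rule need not be homogeneous, so you cannot appeal to margin-basedness here. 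The missing ingredient is that the two constituent rules are themselves homogeneous (Beat Path and Ranked Pairs in fact satisfy the stronger property of ordinal margin invariance, mentioned in the paper's conclusion), which gives $G'(2\mathbf{P}) = G'(\mathbf{P})$; the same property of $G$ and $G'$ is also needed for your calibration step, where you replace $\mathbf{P}$ by $n\mathbf{P}$ and tacitly assume the disagreement $G(n\mathbf{P}) \ne G'(n\mathbf{P})$ persists. With that one-sentence patch --- ``since $G$ and $G'$ are homogeneous, $G(n\mathbf{P}) = G(\mathbf{P}) \ne G'(\mathbf{P}) = G'(n\mathbf{P})$ for all $n \ge 1$'' --- your argument is complete and correct.
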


Let us now prove that by assuming Homogeneity, we can weaken Neutral Reversal to Block Invariance in the characterization of margin-based rules. The proof is a simple modification of the proof using Neutral Reversal.

\begin{theorem}\label{BlockInvarianceThm}
  Let $F$ be a voting rule satisfying Homogeneity. Then $F$ is margin-based if and only if it satisfies Preferential Equality and Block Invariance.
\end{theorem}
\begin{proof}
  For the non-trivial direction, suppose we have profiles $\mathbf{P}$ and $\mathbf{Q}$ such that $\mathcal{M}_\mathbf{P} = \mathcal{M}_\mathbf{Q}$.
  Let $n = |X(\mathbf{P})| = |X(\mathbf{Q})|$, $\mathbf{P}' = n!\mathbf{P}$, and $\mathbf{Q}' = n!\mathbf{Q}$.
  Then again $\mathcal{M}_{\mathbf{P}'} = \mathcal{M}_{\mathbf{Q}'}$, but more importantly, both $|V(\mathbf{P}')|$ and $|V(\mathbf{Q}')|$ are multiples of $n!$. By Homogeneity, $F(\mathbf{P})=F(\mathbf{Q})$ iff $F(\mathbf{P}')=F(\mathbf{Q}')$, so it suffices to prove the latter to show that $F$ is margin-based.

  Our first step is to make the voters in $\mathbf{P}'$ and $\mathbf{Q}'$ vote blocks of all possible linear orders of the candidates and then use Block Invariance to eliminate them.
  In our proof of Theorem \ref{LinChar}, we used Neutral Reversal for this step.
  To use only Block Invariance, when introducing the two new designated $xy$-voters with the reversal rankings $yxL$ and $L^{-1}xy$, we also introduce new voters for all other linear orders of the candidates, so that a new block is introduced.
  Call this block a \emph{designated-$xy$-block}.
  To formally track the identity of these added voters, we again use a $v$ function, now of the type $X^{(2)} \times \mathsf{Lin}(X) \times \mathbb{N}_{>0} \to \mathcal{V} \setminus (V(\mathbf{P}') \cup V(\mathbf{Q}'))$, where $\mathsf{Lin}(X)$ is the set of all linear orders on $X$.
  That is, we are replacing $\{\top, \bot\}$ with $\mathsf{Lin}(X)$.
  A designated-$xy$-block is then a set $\{v(xy, R, k) \mid R \in \mathsf{Lin}(X)\}$ for some $k$ (the $k$-th designated-$xy$-block).
  The voter $v(xy, R, k)$ will always vote $R$ for any $R \in \mathsf{Lin}(X)$ except when $R = yxL$, where again $L$ is the alphabetical order on $X \setminus \{x, y\}$.
  When introducing such a block without changing the outcome of the election, we let $v(xy, yxL, k)$ vote $yxL$.
  Then, after a use of Preferential Compensation (recall Lemma \ref{PrefEq2}) to swap $xy$ to $yx$ in the ballot of an old voter, the voter $v(xy, yxL, k)$ votes $xyL$, and now these $n!$ voters in $\{v(xy, R, k) \mid R \in \mathsf{Lin}(X)\}$ form a profile whose margin matrix is equal to that of the profile with just the McGarvey pair $xyL$ and $L^{-1}xy$, i.e., with a margin of $2$ for $xy$, $-2$ for $yx$, and $0$ for all other pairs.
  After all the old voters are eliminated, we then have to make sure that no designated-$xy$-blocks and designated-$yx$-blocks exist simultaneously.
  Observe that if we have both a designated-$xy$-block and a designated-$yx$-block, then they in fact form two blocks of all linear orders of the candidates, since $v(xy, yxL, k)$ and $v(yx, xyL, k')$ now vote each other's original ballot, namely $xyL$ and $yxL$, from before we used Preferential Compensation.
  Thus, they can be removed together without changing the outcome by Block Invariance.
  If all voters are removed, we then add back a fixed block of all linear orders not using any of the original voters in either $\mathbf{P}$ and $\mathbf{Q}$.
  In exactly the same way as we proved Lemma \ref{KeyLem}, we can prove that there is only one profile to realize a given non-zero margin matrix $m$ of even parity using these designated blocks in the order given by the third index $k$ without redundancy.
  Thus, $\mathbf{P}'$ and $\mathbf{Q}'$ can be turned into the same profile using only Preferential Compensation and Block Invariance, so $F(\mathbf{P}')=F(\mathbf{Q}')$, as desired.
\end{proof}

\section{Characterization for all profiles, modulo homogeneity}\label{StrictWeakProfs}

In this section, we consider the domain of all profiles, as well as a domain intermediate between linear profiles and all profiles. Say that a strict weak order $P$ on $X$ is a \textit{linear order with bottom indifference} if whenever candidates $x$ and $y$ are ranked in a tie (i.e., neither $(x,y)\in P$ nor $(y,x)\in P$), then $x$ and $y$ are not ranked strictly above any candidates. Note that every linear order is considered a linear order with bottom indifference, even though it contains no indifference. Note also that the strict weak order where all candidates are tied together (represented by the empty set) is a linear order with bottom indifference as well. \begin{definition} Let $\mathsf{LOBI}$ be the domain of profiles in which each voter is assigned a linear order with bottom indifference.
\end{definition}
\noindent The importance of this domain is that in real political elections with ranked ballots, voters are not required to rank all the candidates. To turn such an election into one of our profiles, it is standard to consider all candidates unranked by a voter as being in a big tie below all candidates ranked by the voter. Thus, real political elections yield profiles in the $\mathsf{LOBI}$ domain. Moreover, some voting rules used in real political elections, such as IRV, are traditionally defined only for the $\mathsf{LOBI}$ domain (though generalizations have been considered, e.g., in \citealt{Delemazure2024}). Finally, as we will see (in Theorem \ref{StrictWeakChar}), the axioms sufficient to force the use of a margin-based voting rule when the domain is assumed to be the domain of all profiles might not be sufficient to force the use of a margin-based voting rule when the domain is assumed to be the $\mathsf{LOBI}$ domain of real political elections, so for normative purposes, we must analyze the $\mathsf{LOBI}$ case separately.

There is another possible but less common interpretation of unranked candidates, namely that they are \textit{noncomparable} with all ranked candidates. In the Appendix, we give characterizations of margin-based voting rules on a larger domain of ``profiles'' (now using a different definition than in Section~\ref{Intro}) in which voters are allowed to have noncomparabilities and hence are not required to submit strict weak orders.

\subsection{Tiebreaking Compensation}

Now that we are allowing ties in rankings (even if only at the bottom in $\mathsf{LOBI}$), we introduce an axiom motivated by the idea that each voter should have equal tiebreaking power.

\begin{definition} 
  An \emph{indifference class} in a strict weak order $P$ on $X$ is a nonempty set $Y \subseteq X$ such that for any $x, y \in Y$, $(x, y) \not\in P$, while for any $z \in X \setminus Y$ and $y \in Y$, $(z, y) \in P$ or $(y, z) \in P$. It is well known that indifference classes are the equivalence classes of the indifference equivalence relation $\sim_P$ defined by $x \sim_P y$ iff neither $(x, y)$ nor $(y, x)$ is in $P$. By a \emph{tie}, we mean a non-singleton indifference class.
      
  A voting rule $F$ satisfies Tiebreaking Compensation if for all profiles $\mathbf{Q},\mathbf{Q}'\in\mathrm{dom}(F)$, if in $\mathbf{Q}$ there are two voters $i$ and $j$ and a set $Y$ of candidates which is an indifference class for both $i$ and $j$, and $\mathbf{Q}'$ is obtained from $\mathbf{Q}$ by $i$ replacing the tie $Y$ by some linear order $L$ of $Y$ and $j$ replacing the tie $Y$ by $L^{-1}$, then $F(\mathbf{Q})=F(\mathbf{Q}')$.
\end{definition}

\noindent Thus, two voters breaking a tie between the same set of candidates but in opposite ways does not change the outcome, reflecting the idea that their opposition in tiebreaking balances out.

For voting rules on the domain of all profiles, Tiebreaking Compensation implies our earlier Preferential Compensation and hence Preferential Equality by Lemma~\ref{PrefEq2}.

\begin{lemma}\label{CompLem} If a voting rule on the domain of all profiles satisfies Tiebreaking Compensation, then it satisfies Preferential Compensation and hence Preferential Equality.
\end{lemma}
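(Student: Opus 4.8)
The plan is to prove Lemma~\ref{CompLem} by showing that a single application of Preferential Compensation can be simulated using one application of Tiebreaking Compensation, via a two-element tie as the intermediary. The key observation is that Preferential Compensation concerns two voters flipping an adjacent pair $\{x,y\}$ in opposite directions, while Tiebreaking Compensation concerns two voters breaking a genuine tie $Y$ in opposite directions. So the difficulty is bridging the gap between an \emph{ordered} adjacent pair and an \emph{unordered} tie, and the natural bridge is the intermediate profile in which those two voters each place $\{x,y\}$ in a tie.

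\begin{proof}
Suppose $F$ is on the domain of all profiles and satisfies Tiebreaking Compensation. To show Preferential Compensation, let $\mathbf{P},\mathbf{P}'\in\mathrm{dom}(F)$ be such that $\mathbf{P}'$ is obtained from $\mathbf{P}$ by a voter $i$ who ranks $x$ immediately above $y$ in $\mathbf{P}$ switching to rank $y$ immediately above $x$, and a voter $j$ who ranks $y$ immediately above $x$ in $\mathbf{P}$ switching to rank $x$ immediately above $y$. Let $\mathbf{Q}$ be obtained from $\mathbf{P}$ by having both $i$ and $j$ replace their ordering of the adjacent pair $\{x,y\}$ with a tie between $x$ and $y$ (leaving everything else in their rankings, and all other voters, unchanged). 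Since the domain is that of all profiles, $\mathbf{Q}\in\mathrm{dom}(F)$. In $\mathbf{Q}$, both $i$ and $j$ rank the two-element set $Y=\{x,y\}$ as a tie.

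Now apply Tiebreaking Compensation to $\mathbf{Q}$ with this set $Y$, the linear order $L$ of $Y$ given by $x$ above $y$, and its reverse $L^{-1}$ given by $y$ above $x$. Having $i$ break the tie by $L^{-1}$ (i.e.\ $y$ above $x$) and $j$ break it by $L$ (i.e.\ $x$ above $y$) yields exactly the profile $\mathbf{P}'$; having $i$ break it by $L$ and $j$ by $L^{-1}$ yields exactly $\mathbf{P}$. Both of these are single instances of the operation described in Tiebreaking Compensation applied to $\mathbf{Q}$, so Tiebreaking Compensation gives $F(\mathbf{Q})=F(\mathbf{P})$ and $F(\mathbf{Q})=F(\mathbf{P}')$, whence $F(\mathbf{P})=F(\mathbf{P}')$. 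This establishes Preferential Compensation, and Preferential Equality then follows by Lemma~\ref{PrefEq2}.
\end{proof}

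The main obstacle to watch is purely bookkeeping: one must check that the tie $Y=\{x,y\}$ in $\mathbf{Q}$ really is an indifference class situated so that breaking it by $L$ or $L^{-1}$ reproduces precisely $\mathbf{P}$ or $\mathbf{P}'$ with no side effects on the placement of $x,y$ relative to the other candidates. Because $x$ was \emph{immediately} above $y$ (or vice versa) for both voters, collapsing that adjacent pair into a tie and then reordering it touches nothing else in either ranking, so the resulting strict weak orders are exactly the intended ones. No richness or domain-closure hypothesis beyond ``all profiles'' is needed, since we invoke Tiebreaking Compensation directly on $\mathbf{Q}$ rather than chaining through an auxiliary profile as in Lemma~\ref{PrefEq2}.
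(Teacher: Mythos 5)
Your proof is correct and takes essentially the same route as the paper's: both construct the intermediate profile $\mathbf{Q}$ in which the two voters place $\{x,y\}$ in a tie, then apply Tiebreaking Compensation twice (recovering $\mathbf{P}$ and $\mathbf{P}'$ as the two opposite tiebreakings) to conclude $F(\mathbf{P})=F(\mathbf{Q})=F(\mathbf{P}')$, with Preferential Equality then following via Lemma~\ref{PrefEq2}. Your closing bookkeeping remark---that ``immediately above'' ensures the collapse creates exactly the indifference class $\{x,y\}$ with no side effects---is a point the paper handles only implicitly, so it is a welcome addition rather than a deviation.
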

\begin{proof}  Let $\mathbf{P},\mathbf{P}'$ be as in the definition of Preferential Compensation in Lemma \ref{PrefEq2}, so $\mathbf{P}'$ is obtained from $\mathbf{P}$ by one voter switching $xy$ to $yx$ while another switches $yx$ to $xy$. Let $\mathbf{Q}$ be the profile exactly like $\mathbf{P}$ except that both voters rank $\{x, y\}$ in a tie (and the only tie since $\mathbf{P}$ has only linear ballots). Then setting $\mathbf{Q}':=\mathbf{P}$, we obtain $F(\mathbf{Q})=F(\mathbf{P})$ by Tiebreaking Compensation with $Y=\{x,y\}$. Similarly, by setting $\mathbf{Q}':=\mathbf{P}'$, we obtain $F(\mathbf{Q})=F(\mathbf{P}')$ by Tiebreaking Compensation. Hence  $F(\mathbf{P})=F(\mathbf{P}')$, as desired. Then $F$ also satisfies Preferential Equality by Lemma \ref{PrefEq2}.
\end{proof}
\noindent Note that since the profile $\mathbf{Q}$ used in the proof might not belong to the domain $\mathsf{LOBI}$, the same argument does not work for voting rules whose domain is $\mathsf{LOBI}$. Indeed, the implication does not hold for $\mathsf{LOBI}$, as shown by the following.

\begin{fact} Let $F$ be the voting rule defined on $\mathsf{LOBI}$ as follows: if the profile contains a tie or two voters who rank different candidates in last place, then $F$ outputs all candidates; otherwise $F$ outputs the Plurality winners. Then $F$ satisfies Tiebreaking Compensation\footnote{In the definition of Tiebreaking Compensation, $\mathbf{Q}$ has at least a tie, and since we assume here that $\mathbf{Q} \in \mathsf{LOBI}$, $\mathbf{Q}'$ contains two voters who rank different candidates in last place, so $F(\mathbf{Q})=X(\mathbf{Q})=X(\mathbf{Q}')=F(\mathbf{Q}')$.} but not Preferential Compensation or Preferential Equality (since Plurality does not satisfy these axioms).\end{fact}

Next observe that Tiebreaking Compensation, like Preferential Compensation, is independent of Neutral Reversal. The following is easy to verify.

\begin{fact}$\,$
\begin{enumerate}
\item The strict Pareto rule on the domain of all profiles satisfies Tiebreaking Compensation (and Preferential Compensation) but not Neutral Reversal.
\item The version of Minimax based on ``winning votes'' satisfies Neutral Reversal and Preferential Compensation but not Tiebreaking Compensation (e.g., taking two of the voters in Figure \ref{MarginsVsWinningVotes} who are indifferent between $a$ and $c$ and having them break the tie in opposite ways changes the set of winners from $\{a\}$ to $\{a,c\}$). 
\end{enumerate}
\end{fact}

\subsection{Neutral Indifference}\label{NeutInd}

The second axiom simply says that adding a voter with a fully indifferent ranking does not change the election outcome.  Technically the fully indifferent strict weak order is the empty relation $\varnothing$. We will also call such a ranking an ``empty ballot.''

\begin{definition} A voting rule $F$ satisfies Neutral Indifference if for any $\mathbf{P},\mathbf{P}'\in\mathrm{dom}(F)$, if $\mathbf{P}'$ is obtained from $\mathbf{P}$ by adding one voter who submits a fully indifferent ranking, then $F(\mathbf{P})=F(\mathbf{P}')$.
\end{definition}

\noindent This axiom captures the distinction between C2 rules and arbitrary head-to-head rules.

\begin{proposition}\label{C2Char} A head-to-head voting rule on the domain of all profiles (resp.~on $\mathsf{LOBI}$) is C2 if and only if it satisfies Neutral Indifference.
\end{proposition}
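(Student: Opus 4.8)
The plan is to exploit the single piece of information that separates $\mathcal{H}(\mathbf{P})=(\#_\mathbf{P},|V(\mathbf{P})|)$ from $\#_\mathbf{P}$ alone, namely the total number of voters, and to observe that adding a fully indifferent ballot is exactly the operation that increments this count while leaving $\#_\mathbf{P}$ untouched (since an empty ballot ranks no candidate above any other, it contributes $0$ to every value $\#_\mathbf{P}(a,b)$, yet raises $|V(\mathbf{P})|$ by one).

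For the easy direction, I would show C2 $\Rightarrow$ Neutral Indifference, which does not even require the head-to-head hypothesis. If $\mathbf{P}'$ is obtained from $\mathbf{P}$ by adding one empty ballot, then $\#_{\mathbf{P}'}=\#_\mathbf{P}$, so C2 gives $F(\mathbf{P})=F(\mathbf{P}')$ directly.

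For the converse, assume $F$ is head-to-head and satisfies Neutral Indifference, and suppose $\#_\mathbf{P}=\#_{\mathbf{P}'}$; note this forces $X(\mathbf{P})=X(\mathbf{P}')$ since $\#_\mathbf{P}$ and $\#_{\mathbf{P}'}$ then share the domain $X^2$. Without loss of generality take $|V(\mathbf{P})|\le|V(\mathbf{P}')|$ and set $d=|V(\mathbf{P}')|-|V(\mathbf{P})|$. Using fresh voters from the infinite set $\mathcal{V}$, add $d$ empty ballots to $\mathbf{P}$ to form $\mathbf{P}''$. Then $\#_{\mathbf{P}''}=\#_\mathbf{P}=\#_{\mathbf{P}'}$ and $|V(\mathbf{P}'')|=|V(\mathbf{P})|+d=|V(\mathbf{P}')|$, so $\mathcal{H}(\mathbf{P}'')=\mathcal{H}(\mathbf{P}')$. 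The head-to-head property yields $F(\mathbf{P}'')=F(\mathbf{P}')$, while $d$ applications of Neutral Indifference yield $F(\mathbf{P})=F(\mathbf{P}'')$; chaining these gives $F(\mathbf{P})=F(\mathbf{P}')$, which is C2.

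There is essentially no hard obstacle here, but the one point requiring a remark is domain closure: the padding argument needs the domain to be closed under adding a fully indifferent ballot. This holds for the domain of all profiles trivially, and it also holds for $\mathsf{LOBI}$ because, as noted in the definition of that domain, the all-ties strict weak order (the empty relation) counts as a linear order with bottom indifference. Hence the argument covers both the ``all profiles'' and the $\mathsf{LOBI}$ cases uniformly, establishing the stated equivalence.
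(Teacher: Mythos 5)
Your proof is correct and follows essentially the same route as the paper's: the easy direction is immediate, and for the converse you pad the smaller profile with fully indifferent ballots to equalize $|V(\mathbf{P})|$ and then invoke the head-to-head property, exactly as the paper does. Your added remark on domain closure for $\mathsf{LOBI}$ (the all-ties relation being a linear order with bottom indifference) is a point the paper handles only implicitly, and it is handled correctly.
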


\begin{proof} The left-to-right direction is immediate. From right to left, assume a head-to-head voting rule $F$ satisfies Neutral Indifference, and consider profiles $\mathbf{P},\mathbf{P}'$ such that $\#_\mathbf{P}=\#_{\mathbf{P}'}$. If in addition $|V(\mathbf{P})|=|V(\mathbf{P}')|$, then $\mathcal{H}(\mathbf{P}) =\mathcal{H}(\mathbf{P}')$, so by the head-to-head assumption, $F(\mathbf{P})=F(\mathbf{P}')$. If one profile, say $\mathbf{P}$, has fewer voters, then add voters with fully indifferent ballots until we obtain a profile $\mathbf{P}^+$ with the same number of voters as $\mathbf{P}'$. Then by Neutral Indifference, $F(\mathbf{P})=F(\mathbf{P}^+)$, and by the head-to-head assumption, $F(\mathbf{P}^+)=F(\mathbf{P}')$, so $F(\mathbf{P})=F(\mathbf{P}')$, which shows that $F$ is~C2.\end{proof}

Note that the strict Pareto rule violates Neutral Indifference, while the version of Minimax based on winning votes satisfies it.

\subsection{Proof of characterization}

We are now ready to state and prove our characterizations for the additional domains.

\begin{theorem}\label{StrictWeakChar} Let $F$ be a voting rule satisfying Homogeneity.\footnote{In fact, for this theorem, it suffices to assume the weaker form of Homogeneity that only requires $F(\mathbf{P})=F(2\mathbf{P})$.}
\begin{enumerate}
    \item\label{StrictWeakChar2} If the domain of $F$ is $\mathsf{LOBI}$, then $F$ is margin-based if and only if $F$ satisfies Preferential Equality, Tiebreaking Compensation, and Neutral Indifference.
    \item\label{StrictWeakChar1} If the domain of $F$ is the domain of all profiles, then $F$ is margin-based if and only if $F$ satisfies Tiebreaking Compensation and Neutral Indifference.
\end{enumerate}
\end{theorem}
\begin{proof} If a profile contains a ranking with a tie, duplicate the profile, so we have two copies of the ranking, and then linearize the previously tied part in opposite ways. By Homogeneity and Tiebreaking Compensation, this does not change the output of $F$. In this way, we can change any two initial profiles into linear profiles. Then we reason exactly as in the proof of Theorem~\ref{LinChar} using Preferential Equality, which is available in part~\ref{StrictWeakChar1} thanks to Lemma~\ref{CompLem}, and Neutral Reversal, which is a consequence of Neutral Indifference (applied twice) together with Tiebreaking Compensation.\end{proof}

\begin{remark}\label{Replacements} In part \ref{StrictWeakChar2} of Theorem \ref{StrictWeakChar}, we can replace Tiebreaking Compensation by what could be called Pure Tiebreaking Compensation: if two voters have the \textit{same} ranking containing a tie $Y$, then changing the profile so that each voter linearizes $Y$ in opposite ways does not change the output of the voting rule. For in the proof of Theorem \ref{StrictWeakChar}.\ref{StrictWeakChar2}, we apply Tiebreaking Compensation to two copies of the same ranking. We could also replace Tiebreaking Compensation with Pure Tiebreaking Compensation in part \ref{StrictWeakChar1} of Theorem \ref{StrictWeakChar}, provided we add Preferential Equality, since unlike Tiebreaking Compensation, Pure Tiebreaking Compensation does not imply Preferential Equality for voting rules on the domain of all profiles.\footnote{Given $Y\subsetneq X$ and rankings $P$ and $P'$ of $X$, say that $P$ and $P'$ \textit{agree outside $Y$}, denoted $P\sim_Y P'$, if (a)~for all $x\in X\setminus Y$, either $xPy$ for all $y\in Y$ or $yPx$ for all $y\in Y$, and either $xP'y$ for all $y\in Y$ or $yP'x$ for all $y\in Y$,  and (b)~$P\setminus Y^2= P'\setminus Y^2$. Let $\mathbb{P}$ be the set of profiles $\mathbf{P}$ such that there are two voters whose rankings $P$ and $P'$ agree outside some $Y\subsetneq X(\mathbf{P})$ or are such that $P'=P^{-1}$. Let $F$ be the voting rule such that if $\mathbf{P}\in\mathbb{P}$, then $F(\mathbf{P})=X(\mathbf{P})$, and otherwise $F(\mathbf{P})$ is the set of Plurality winners. Observe that if $\mathbf{Q}'$ is obtained from $\mathbf{Q}$ by an application of Pure Tiebreaking Compensation, then $\mathbf{Q},\mathbf{Q}'\in\mathbb{P}$, so $F$ satisfies Pure Tiebreaking Compensation. However, $F$ violates Preferential Equality for the profile $\mathbf{P}$ with just two ballots, $abcde$ and $dabec$, since switching $ab$ to $ba$ in the first ballot vs.~the second ballot leads to profiles that are not in $\mathbb{P}$ and have different Plurality winners.}\end{remark}

To see that we cannot simply drop Homogeneity from the assumption of the theorem, consider the following example, whose verification is obvious (given that in a profile allowing ties, we cannot determine the parity of the number of non-indifferent voters from the margin information).

\begin{fact}\label{EvenOdd} Let $F$ be the following voting rule that violates Homogeneity: if the number of voters who are not fully indifferent is even, then use some margin-based voting rule, e.g., Beat Path (\citealt{Schulze2011}); if that number is odd, then use some other such rule, e.g., Ranked Pairs (\citealt{Tideman1987}). Then $F$ satisfies Tiebreaking Compensation and Neutral Indifference, but it is not margin-based.
\end{fact}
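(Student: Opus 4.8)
The statement to prove is Fact~\ref{EvenOdd}, which asserts that the parity-dependent rule $F$ (using Beat Path for even profiles, Ranked Pairs for odd profiles) satisfies Tiebreaking Compensation and Neutral Reversal but is not margin-based. The plan is to verify each of the three claims in turn, relying on the fact that both component rules are themselves margin-based.

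First I would handle Tiebreaking Compensation and Neutral Reversal together by observing their shared structure. In Tiebreaking Compensation, we pass from $\mathbf{Q}$ to $\mathbf{Q}'$ by having two voters linearize the same tied set $Y$ in opposite ways; crucially, the number of voters is unchanged, so $\mathbf{Q}$ and $\mathbf{Q}'$ have the same parity, and hence $F$ applies the \emph{same} component rule to both. Since that component rule (whether Beat Path or Ranked Pairs) is margin-based, it suffices to check that $\mathcal{M}_{\mathbf{Q}}=\mathcal{M}_{\mathbf{Q}'}$: two voters breaking the tie $Y$ via $L$ and $L^{-1}$ contribute canceling margin changes on every pair within $Y$ and no change on pairs involving candidates outside $Y$ (since both voters keep $Y$ in the same position relative to the rest). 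Thus $F(\mathbf{Q})=F(\mathbf{Q}')$.

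The Neutral Reversal case is essentially the same but requires one extra observation about parity. Adding a reversal pair $(L,L^{-1})$ increases the voter count by exactly $2$, so parity is again preserved and $F$ again invokes the same component rule on $\mathbf{P}$ and $\mathbf{P}'$. Since a reversal pair contributes zero net margin on every pair of candidates, $\mathcal{M}_{\mathbf{P}}=\mathcal{M}_{\mathbf{P}'}$, and the margin-basedness of the common component rule gives $F(\mathbf{P})=F(\mathbf{P}')$. The key point underlying both verifications is that every operation in these two axioms changes the voter count by an even amount (zero for Tiebreaking Compensation, two for Neutral Reversal), so parity never flips and $F$ behaves locally like a genuine margin-based rule.

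Finally, to show $F$ is not margin-based, I would exhibit two profiles $\mathbf{P},\mathbf{Q}$ with $\mathcal{M}_{\mathbf{P}}=\mathcal{M}_{\mathbf{Q}}$ but with \emph{different parities} of voter count, on which Beat Path and Ranked Pairs disagree. Since ties are allowed, one can realize a fixed margin matrix with either an even or an odd number of voters---for instance, by padding one profile with a single fully indifferent voter (which adds nothing to any margin but flips the parity). So I would take any margin matrix on which Beat Path and Ranked Pairs give different outputs, realize it once with an even count (where $F$ returns the Beat Path winner) and once with an odd count (where $F$ returns the Ranked Pairs winner); then $F(\mathbf{P})\neq F(\mathbf{Q})$ despite $\mathcal{M}_{\mathbf{P}}=\mathcal{M}_{\mathbf{Q}}$. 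The main (and only real) obstacle is producing a concrete margin matrix witnessing disagreement between Beat Path and Ranked Pairs, but such examples are standard in the literature, so the argument is routine; the parenthetical remark in the statement already flags that parity is invisible to the margins precisely because ties are permitted.
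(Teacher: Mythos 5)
Your proposal is correct and is essentially the paper's own argument: the paper declares the verification obvious, pointing precisely to your key observation that both axioms change the voter count by an even amount (so parity, and hence the component rule, is fixed, and margin-basedness of that rule does the rest), while a fully indifferent voter flips parity without altering any margin. Your elaboration, including padding a profile on which Beat Path and Ranked Pairs disagree by one indifferent voter, is exactly the intended verification.
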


\section{Characterization modulo head-to-head}\label{HeadToHead}

We now return to the distinction introduced in Section~\ref{Intro} between margin-based voting rules and the larger class of head-to-head voting rules. We already know as a consequence of Theorem~\ref{StrictWeakChar}.\ref{StrictWeakChar1} that a homogeneous, head-to-head voting rule on the domain of all profiles is margin-based if and only if it satisfies Tiebreaking Compensation and Neutral Indifference. However, once we assume that the voting rule is head-to-head, we can provide an alternative axiomatization of margin-based rules using only Neutral Indifference and a strengthening of Neutral Reversal in the context of possibly nonlinear ballots called Nonlinear Neutral Reversal, introduced in the next subsection.

In this section, we work with voting rules on the domain of all profiles (for $\mathsf{LOBI}$, we refer the reader back to Theorem \ref{StrictWeakChar}.\ref{StrictWeakChar2}). For notational convenience, where $S_1, \dots, S_n$ partition $X$, \[S_1>S_2>\dots > S_n\] represents the strict weak order in which the $S_i$'s are the indifference classes, and all candidates in $S_1$ are strictly preferred to all candidates in $S_2$, etc.

\subsection{Nonlinear Neutral Reversal} 

The new axiom we will use in our characterization of margin-based rules relative to the class of head-to-head rules is a strengthening of Neutral Reversal. It applies not only when two voters submit reversed linear orders but also when two voters submit reversed rankings with ties.

\begin{definition} A voting rule $F$ satisfies Nonlinear Neutral Reversal if for any ${\mathbf{P},\mathbf{P}'\in\mathrm{dom}(F)}$, if $\mathbf{P}'$ is obtained from $\mathbf{P}$ by adding two voters with strict weak orders $P$ and $P^{-1}$, respectively, then $F(\mathbf{P})=F(\mathbf{P}')$. 
\end{definition}

\noindent To see that this axiom is indeed stronger than Neutral Reversal, consider the following.

\begin{fact} Let $F$ be the version of the Borda Count for profiles of strict weak orders defined as follows: from a strict weak order $P$, a candidate $x$ receives $|\{y\in X(\mathbf{P})\mid (x,y)\in P\}|$ points, and the candidates with the most points win. Then adding the pair of rankings $\{a\}> X(\mathbf{P})\setminus\{a\}$ and $X(\mathbf{P})\setminus\{a\}>\{a\}$ to a profile gives 
 $|X(\mathbf{P})|-1$ points to $a$ and $1$ point to everyone else, which can change the set of winners, so $F$ does not satisfy Nonlinear Neutral Reversal. Yet it is easy to see that $F$ does satisfy Neutral Reversal.\end{fact}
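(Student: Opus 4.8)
The plan is to prove the two assertions in the Fact separately, since they concern opposite directions: the Borda variant $F$ fails Nonlinear Neutral Reversal but satisfies ordinary Neutral Reversal.

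\medskip

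\emph{Failure of Nonlinear Neutral Reversal.} The key observation is that the scoring rule described assigns to each candidate $x$ a number of points equal to the number of candidates $x$ strictly beats in a given ballot, and that this score is \emph{not} additively balanced by passing to a reversed strict weak order when the ballot contains ties. Concretely, I would compute the point contribution of the reversal pair $\{a\} > X\setminus\{a\}$ and $X\setminus\{a\} > \{a\}$. In the first ballot, $a$ strictly beats every other candidate, so $a$ earns $|X|-1$ points, while each candidate in $X\setminus\{a\}$ beats no one and earns $0$. In the second ballot, every candidate in $X\setminus\{a\}$ strictly beats $a$ (and no one else, since they are tied among themselves), earning $1$ point each, while $a$ earns $0$. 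Summing over the pair, $a$ gains $|X|-1$ points and each other candidate gains $1$ point. Since $|X|-1 > 1$ whenever $|X| \ge 3$, this pair strictly favors $a$ relative to the others. To exhibit a concrete violation, I would start from any profile $\mathbf{P}$ in which $a$ is not a Borda winner but is within $|X|-2$ points of the top score; adding the reversal pair then lifts $a$ to a winner (or the sole winner), changing $F(\mathbf{P})$. A clean witness is the empty profile or a small profile on three candidates $\{a,b,c\}$ where $b$ leads by one point; after adding the pair, $a$ overtakes. This establishes $F(\mathbf{P}) \neq F(\mathbf{P}')$ for the indicated $\mathbf{P}'$.

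\medskip

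\emph{Satisfaction of ordinary Neutral Reversal.} Here the reversal pair consists of a linear order $L$ and its reverse $L^{-1}$. The point is that for linear orders the Borda-type score is perfectly balanced: if $x$ occupies rank position $r$ in $L$ (counting from the top, so $x$ strictly beats $|X|-r$ candidates), then in $L^{-1}$ candidate $x$ occupies the mirror-image position and strictly beats $r-1$ candidates. Hence the combined contribution of the pair to $x$ is $(|X|-r) + (r-1) = |X|-1$, which is the same constant for every candidate $x$, independent of $x$. I would verify this by noting that in any linear order on $X$ the multiset of scores is exactly $\{0,1,\dots,|X|-1\}$, and that $L^{-1}$ assigns to each candidate the complementary score $|X|-1$ minus its score in $L$. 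Adding $L$ and $L^{-1}$ therefore adds the same constant $|X|-1$ to every candidate's total, which preserves the relative ordering of scores and hence the set of candidates attaining the maximum. Thus $F(\mathbf{P}) = F(\mathbf{P}')$, so $F$ satisfies Neutral Reversal.

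\medskip

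I do not expect a serious obstacle here; the whole content is the asymmetry between how ties versus strict linear positions contribute to the score. The one point requiring slight care is ensuring that the constant shift in the linear case is genuinely candidate-independent (so that maximizers are preserved), and conversely that in the tied case the shift is \emph{not} uniform, which is precisely what breaks invariance. Choosing an explicit small profile to witness the failure is the only step involving any computation, and even that is routine once the $|X|-1$ versus $1$ discrepancy is in hand.
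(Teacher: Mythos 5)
Your proposal is correct and matches the paper's own justification, which is embedded in the statement of the Fact itself: the $|X(\mathbf{P})|-1$ versus $1$ point computation for the nonlinear pair, and the uniform constant shift of $|X(\mathbf{P})|-1$ for a linear reversal pair. One trivial caveat: the empty profile is not admissible under the paper's definition (profiles require a nonempty voter set), but your three-candidate witness suffices, noting that when $b$ leads $a$ by exactly one point the pair produces a tie $\{a,b\}$ rather than a strict overtake, which still changes the winner set as required.
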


It is also easy to see that Neutral Indifference and Nonlinear Neutral Reversal are independent, as follows.

\begin{fact} $\,$
    \begin{enumerate}
    \item The version of Minimax based on winning votes satisfies Neutral Indifference but violates Nonlinear Neutral Reversal.
    \item Let $F$ be the voting rule from Fact \ref{EvenOdd}, which uses the parity of the number of voters. Then $F$ satisfies Nonlinear Neutral Reversal (since two voters are added for this axiom, maintaining parity) but violates Neutral Indifference (since one voter is added, changing parity).
    \end{enumerate}
\end{fact}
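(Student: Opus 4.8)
The plan is to verify the four claims by tracking how each axiom's operation affects the only two relevant pieces of data: the head-to-head counts $\#_\mathbf{P}$ and the parity of $|V(\mathbf{P})|$. For part (1), the positive half is immediate: adding a voter with the empty ballot contributes nothing to any $\#_\mathbf{P}(y,x)$, so $\#_\mathbf{P}$ is unchanged, and since winning-votes Minimax selects the candidates $x$ minimizing $\max\{\#_\mathbf{P}(y,x)\mid y\in X(\mathbf{P})\}$, a quantity depending only on $\#_\mathbf{P}$, the winner set is unchanged; hence Neutral Indifference holds. (Equivalently, the rule is C2 and hence head-to-head, so Neutral Indifference follows from Proposition~\ref{C2Char}.) The substantive work is the violation of Nonlinear Neutral Reversal.

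The key observation driving that violation is that adding a reversal pair $(P,P^{-1})$ leaves all margins fixed but, when $P$ contains a tie, changes $\#_\mathbf{P}$ \emph{non-uniformly}: for each pair $\{y,z\}$ that $P$ orders strictly, both $\#_\mathbf{P}(y,z)$ and $\#_\mathbf{P}(z,y)$ rise by $1$, whereas a pair tied in $P$ is untouched. Since each count rises by at most $1$, each score $\max_y\#_\mathbf{P}(y,x)$ rises by at most $1$, so the argmin can only move when the old winner set already contains a tie that the reversal pair breaks. I would therefore seek a three-candidate cyclic profile whose winning-votes scores are tied between two candidates, and then append a reversal pair whose tie sits on the bottleneck edge of only one of them. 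Concretely, I would take counts such as $\#(b,a)=6,\ \#(a,b)=5$, $\#(c,b)=6,\ \#(b,c)=5$, $\#(a,c)=7,\ \#(c,a)=4$ (realizable by an explicit profile of $11$ linear voters, found by solving for the multiplicities of the six orders on $\{a,b,c\}$). Then $a$ and $b$ both score $6$ while $c$ scores $7$, so the winners are $\{a,b\}$; adding the reversal pair $\{a,b\}>\{c\}$ and $\{c\}>\{a,b\}$ raises every $a$--$c$ and $b$--$c$ count by $1$ while fixing the $a$--$b$ counts, so $a$'s bottleneck stays at $\#(b,a)=6$, $b$'s rises to $\#(c,b)=7$, and $c$'s rises to $8$, shrinking the winner set to $\{a\}$. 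This witnesses the violation.

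For part (2), both claims are essentially structural. Nonlinear Neutral Reversal adds two voters, preserving the parity of $|V(\mathbf{P})|$, and the added strict weak orders $P,P^{-1}$ cancel in every margin (strictly ordered pairs contribute $+1$ in each direction, tied pairs contribute nothing), so $\mathcal{M}_\mathbf{P}$ is unchanged; since $F$ applies a fixed margin-based rule on each parity class, its output is unchanged. By contrast, Neutral Indifference adds a single empty-ballot voter, which again leaves $\mathcal{M}_\mathbf{P}$ fixed but now flips the parity, so $F$ switches between Beat Path and Ranked Pairs. The only input still needed is one profile on which Beat Path and Ranked Pairs disagree; such profiles are standard, and on any of them (with an even voter count) we get $F(\mathbf{P})=\mathrm{BeatPath}(\mathbf{P})\neq\mathrm{RankedPairs}(\mathbf{P})=F(\mathbf{P}')$, since adding the indifferent voter does not alter the margins Ranked Pairs reads off.

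I expect the main obstacle to be part (1)'s violation of Nonlinear Neutral Reversal. Unlike the parity and margin bookkeeping of part (2), it is not automatic: the parity-of-increment analysis above shows it can succeed only through a pre-existing tie in the winning-votes scores, so the example must be engineered so that the reversal pair's tie lands on exactly one co-winner's bottleneck edge. Confirming that the chosen counts are realized by an honest profile is routine but must respect the correct parity; everything else reduces to the invariance bookkeeping already described.
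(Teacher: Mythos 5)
Your proposal is correct and takes essentially the approach the paper intends: the paper states this fact without proof (the parity parentheticals in part (2) are its entire justification), and your margin/parity bookkeeping for part (2) and the C2 observation for the positive half of part (1) are exactly that reasoning. Your explicit witness for the one nontrivial claim, the violation of Nonlinear Neutral Reversal, does check out: the counts are realized by the $11$-voter linear profile with $3\times abc$, $2\times acb$, $2\times bac$, $4\times cba$, giving winning-votes scores $6,6,7$ for $a,b,c$ and winners $\{a,b\}$, and adding the reversal pair $\{a,b\}>\{c\}$ and $\{c\}>\{a,b\}$ leaves the $a$-$b$ counts fixed while raising the scores to $6,7,8$, shrinking the winner set to $\{a\}$.
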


\subsection{Proof of characterization}

Let us now prove that Nonlinear Neutral Reversal and Neutral Indifference characterize margin-based rules relative to the class of head-to-head rules.

\begin{theorem}\label{ModuloH2H} For any head-to-head voting rule $F$ on the domain of all profiles, $F$ is margin-based if and only if $F$ satisfies  Nonlinear Neutral Reversal and Neutral Indifference.
\end{theorem}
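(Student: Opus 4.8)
The plan is to prove the easy ``only if'' direction directly and to reduce the ``if'' direction to the head-to-head hypothesis. For ``only if'', I would assume $F$ is margin-based and check that each operation licensed by the two axioms preserves the margin matrix. A fully indifferent voter contributes $0$ to $\#_\mathbf{P}(a,b)$ for every pair, so Neutral Indifference leaves $\mathcal{M}_\mathbf{P}$ fixed; and for any strict weak order $P$ and any pair $a,b$, the two added voters submitting $P$ and $P^{-1}$ contribute equally to $\#(a,b)$ and to $\#(b,a)$ (one prefers $a$ to $b$ and the other the reverse when $a,b$ are strictly ordered in $P$, and both abstain on that pair when $a,b$ are tied), so Nonlinear Neutral Reversal also leaves $\mathcal{M}_\mathbf{P}$ fixed. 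In both cases margin-basedness gives the same output.

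For the ``if'' direction, assume $F$ is head-to-head and satisfies both axioms, and suppose $\mathcal{M}_\mathbf{P}=\mathcal{M}_\mathbf{Q}$, with common candidate set $X$ and common margin matrix $m$. Since $F$ depends only on $\mathcal{H}=(\#,|V|)$, it is enough to transform $\mathbf{P}$ and $\mathbf{Q}$, by $F$-preserving applications of the two axioms, into profiles $\mathbf{P}^*,\mathbf{Q}^*$ with $\#_{\mathbf{P}^*}=\#_{\mathbf{Q}^*}$ and $|V(\mathbf{P}^*)|=|V(\mathbf{Q}^*)|$. Writing $s_\mathbf{P}(a,b)=\#_\mathbf{P}(a,b)+\#_\mathbf{P}(b,a)$ for the number of voters with a strict preference between $a$ and $b$, one has $\#(a,b)=\tfrac12(s(a,b)+m(a,b))$, so since the margins already agree it suffices to equalize the symmetric functions $s_\mathbf{P}$ and $s_\mathbf{Q}$ and then the voter counts. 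A useful parity observation is that $s(a,b)\equiv m(a,b)\pmod 2$; hence $s_\mathbf{P}$ and $s_\mathbf{Q}$ agree modulo $2$ on every pair, so $\delta:=s_\mathbf{Q}-s_\mathbf{P}$ has only even entries.

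The heart of the argument is equalizing $s$ using Nonlinear Neutral Reversal. Adding one reversal pair $(P,P^{-1})$ raises $s(a,b)$ by $2$ on each pair strictly ordered by $P$ and by $0$ on each tied pair, i.e.\ it adds $2$ times the indicator in $\mathbb{Z}^{\binom{X}{2}}$ of the cross-pairs of the partition of $X$ into $P$'s indifference classes. I would then show that these partition-indicators generate the whole lattice $\mathbb{Z}^{\binom{X}{2}}$: letting $\sigma_a$ be the indicator of the partition $\{a\}\mid X\setminus\{a\}$ and $T_{ab}$ that of $\{a\}\mid\{b\}\mid X\setminus\{a,b\}$, the identity $e_{\{a,b\}}=\sigma_a+\sigma_b-T_{ab}$ holds when $|X|\ge 3$ (and $e_{\{a,b\}}$ is itself a partition-indicator when $|X|=2$; the case $|X|=1$ is vacuous for $s$). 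Consequently the even vector $\delta$ equals $2(U-W)$ for nonnegative integer combinations $U,W$ of partition-indicators; realizing $W$ reversal pairs on $\mathbf{P}$ and $U$ on $\mathbf{Q}$ (with fresh voters drawn from the infinite set $\mathcal{V}$) makes $s$, and hence $\#$, agree while preserving $F$. Finally I would use Neutral Indifference to pad whichever profile has fewer voters with fully indifferent ballots until the voter counts match, leaving $\#$ untouched. The two profiles then have equal $\mathcal{H}$, so the head-to-head hypothesis yields equal $F$-values, and tracing back through the $F$-preserving steps gives $F(\mathbf{P})=F(\mathbf{Q})$.

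I expect the lattice-generation step to be the main obstacle, since it is where one must confirm that reversal pairs \emph{with ties} supply enough freedom to reconcile the strict-preference counts of any two equal-margin profiles; the parity observation (so that $\delta$ is divisible by the factor of $2$ coming from a reversal pair) and the displayed identity for $e_{\{a,b\}}$ are exactly what make this go through.
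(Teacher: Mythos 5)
Your proposal is correct, and its overall architecture matches the paper's proof: transform both profiles by adding nonlinear reversal pairs until the head-to-head counts $\#$ agree, then pad the smaller profile with fully indifferent ballots (Neutral Indifference) to equalize voter numbers, and finally invoke the head-to-head hypothesis, tracing $F$-invariance back through the additions. Where you genuinely differ is in how the counts are equalized. The paper proceeds pair-by-pair with an explicit gadget: for each doubleton $D=\{a,b\}$ with discrepancy $d_D=\#_{\mathbf{P}'}(a,b)-\#_\mathbf{P}(a,b)$, it adds $d_D$ reversal pairs of the form $\{a\}>\{b\}>X\setminus D$ (and its reverse) to one profile and $d_D$ reversal pairs of the form $D>X\setminus D$ (and its reverse) to the other; these contribute identically to $\#$ except on $\{a,b\}$ itself, where they differ by exactly $d_D$ in both directions. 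In your language this is the one-pair-per-side identity $e_{\{a,b\}}=T_{ab}-\rho_{ab}$, where $\rho_{ab}$ is the cross-pair indicator of the two-block partition $\{a,b\}\mid X\setminus\{a,b\}$ --- a more economical relative of your identity $e_{\{a,b\}}=\sigma_a+\sigma_b-T_{ab}$. Your route --- reducing to the symmetric function $s(a,b)=\#(a,b)+\#(b,a)$ via $\#(a,b)=\tfrac12(s(a,b)+m(a,b))$, observing the parity $s\equiv m \pmod 2$ so that $\delta$ is even, and proving that partition cross-pair indicators generate the full lattice $\mathbb{Z}^{\binom{X}{2}}$ --- is a clean abstraction that isolates exactly what freedom reversal pairs with ties provide, and it would adapt easily to other invariance questions; the paper's construction buys concreteness, fewer added voters, and symmetric treatment of the two profiles without any lattice bookkeeping. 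One small slip to fix: with $\delta:=s_\mathbf{Q}-s_\mathbf{P}=2(U-W)$, you need $s_\mathbf{P}+2U=s_\mathbf{Q}+2W$, so the reversal pairs realizing $U$ go onto $\mathbf{P}$ and those realizing $W$ onto $\mathbf{Q}$ --- you swapped the labels, but the correction is immediate and nothing else depends on it.
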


\noindent The left-to-right direction of the `if and only if' is obvious. The right-to-left direction is an immediate corollary of Proposition~\ref{C2Char} characterizing C2 voting rules among head-to-head voting rules using Neutral Indifference and the following proposition characterizing margin-based voting rules among C2 voting rules using Nonlinear Neutral Reversal.

\begin{proposition}\label{MarginCharModuloC2}
    A C2 voting rule on the domain of all profiles is margin-based if and only if it satisfies Nonlinear Neutral Reversal.
\end{proposition}
\begin{proof}
For the non-trivial direction, it is enough to show the following:
  for any profiles $\mathbf{P}, \mathbf{P}'$, if $\margin_{\mathbf{P}} = \margin_{\mathbf{P}'}$, then there are profiles $\mathbf{Q}$ and $\mathbf{Q}'$ such that: 
  \begin{enumerate}
  \item $\mathbf{Q}$ (resp.~$\mathbf{Q}'$) is obtained by adding (possibly zero) (possibly nonlinear) reversal pairs of voters to $\mathbf{P}$ (resp.~$\mathbf{P}'$);
  \item $\#_\mathbf{Q}=\#_{\mathbf{Q}'}$.
  \end{enumerate}
For then by part 1 and Nonlinear Neutral Reversal, $F(\mathbf{P})=F(\mathbf{Q})$ and $F(\mathbf{P}')=F(\mathbf{Q}')$, and by part~2 and C2, $F(\mathbf{Q})=F(\mathbf{Q}')$, so $F(\mathbf{P})=F(\mathbf{P}')$, showing that $F$ is margin-based.

  Let $\mathbf{P}$ and $\mathbf{P}'$ satisfying the requirement be given with $X := X(\mathbf{P}) = X(\mathbf{P}')$. Recall that $X$ is finite. For any doubleton $D \subseteq X$, we define two profiles $\mathbf{A}_{D}$ and $\mathbf{A}'_{D}$ as follows. Let $a$ and $b$ be the smaller and the greater elements of $D$, respectively, according to the alphabetic ordering of $X$.
  Then let 
  \[
  d_{D} = \#_{\mathbf{P}'}(a, b) - \#_{\mathbf{P}}(a, b).
  \]
  Note that since by assumption $\mathcal{M}_{\mathbf{P}} = \mathcal{M}_{\mathbf{P}'}$, $d_D$ is also equal to $\#_{\mathbf{P}'}(b, a) - \#_{\mathbf{P}}(b, a)$. So, as long as $D = \{x, y\}$, $d_D = \#_{\mathbf{P}'}(x, y) - \#_{\mathbf{P}}(x, y)$.
  
  Now define $\mathbf{A}_{D}$ and $\mathbf{A}'_{D}$ as follows:
  \begin{itemize}
    \item If $d_{D} = 0$, let both $\mathbf{A}_{D}$ and $\mathbf{A}'_{D}$ be empty.\footnote{Our formal definition of profiles requires at least one voter, but we will not apply any voting rule to these $\mathbf{A}_{D}$ and $\mathbf{A}'_{D}$. Other concepts such as margin and winning votes apply easily to empty profiles.} 

    \item If $d_{D} > 0$, then intuitively we need to add to $\mathbf{P}$ this many voters who rank $a$ above $b$ and also this many voters who rank $b$ above $a$ (since we need to maintain the margin between $a$ and $b$), all while introducing as few side effects as possible. To do this, let $L$ be $\{a\} > \{b\} > X \setminus D$. Then let $\mathbf{A}_{D}$ consist of $2d_{D}$ fresh voters such that for each of $L$ and $L^{-1}$, there are exactly $d_{D}$ voters with that ballot.

    For $\mathbf{A}'_{D}$, first let $T$ be $D > X \setminus D$. Then let $\mathbf{A}'_{D}$ consist of $2d_{D}$ fresh voters such that for each of $T$ and $T^{-1}$, there are exactly $d_{D}$ voters with that ballot.  
  
    \item If $d_{D} < 0$, switch the role of $\mathbf{A}_{D}$ and $\mathbf{A}'_{D}$ in the above construction while using $-d_{D}$ pairs of fresh voters for each profile. That is, $\mathbf{A}_{D}$ will use $-d_D$ pairs of $T$ and $T^{-1}$ ballots while $\mathbf{A}'_{D}$ will use $-d_D$ pairs of $L$ and $L^{-1}$.

    \item It is instructive to do a little counting here: 
    \begin{itemize}
      \item For any $x, y \in X \setminus D$, plainly $\#_{\mathbf{A}_{D}}(x, y) = \#_{\mathbf{A}'_{D}}(x, y) = 0$ since they are always in the same equivalence class.

      \item For any $x \in D$ and $y \in X \setminus D$, we have $\#_{\mathbf{A}_{D}}(x, y) = \#_{\mathbf{A}'_{D}}(x, y) = |d_{D}|$ (by $L$ and $T$) and also $\#_{\mathbf{A}_{D}}(y, x) = \#_{\mathbf{A}'_{D}}(y, x) = |d_{D}|$ (by $L^{-1}$ and $T^{-1}$).

    \item When $d_{D} > 0$, 
    \[
    \#_{\mathbf{A}_{D}}(a, b) = \#_{\mathbf{A}_{D}}(b, a) = d_{D}\mbox{ while }\#_{\mathbf{A}'_{D}}(a, b) = \#_{\mathbf{A}'_{D}}(b, a) = 0.
    \] 
    Similarly, when $d_{D} < 0$, 
    \[
    \#_{\mathbf{A}_{D}}(a, b) = \#_{\mathbf{A}_{D}}(b, a) = 0\mbox{ while }\#_{\mathbf{A}'_{D}}(a, b) = \#_{\mathbf{A}'_{D}}(b, a) = -d_{D}.
    \] 
    Also, when $d_{D} = 0$, trivially 
    \[
    \#_{\mathbf{A}_{D}}(a, b) = \#_{\mathbf{A}_{D}}(b, a) = 0\mbox{ and }\#_{\mathbf{A}'_{D}}(a, b) = \#_{\mathbf{A}'_{D}}(b, a) = 0.
    \] 
    Taking all three cases into account, we always have that 
    \[
    \#_{\mathbf{A}_{D}}(a, b) - \#_{\mathbf{A}'_{D}}(a, b) = \#_{\mathbf{A}_{D}}(b, a) - \#_{\mathbf{A}'_{D}}(b, a) = d_{D}.
    \]
    \end{itemize}
    The counting above shows that for any $x, y \in X$ with $x\neq y$:
    \begin{align*}
      \#_{\mathbf{A}_{D}}(x, y) - \#_{\mathbf{A}'_{D}}(x, y) =
      \begin{cases}
        0 & \text{ if } \{x, y\} \not= D \\
        d_{D} & \text{ if } \{x, y\} = D.
      \end{cases}
    \end{align*}
  \item Also notice that both $\mathbf{A}_{D}$ and $\mathbf{A}'_{D}$ consist solely of possibly nonlinear reversal pairs.
  \end{itemize}

  Now let $\mathbf{Q}$ (resp.~$\mathbf{Q}'$) be the disjoint union of $\mathbf{P}$ (resp.~$\mathbf{P}'$) and $\mathbf{A}_{D}$ (resp.~$\mathbf{A}'_{D}$) for all doubletons $D \subseteq X$. Then $\#_\mathbf{Q} = \#_{\mathbf{Q}'}$. This is because, for any $x \not= y \in X$, 
  \begin{align*}
    &\#_\mathbf{Q}(x, y) - \#_{\mathbf{Q}'}(x, y) \\
    &= \left(\#_{\mathbf{P}}(x, y) +
      \sum_{D \subseteq X, |D| = 2}\#_{\mathbf{A}_{D}}(x, y)
      \right)
      - \left(\#_{\mathbf{P}'}(x, y) +
      \sum_{D \subseteq X, |D| = 2}\#_{\mathbf{A}'_{D}}(x, y)
      \right) \\
    &= \#_{\mathbf{P}}(x, y) - \#_{\mathbf{P}'}(x, y) +
      \sum_{D \subseteq X, |D| = 2}(\#_{\mathbf{A}_{D}}(x, y) -
      \#_{\mathbf{A}'_{D}}(x, y)) \\
    &= -d_{\{x, y\}} + (\#_{\mathbf{A}_{\{x, y\}}}(x, y) - \#_{\mathbf{A}'_{\{x, y\}}}(x, y)) \\
    &= -d_{\{x, y\}} + d_{\{x, y\}} \\
    &= 0.\qedhere
  \end{align*} 
 \end{proof}

 Note that we cannot drop the assumption of head-to-head from Theorem~\ref{ModuloH2H}, as shown by the following.

\begin{fact} The Positive/Negative voting rule satisfies Nonlinear Neutral Reversal and Neutral Indifference, but it is not margin-based.
\end{fact}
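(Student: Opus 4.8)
The plan is to prove the three assertions separately. For a profile $\mathbf{P}$ write $s_\mathbf{P}(x)$ for the number of voters ranking $x$ uniquely first minus the number ranking $x$ uniquely last, so that the Positive/Negative rule $F(\mathbf{P})$ is the set of candidates maximizing $s_\mathbf{P}(\cdot)$. The two axiom verifications are easy score computations; the genuine work is refuting margin-basedness. The useful structural fact throughout is that each voter contributes to $s_\mathbf{P}(x)$ independently, so adding or deleting voters changes $s_\mathbf{P}(x)$ only by those voters' own contributions, and it suffices to check that the added voters contribute nothing to any score.

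For Neutral Indifference, a voter with the empty ballot $\varnothing$ ranks no candidate uniquely first and none uniquely last, so this voter contributes $0$ to $s_\mathbf{P}(x)$ for every $x$; hence no score and no maximizer changes. For Nonlinear Neutral Reversal, I would show that a voter with ballot $P$ together with a voter with ballot $P^{-1}$ contribute $0$ to every score. The key observation is that the top indifference class of $P^{-1}$ is the bottom indifference class of $P$, so $x$ is uniquely first in $P^{-1}$ iff $x$ is uniquely last in $P$, and $x$ is uniquely last in $P^{-1}$ iff $x$ is uniquely first in $P$. Thus the pair's contribution to $s(x)$ is $\big(\mathbf{1}[x\text{ first in }P]-\mathbf{1}[x\text{ last in }P]\big)+\big(\mathbf{1}[x\text{ first in }P^{-1}]-\mathbf{1}[x\text{ last in }P^{-1}]\big)$, which by the two equivalences equals $\mathbf{1}[x\text{ first in }P]-\mathbf{1}[x\text{ last in }P]+\mathbf{1}[x\text{ last in }P]-\mathbf{1}[x\text{ first in }P]=0$ (this also covers the degenerate one-candidate case, where $x$ is simultaneously first and last). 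Hence the outcome is unchanged.

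The main obstacle is showing that $F$ is not margin-based, and the conceptual point is that the margin matrix records only pairwise tallies and is blind to who occupies a ballot's unique top or bottom position, whereas that is exactly what $s_\mathbf{P}$ tracks. I would exhibit two profiles on $X=\{a,b,c\}$ with identical (in fact all-zero) margin matrices but different winner sets. Take $\mathbf{P}=\{\,a>b>c,\; c>b>a\,\}$, a reversal pair: all its margins are $0$ and every score is $0$, so $F(\mathbf{P})=\{a,b,c\}$. Take $\mathbf{Q}$ to consist of the four ballots $a>b>c$, $a>c>b$, $\{b,c\}>a$, and $\{b,c\}>a$. A direct tally shows every pair is split evenly in $\mathbf{Q}$ (e.g. $a$ beats each of $b,c$ twice and loses to each twice, and $b,c$ tie), so $\mathcal{M}_\mathbf{Q}$ is also all-zero; meanwhile $s_\mathbf{Q}(a)=2-2=0$ while $s_\mathbf{Q}(b)=s_\mathbf{Q}(c)=-1$ (each is penalized once as a unique last and never rewarded), so $F(\mathbf{Q})=\{a\}$. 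Since $\mathcal{M}_\mathbf{P}=\mathcal{M}_\mathbf{Q}$ yet $F(\mathbf{P})=\{a,b,c\}\neq\{a\}=F(\mathbf{Q})$, the rule is not margin-based. The essential device is that burying $b$ and $c$ in a bottom tie lets them avoid the unique-last penalty in two ballots while keeping every pairwise margin balanced; this is impossible with linear ballots on three candidates, where $s_\mathbf{P}(x)$ equals half of $\sum_{y\neq x}\mathcal{M}_\mathbf{P}(x,y)$ (a multiple of the Borda scores) and the rule is therefore margin-based. So exploiting ties, or equivalently passing to four or more candidates, is exactly what the counterexample requires.
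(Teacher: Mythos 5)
Your proof is correct, and it supplies exactly the routine verification that the paper leaves implicit (the Fact is stated without proof, as the cancellation arguments and a small counterexample are considered easy to check). Your two axiom verifications are the intended ones: a reversal pair $(P,P^{-1})$ contributes net zero to every Positive/Negative score because the unique-top of $P^{-1}$ is the unique-bottom of $P$ and vice versa, and an empty ballot contributes nothing. Your counterexample is valid: both profiles have the all-zero margin matrix on $\{a,b,c\}$, yet $F(\mathbf{P})=\{a,b,c\}$ while $F(\mathbf{Q})=\{a\}$, since the tied-bottom ballots $\{b,c\}>a$ penalize no one while still balancing all pairwise margins. Your closing observation---that on three-candidate \emph{linear} profiles the Positive/Negative score is half the symmetric Borda score $\sum_{y\neq x}\mathcal{M}_\mathbf{P}(x,y)$, so ties (or a fourth candidate) are genuinely needed for the counterexample---is a nice touch that goes beyond what the statement requires and correctly explains why your $\mathbf{Q}$ exploits a bottom tie.
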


Combining Proposition~\ref{C2Char} and Proposition~\ref{MarginCharModuloC2}, we obtain the implications shown in Figure~\ref{ImpFig}.

\begin{figure}[h]
\begin{center}
\begin{tikzpicture}
      \node  at (0,0) (a) {\textbf{C2}}; 
      \node at (0,3) (b) {\textbf{head-to-head}}; 
      \node  at (0,-3) (d) {\textbf{margin-based}};

      \path[->,draw,thick] (b) to node[fill=white]  {Neutral Indifference} (a);

      \path[->,draw,thick,] (a) to node[fill=white] {Nonlinear Neutral Reversal}  (d);

    \end{tikzpicture}
    \end{center}
    \caption{Implications from head-to-head to C2 to margin-based modulo additional axioms.}\label{ImpFig}
\end{figure}
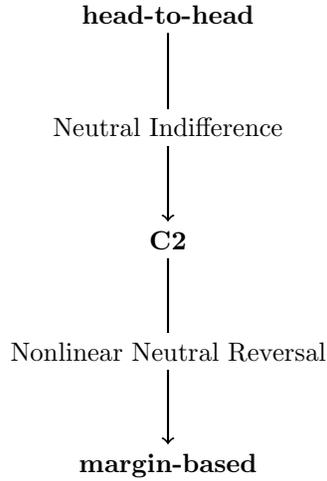

\section{Conclusion}\label{Conclusion}

For voting rules that accept profiles of ranked ballots, the invariance property of being \textit{margin-based} can obviously be regarded as a property of informational parsimony: only the information about head-to-head margins of victory or loss is needed to determine the outcome. What is less obvious is that being margin-based is equivalent to a conjunction of axioms with normative significance, drawing on certain notions of equality or balancing between voters with different preferences. Our axiom of Preferential Equality expresses one such notion of equality: either of two voters changing $xy$ to $yx$ in their ranking should have the same effect as if the other voter did so. And Neutral Reversal expresses one such notion of balancing: two voters with fully reversed rankings balance each other out, so adding them does not change the election outcome. Our axiom of Tiebreaking Compensation, which implies Preferential Equality and almost implies Neutral Reversal (it does together with Neutral Indifference), also incorporates these notions of equality and balancing. 

It is an interesting open question whether other invariance properties can be proved equivalent to combinations of natural, normatively significant axioms. For example, can the class of head-to-head voting rules be characterized in terms of Preferential Equality together with some independent axioms?\footnote{To apply the strategy used in this paper, we would need to show that if two profiles share the same head-to-head information, then there is a way to turn them into the same profile using operations preserving the head-to-head information, and in particular, the number of voters. This seems to require a detailed analysis of when and how the same margin matrix can be realized by multiple profiles of the same size.} While the class of head-to-head rules is larger than that of margin-based rules, we can also go in the other direction and consider more restricted classes. For example, many of the well-known margin-based rules (recall Footnote~\ref{CondorcetRules}) satisfy the stronger property of \textit{ordinal margin invariance} (see \citealt{Holliday2024}), according to which we do not need to know the exact sizes of margins but only whether the margin of $x$ over $y$ is greater than that of $x'$ over $y'$. Or one could go all the way to Fishburn's \citeyearpar{Fishburn1977} class of C1 rules, according to which all we need to know about the margins is their signs (positive, zero, or negative). It is doubtful that increasingly strong invariance properties will continue to have normatively compelling equivalent axiomatizations, since at some point (arguably by the time we reach C1) the invariance properties appear clearly too strong. Indeed, the project of seeking normatively perspicuous re-axiomatizations of invariance properties may guide us toward the boundaries between desirable and undesirable invariance properties for voting rules.

\subsection*{Acknowledgements}

We thank the audiences at the COMSOC Video Seminar in January 2025 and at the Special Session on the Mathematics of Decisions, Elections, and Games at the Joint Mathematics Meeting 2025, as well as the anonymous referees for \textit{Social Choice and Welfare} for their helpful feedback.

\bibliographystyle{plainnat}
\bibliography{majority}

\appendix

\section{Appendix: Allowing Noncomparability}\label{App}

In this appendix, we consider axiomatizations of margin-based rules on other domains of voter preferences. As noted in Section~\ref{StrictWeakProfs}, in many actual election scenarios, voters are not required to rank all the candidates, but voters are required to \textit{linearly} order any candidates they do rank. In voting theory, such ballots are often interpreted as a special case of weak orders by putting all unranked candidates in an indifference class at the bottom. An alternative interpretation of truncated linear ballots, which may be appropriate in some circumstances, treats each unranked candidate as noncomparable to all other candidates (this option is available on some voting websites). Such a noncomparability interpretation of unranked candidates may also make sense for some elections in which ballots allow both ties and unranked candidates (as also allowed on some voting websites).

\begin{definition}
  Let $X$ be a nonempty set and $R$ be a binary relation on $X$. This $R$ is to be interpreted as a weak preference relation on $X$. We write $I(R)$, $P(R)$, and $N(R)$ for the indifference, strict preference, and noncomparability relation induced by $R$,  defined as follows: for any $x, y \in X$,
  \begin{itemize}
    \item $x I(R) y$ iff $x R y$ and $y R x$;
    \item $x P(R) y$ iff $x R y$ but not $y R x$;
    \item $x N(R) y$ iff neither $x R y$ nor $y R x$.
  \end{itemize}
  We also change our definition of a profile to mean a function $\mathbf{R}: V \to \wp(X^2)$ where $V$ is a nonempty finite subset of $\mathcal{V}$, also denoted by $V(\mathbf{R})$, and $X$ is a nonempty finite subset of $\mathcal{X}$ with $|X|\geq 2$, also denoted by $X(\mathbf{R})$. For any profile $\mathbf{R}$, its \emph{head-to-head counting functions} $\#P_{\mathbf{R}}, \#I_{\mathbf{R}}, \#N_{\mathbf{R}}: X(\mathbf{R})^2 \to \mathbb{N}$ are defined by 
  \begin{itemize}
    \item $\#P_{\mathbf{R}}(x, y) = |\{i \in V(\mathbf{R}) \mid xP(\mathbf{R}(i))y\}|$,
    \item $\#I_{\mathbf{R}}(x, y) = |\{i \in V(\mathbf{R}) \mid xI(\mathbf{R}(i))y\}|$, and
    \item $\#N_{\mathbf{R}}(x, y) = |\{i \in V(\mathbf{R}) \mid xN(\mathbf{R}(i))y\}|$.
  \end{itemize}
  Then the \emph{total head-to-head information} $\mathcal{H}(\mathbf{R})$ of $\mathbf{R}$ is simply the tuple $(\#P_{\mathbf{R}}, \#I_{\mathbf{R}}, \#N_{\mathbf{R}})$ collecting the three counting functions. The margin function is defined as before: \[\margin_{\mathbf{R}}(x, y) = \#P_{\mathbf{R}}(x, y) - \#P_{\mathbf{R}}(y, x).\]

  A \emph{voting rule} $F$ is a function whose domain $\mathrm{dom}(F)$ is some set of profiles. It is \emph{head-to-head} (resp.~\emph{margin-based}) if for any $\mathbf{R}, \mathbf{R}' \in \mathrm{dom}(F)$, if $\mathcal{H}(\mathbf{R}) = \mathcal{H}(\mathbf{R}')$ (resp.~$\margin_{\mathbf{R}} = \margin_{\mathbf{R}'}$), then $F(\mathbf{R}) = F(\mathbf{R}')$.
  \end{definition}

    The alternative interpretations of truncated ballots can be formally represented as follows.
  \begin{definition}
  For any nonempty set $X$ and any binary relation $R$ on $X$, the \emph{side-noncomparability class} $S(R)$ of $R$ is $\{x \in X \mid \forall y \in X \setminus \{x\}, x N(R) y\}$. Intuitively, this is the set of candidates that do not appear on the truncated ballot, or the set of all candidates if there is only one candidate on the truncated ballot. 
  \begin{itemize}
    \item A binary relation $R$ on $X$ is a \textit{linear order plus side-noncomparability} (`LOSN' for short) if $R$ is reflexive, transitive, and for all $x,y \in X \setminus S(R)$ with $x\neq y$, either $xP(R)y$ or $yP(R)x$.
    \item A binary relation $R$ on $X$ is a \textit{weak order plus side-noncomparability} (`WOSN' for short) if $R$ is reflexive, transitive, and for all $x,y\in X \setminus S(R)$, either $xRy$ or $yRx$.
  \end{itemize}
  Let $\mathsf{LOSN}$ (resp.~$\mathsf{WOSN}$) be the set of all profiles $\mathbf{R}$ such that for any $i \in V(\mathbf{R})$, $\mathbf{R}(i)$ is an LOSN (resp.~WOSN) on $X(\mathbf{R})$. We call $S(\mathbf{R}(i))$ voter $i$'s \textit{side-noncomparability class}.
\end{definition}

To obtain an analogue of Theorem \ref{StrictWeakChar} for these domains, it suffices to add just one axiom (cf.~the ``Pure'' version of Tiebreaking Compensation from Remark~\ref{Replacements}).

\begin{definition} A voting rule $F$ satisfies Comparable Compensation if for any $\mathbf{R},\mathbf{R}'\in \mathrm{dom}(F)$, if in $\mathbf{R}$ there are two voters $i$ and $j$ with the \textit{same} ranking with a nonempty side-noncomparability class $S$, and $\mathbf{R}'$ is obtained from $\mathbf{R}$ by $i$ ranking all candidates in $S$ \textit{above} those in $X(\mathbf{R})\setminus S$ while ranking those within $S$ according to a linear order $L$, and $j$ ranking all candidates in $S$  \textit{below} all those in $X(\mathbf{R})\setminus S$ while ranking those within $S$ according to $L^{-1}$, then $F(\mathbf{R})=F(\mathbf{R}')$.
\end{definition}

Clearly any margin-based voting rule satisfies Comparable Compensation. 

\begin{theorem} Let $F$ be a voting rule satisfying Homogeneity.
\begin{enumerate}
\item\label{LOSNChar} If the domain of $F$ is $\mathsf{LOSN}$, then $F$ is margin-based if and only if $F$ satisfies Preferential Equality, Comparable Compensation, and Neutral Reversal.
\item\label{WOSNChar} If the domain of $F$ is $\mathsf{WOSN}$, then $F$ is margin-based if and only if $F$ satisfies Tiebreaking Compensation, Comparable Compensation, and Neutral Indifference.
\end{enumerate}
\end{theorem}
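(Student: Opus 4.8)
The plan is to reuse Theorems~\ref{LinChar} and \ref{StrictWeakChar} after exploiting Homogeneity together with Comparable Compensation to eliminate all side-noncomparabilities in a margin-preserving way. The `only if' directions are immediate: any margin-based rule satisfies Preferential Equality, Tiebreaking Compensation, Comparable Compensation, and Neutral Reversal, since each of these axioms relates two profiles with identical margins (as already observed for Comparable Compensation, and as is clear for the rest). So I concentrate on the `if' directions.

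The engine of both reductions is that Comparable Compensation is margin-neutral and removes noncomparabilities. Suppose two voters share a ranking whose side-noncomparability class is $S$, and we replace one ballot by $S$-on-top ordered by $L$ and the other by $S$-on-bottom ordered by $L^{-1}$, keeping the comparable part $X\setminus S$ fixed in both. For $x,y\in S$ the two new rankings contribute oppositely to $\mathcal{M}(x,y)$ and cancel; for $x\in S$ and $y\in X\setminus S$ the first voter now has $x$ above $y$ and the second has $y$ above $x$, again cancelling; and on $X\setminus S$ both ballots retain their original (shared) order, so those margins are unchanged. Hence every margin is preserved. Moreover each new ballot is a linear order in the $\mathsf{LOSN}$ case (where $X\setminus S$ was already linear) and a strict weak order in the $\mathsf{WOSN}$ case (where $X\setminus S$ may still carry ties). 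The reduction is therefore: double the profile by Homogeneity, which preserves $F$ and doubles every margin, and then apply Comparable Compensation to each resulting twin pair whose side-noncomparability class is nonempty.

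For part~\ref{LOSNChar}, this sends an arbitrary $\mathsf{LOSN}$ profile $\mathbf{R}$ to a linear profile $\mathbf{L}$ with $F(\mathbf{L})=F(\mathbf{R})$ and $\mathcal{M}_{\mathbf{L}}=2\mathcal{M}_{\mathbf{R}}$. The restriction of $F$ to linear profiles inherits Preferential Equality and Neutral Reversal (both keep linear profiles linear), so by Theorem~\ref{LinChar} it is margin-based; thus if $\mathcal{M}_{\mathbf{R}}=\mathcal{M}_{\mathbf{R}'}$, the two induced linear profiles have equal margins and $F(\mathbf{R})=F(\mathbf{R}')$. For part~\ref{WOSNChar}, the same reduction instead yields a strict weak order profile, and the restriction of $F$ to the domain of all such profiles inherits Homogeneity, Tiebreaking Compensation, and Neutral Reversal; by Theorem~\ref{StrictWeakChar}.\ref{StrictWeakChar1} it is margin-based, giving $F(\mathbf{R})=F(\mathbf{R}')$ in the same way. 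Preferential Equality need not be assumed in part~\ref{WOSNChar} because, exactly as in Lemma~\ref{CompLem}, Tiebreaking Compensation implies it on any domain that contains all strict weak orders, which $\mathsf{WOSN}$ does.

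I expect the work to be bookkeeping rather than a single hard idea: checking the three position-cases of margin-neutrality above, confirming that the post-reduction ballots land in the intended sub-domain, and verifying that the restricted rule inherits the hypotheses of the invoked theorem. In particular one must note that every auxiliary profile built inside the proofs of Theorems~\ref{LinChar} and \ref{StrictWeakChar} (McGarvey pairs, reversal pairs, tie-broken copies) is itself linear or a strict weak order, hence lies in $\mathsf{LOSN}$/$\mathsf{WOSN}$, so those proofs apply verbatim to the restrictions. The one genuinely delicate point is the asymmetry between the two parts: the Lemma~\ref{CompLem}-style derivation of Preferential Equality forms an auxiliary ballot with an interior tie, which remains inside $\mathsf{WOSN}$ but escapes $\mathsf{LOSN}$, which is precisely why part~\ref{LOSNChar} lists Preferential Equality explicitly while part~\ref{WOSNChar} does not.
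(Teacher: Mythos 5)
Your proposal is correct and follows essentially the same route as the paper's proof: double the profile via Homogeneity, use Comparable Compensation on each twin pair to eliminate side-noncomparabilities in a margin-preserving way, and then invoke Theorem~\ref{LinChar} (with Preferential Equality) for $\mathsf{LOSN}$ and Theorem~\ref{StrictWeakChar}.\ref{StrictWeakChar1} (with Tiebreaking Compensation) for $\mathsf{WOSN}$. Your explicit three-case verification of margin-neutrality and your observation about why the Lemma~\ref{CompLem}-style derivation of Preferential Equality stays inside $\mathsf{WOSN}$ but escapes $\mathsf{LOSN}$ merely spell out details the paper leaves implicit.
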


\begin{proof} If a profile contains a ranking with nonempty side-noncomparability class $S$, duplicate the profile, so we have two copies of the ranking, and then linearize $S$ in opposite ways above and below the other candidates. By Homogeneity and  Comparable Compensation, this does not change the output of $F$. In this way, we can change any two initial profiles into profiles with no side-noncomparability. Then for part \ref{LOSNChar} we reason exactly as in the proof of Theorem \ref{LinChar} using Preferential Equality, and for part \ref{WOSNChar} we reason exactly as in the proof of Theorem \ref{StrictWeakChar}.\ref{StrictWeakChar1} using Tiebreaking Compensation and Neutral Indifference.\end{proof}

To obtain analogues of Theorem \ref{ModuloH2H}, note first that the Nonlinear Neutral Reversal axiom transfers to the current context seamlessly, as both $\mathsf{LOSN}$ and $\mathsf{WOSN}$ are closed under reversal: if $R$ is a LOSN (resp.~a WOSN), then $R^{-1}$ is also a LOSN (resp.~a WOSN). The normative appeal of the axiom should also be the same in the current context. 

For $\mathsf{LOSN}$, we also use the following axiom that is essentially Neutral Indifference for $\mathsf{LOSN}$.

\begin{definition}
  A voting rule $F$ satisfies \emph{Neutral Blankness} if for all $\mathbf{R}, \mathbf{R}' \in \mathrm{dom}(F)$ such that $\mathbf{R}'$ is obtained from $\mathbf{R}$ by adding a voter whose side-noncomparability class is $X(\mathbf{R})$, then $F(\mathbf{R}) = F(\mathbf{R}')$.
\end{definition}
\noindent As we mentioned above, when $S(R) = X$, what $R$ represents is either the empty ballot where no candidate is ranked or the ballot that ranks only one candidate, which, under our current interpretation, is no more informative than the empty ballot, since we must not take ranked candidates to be better (or worse) than unranked candidates.

For $\mathsf{WOSN}$, a voter might rank some candidates as tied and leave all others unranked. In this case, all the indifference classes---namely, the set of ranked candidates plus each singleton set of an unranked candidate---are noncomparable to each other under our current interpretation of truncated ballots; the only difference between them is that the indifference class of ranked candidates may be larger than that of an unranked candidate. It seems wrong to favor an indifference class merely because it is larger (or smaller) than some other indifference classes.
Thus, we introduce the following axiom, noting that these ballots where all ranked candidates are tied are mathematically represented by self-reversing, and thus self-canceling, binary relations.
\begin{definition}
  A binary relation $R$ is \emph{self-reversing} if $R = R^{-1}$. A voting rule $F$ satisfies \emph{Neutral Self-Reversal} if for any $\mathbf{R}, \mathbf{R}' \in \mathrm{dom}(F)$, if $\mathbf{R}'$ is obtained from $\mathbf{R}$ by adding a voter whose ballot is self-reversing, then $F(\mathbf{R}) = F(\mathbf{R}')$.
\end{definition}
Since the blank ballot is self-reversing, Neutral Self-Reversal strengthens Neutral Blankness.

\begin{theorem} \label{thm:extended-characterization} Let $F$ be a head-to-head voting rule on a domain $D$ of profiles.
  \begin{enumerate}
    \item If $D = \mathsf{LOSN}$, then $F$ is margin-based iff $F$ satisfies Neutral Blankness and Nonlinear Neutral Reversal.
    \item If $D = \mathsf{WOSN}$, then $F$ is margin-based iff $F$ satisfies Neutral Self-Reversal and Nonlinear Neutral Reversal.
  \end{enumerate}
\end{theorem}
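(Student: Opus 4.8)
The plan is to mirror the proof of Theorem~\ref{ModuloH2H}. The left-to-right implication is routine in both parts: adding a reversal pair $(R,R^{-1})$ increments $\#P(x,y)$ and $\#P(y,x)$ by the same amount, since $x\,P(R^{-1})\,y$ iff $y\,P(R)\,x$, and hence preserves every margin; and a self-reversing ballot (in particular the blank ballot) has $P(R)=\varnothing$, so it changes no value of $\#P$. Thus any margin-based $F$ satisfies Nonlinear Neutral Reversal together with Neutral Blankness (resp.\ Neutral Self-Reversal). For the converse I would prove the exact analogue of the lemma used for Theorem~\ref{ModuloH2H}: if $\margin_{\mathbf{R}}=\margin_{\mathbf{R}'}$, then by adding (possibly nonlinear) reversal pairs and blank (resp.\ self-reversing) ballots one can pass to profiles $\mathbf{Q},\mathbf{Q}'$ with $\mathcal{H}(\mathbf{Q})=\mathcal{H}(\mathbf{Q}')$; since $F$ is head-to-head and every such move preserves the value of $F$, this gives $F(\mathbf{R})=F(\mathbf{R}')$. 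The driving observation is the conservation identity $2\,rP_{\{x,y\}}+rI_{\{x,y\}}+rN_{\{x,y\}}=rV$, holding for every pair $\{x,y\}$, where $rP,rI,rN$ are the running differences between the two sides in $\#P,\#I,\#N$ and $rV$ the difference in number of voters (the margin hypothesis keeps $rP(x,y)=rP(y,x)$, as all moves increment $\#P(x,y)$ and $\#P(y,x)$ equally). This identity shows it suffices to drive $rP$---and, for $\mathsf{WOSN}$, also $rI$---to zero on every pair: afterwards $rN$ is forced to be the single constant $rV$ on all pairs, and adding $|rV|$ blank ballots to the smaller side zeroes $rN$ and $rV$ at once, giving $\mathcal{H}(\mathbf{Q})=\mathcal{H}(\mathbf{Q}')$.

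It remains to supply, for each pair $\{a,b\}$, elementary moves that shift $rP_{\{a,b\}}$ (resp.\ $rI_{\{a,b\}}$) by $\pm1$ while leaving the $rP$ and $rI$ of all other pairs untouched; side effects on the slack channel $rN$ do not matter. To shift $rP_{\{a,b\}}$, add to one side the reversal pair of the ballot that ranks $a$ above $b$ and leaves every other candidate unranked, and add to the other side two blank ballots; the noncomparability side effects cancel on all pairs except $\{a,b\}$, so the net effect is a $\pm1$ change in $rP_{\{a,b\}}$ with $rI$ and all other $rP$ values unchanged. For $\mathsf{WOSN}$ one also needs to shift $rI_{\{a,b\}}$: add to one side a self-reversing ballot that ties exactly $\{a,b\}$ (with all other candidates noncomparable) and to the other side a blank ballot, whereupon everything cancels except a $\pm1$ change in $rI_{\{a,b\}}$. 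In $\mathsf{LOSN}$ no ballot contains a tie, so $\#I\equiv0$ identically and the $rI$-step is vacuous---this is exactly why the weaker Neutral Blankness suffices there---whereas in $\mathsf{WOSN}$ the tie-creating ballots required for the $rI$-step are precisely the self-reversing ones, which is why Neutral Self-Reversal is the correct hypothesis.

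The main obstacle, compared with the strict-weak-order lemma behind Theorem~\ref{ModuloH2H}, is the presence of two additional head-to-head channels $\#I$ and $\#N$ that did not need tracking there (where $\#N\equiv0$ and $\#I$ is determined by $\#P$ and the voter count). The real difficulty is that in $\mathsf{LOSN}$/$\mathsf{WOSN}$ one cannot manufacture an indifference or a noncomparability localized to a single pair: unranking a candidate makes it noncomparable to everybody, and tying is only available through the rigid ballot shapes the domain permits. The resolution is to refuse to control $\#N$ directly and instead let the conservation identity turn it into a slack channel that becomes uniform automatically once $\#P$ and $\#I$ are matched, to be mopped up by the globally-acting blank ballots. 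The one place requiring care is the verification that the two elementary moves genuinely isolate a single pair in the $\#P$ and $\#I$ channels, so that correcting one pair never disturbs another.
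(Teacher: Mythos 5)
Your proposal is correct and takes essentially the same route as the paper's own proof: your elementary moves---the reversal pair of the ballot ranking only $a$ above $b$ with all else unranked, the self-reversing ballot tying exactly $\{a,b\}$, and blank ballots---are precisely the paper's $R_{x>y}$/$R_{y>x}$ pairs, $R_{x=y}$ ballots, and blank padding, and your ``conservation identity'' merely makes explicit the counting the paper leaves implicit when it equalizes $\#N$ by adding blanks to the smaller profile at the end. The only difference is cosmetic bookkeeping (adding two blanks per $\#P$-move to keep voter counts matched throughout, rather than deferring all padding to the final step).
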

\begin{proof}
    Consider the case where $D = \mathsf{LOSN}$. Take any two profiles $\mathbf{R}, \mathbf{Q} \in \mathsf{LOSN}$ such that $\margin_{\mathbf{R}} = \margin_{\mathbf{Q}}$. Then for each $x,y \in X := X(\mathbf{R}) = X(\mathbf{Q})$ with $x\neq y$, we denote by $R_{x>y}$ the LOSN where the only strict preference is that $x$ is strictly preferred over $y$. Then $R^{-1}_{x>y} = R_{y>x}$. To iterate over the doubletons from $X$, let $\prec$ be the alphabetic order on $X$. Now we construct profiles $\mathbf{R}_1$ and $\mathbf{Q}_1$ as follows. Initially, let $\mathbf{R}_1 = \mathbf{R}$ and $\mathbf{Q}_1 = \mathbf{Q}$, and then for each $x ,y \in X$ with $x\prec y$:
    \begin{itemize}
      \item if $n = \#P_{\mathbf{R}}(x, y) - \#P_{\mathbf{Q}}(x, y) > 0$, then since $\margin_{\mathbf{R}} = \margin_{\mathbf{Q}}$,  $\#P_{\mathbf{R}}(y, x) - \#P_{\mathbf{Q}}(y, x)$ is also $n$, and in this case add to $\mathbf{Q}_1$ $n$ pairs of fresh voters where each pair votes $R_{x > y}$ and $R_{y > x}$;
      \item if $n = \#P_{\mathbf{Q}}(x, y) - \#P_{\mathbf{R}}(x, y) > 0$, then since $\margin_{\mathbf{R}} = \margin_{\mathbf{Q}}$,  $\#P_{\mathbf{Q}}(y, x) - \#P_{\mathbf{R}}(y, x)$ is also $n$, and in this case add to $\mathbf{R}_1$ $n$ pairs of fresh voters where each pair votes $R_{x > y}$ and $R_{y > x}$.
    \end{itemize}
    It is clear that $\mathbf{R}_1$ and $\mathbf{Q}_1$ have not only the same margins but also the same $\#P$ function. They may have different numbers of voters, but we can also add to the smaller profile enough fresh voters with ballots having no strict preferences (i.e., the identity relation) and obtain $\mathbf{R}_2$ and $\mathbf{Q}_2$. Then $\mathcal{H}(\mathbf{R}_2) = \mathcal{H}(\mathbf{Q}_2)$. Using the assumption that $F$ is head-to-head and satisfies Nonlinear Neutral Reversal and Neutral Blankness, we have $F(\mathbf{R}) = F(\mathbf{Q})$.

    For the case where $D = \mathsf{WOSN}$, again we take profiles $\mathbf{R}$ and $\mathbf{Q}$ from $\mathsf{WOSN}$ such that $\margin_{\mathbf{R}} = \margin_{\mathbf{Q}}$, and we want to show that $F(\mathbf{R}) = F(\mathbf{Q})$. As in the above case, we construct $\mathbf{R}_1$ and $\mathbf{Q}_1$ in the same way. Now we also need to modify their $\#I$ functions to be the same. This can be done by the self-reversing WOSN ballot $R_{x = y}$ where there is no strict preference and the only non-trivial indifference is between $x$ and $y$. In other words, $R_{x = y}$ is the union of the identity relation on $X(\mathbf{R})$ and $\{(x, y), (y, x)\}$. We obtain $\mathbf{R}_2$ and $\mathbf{Q}_2$ by adding, for each $x,y\in X$ with $x\prec y$, $n = \#I_{\mathbf{R}_1}(x, y) - \#I_{\mathbf{Q}_1}(x, y)$ many fresh $R_{x = y}$ ballots to $\mathbf{Q}_1$ if $n$ is positive, or $m =\#I_{\mathbf{Q}_1}(x, y) - \#I_{\mathbf{R}_1}(x, y)$ many fresh $R_{x = y}$ ballots to $\mathbf{R}_1$ if $m$ is positive. Then both the $\#P$ function and the $\#I$ function of $\mathbf{R}_2$ and $\mathbf{Q}_2$ are the same. To make the $\#N$ functions the same as well, we only need to add blank ballots to the smaller profile and obtain $\mathbf{R}_3$ and $\mathbf{Q}_3$ of equal size. Since $F$ is head-to-head and $\mathcal{H}(\mathbf{R}_3) = \mathcal{H}(\mathbf{Q}_3)$, we have $F(\mathbf{R}_3) = F(\mathbf{Q}_3)$. Then since $F$ satisfies Neutral Self-Reversal and Nonlinear Neutral Reversal, $F(\mathbf{R}) = F(\mathbf{Q})$.
\end{proof}
\begin{corollary}
 Let $F$ be a head-to-head and homogeneous voting rule whose domain is either $\mathsf{LOSN}$ or $\mathsf{WOSN}$. Then $F$ is margin-based iff $F$ satisfies Nonlinear Neutral Reversal. 
\end{corollary}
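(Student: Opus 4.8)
The plan is to deduce the corollary from Theorem~\ref{thm:extended-characterization} by showing that, for a head-to-head rule, Homogeneity together with Nonlinear Neutral Reversal already forces the remaining axiom used there—Neutral Blankness in the $\mathsf{LOSN}$ case and Neutral Self-Reversal in the $\mathsf{WOSN}$ case. The left-to-right direction is immediate: adding a (possibly nonlinear) reversal pair $(P,P^{-1})$ leaves every margin $\margin_{\mathbf{R}}(x,y)$ unchanged, since one added voter contributes $+1$ and the other $-1$ to each relevant strict-preference count, so any margin-based rule satisfies Nonlinear Neutral Reversal. The content is the converse.

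The key observation is that a self-reversing ballot $R$ (one with $R=R^{-1}$) has the property that adding two voters who both submit $R$ is the same as adding the reversal pair $(R,R^{-1})$; hence by Nonlinear Neutral Reversal, adding two copies of $R$ to any profile in the domain does not change the output of $F$. The blank ballot of $\mathsf{LOSN}$ (the identity relation, whose side-noncomparability class is all of $X$) is self-reversing, and every self-reversing $\mathsf{WOSN}$ ballot is by definition self-reversing, so in both cases the relevant ``neutral'' ballot can be added in pairs for free.

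To get rid of the parity obstruction—adding a single such ballot changes $|V|$ by one, whereas Nonlinear Neutral Reversal only lets us change it by two—I would use Homogeneity. Fix any self-reversing ballot $R$ in the domain and any profile $\mathbf{R}$, and let $\mathbf{R}^{+}$ be the profile obtained by adding to $\mathbf{R}$ one voter whose ballot is $R$; I want to show $F(\mathbf{R}^{+})=F(\mathbf{R})$. By Homogeneity, $F(\mathbf{R}^{+})=F(2\mathbf{R}^{+})$. Now $2\mathbf{R}^{+}$ consists of two disjoint copies of $\mathbf{R}$ together with two distinct voters who both submit $R$; since $F$ is head-to-head and hence insensitive to voter identities, $F(2\mathbf{R}^{+})$ equals $F$ of the profile obtained from $2\mathbf{R}$ by adding these two $R$-voters. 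As $R$ is self-reversing, those two voters form a reversal pair $(R,R^{-1})$, so Nonlinear Neutral Reversal yields $F(2\mathbf{R}^{+})=F(2\mathbf{R})$. A final application of Homogeneity gives $F(2\mathbf{R})=F(\mathbf{R})$, and chaining the equalities gives $F(\mathbf{R}^{+})=F(\mathbf{R})$. Taking $R$ to be the blank ballot establishes Neutral Blankness on $\mathsf{LOSN}$, and taking $R$ to be an arbitrary self-reversing ballot establishes Neutral Self-Reversal on $\mathsf{WOSN}$; Theorem~\ref{thm:extended-characterization} then finishes both cases.

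I expect the only delicate point to be the bookkeeping in the doubling step: one must check that the two $R$-voters inside $2\mathbf{R}^{+}$ are genuinely distinct (so that they constitute the two added voters of a reversal pair) and that, after deleting them, what remains has the same head-to-head data as $2\mathbf{R}$. Both facts are routine once one invokes the head-to-head hypothesis to ignore voter labels, and all the profiles involved stay inside $\mathsf{LOSN}$ (resp.\ $\mathsf{WOSN}$) because these domains are closed under disjoint unions and under adding self-reversing ballots. The essential conceptual step, and the reason Homogeneity cannot be dropped, is exactly this parity bridge, mirroring the role of Homogeneity in Fact~\ref{EvenOdd} and Theorem~\ref{StrictWeakChar}.
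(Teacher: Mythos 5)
Your proof is correct and takes essentially the same route as the paper's own one-line argument: the paper likewise derives Neutral Self-Reversal (which subsumes Neutral Blankness, the blank ballot being self-reversing) by doubling the profile via Homogeneity, treating the two copies of the self-reversing ballot as a reversal pair for Nonlinear Neutral Reversal, and then ``dividing by 2.'' Your extra care about voter labels in $2\mathbf{R}^{+}$, discharged via the head-to-head hypothesis, is a bookkeeping detail the paper leaves implicit.
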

\begin{proof}
    Clearly, a homogeneous $F$ that satisfies Nonlinear Neutral Reversal also satisfies Neutral Self-Reversal, since to add a self-reversal ballot $R$ without changing the output, we can first double the profile, add $R$ twice, which now counts as a reversal pair, and then ``divide the profile by~2''.
\end{proof}

\end{document}